\documentclass[12pt]{article}

\usepackage{amsfonts}
\usepackage{amssymb}
\usepackage{amsmath}
\usepackage{amsthm,color}

\usepackage{verbatim}
\usepackage{hyperref}

\newtheorem{theorem}{Theorem}
\newtheorem{corollary}[theorem]{Corollary}
\newtheorem{rem1}[theorem]{Remark}
\newtheorem{lemma}[theorem]{Lemma}
\newtheorem{definition}[theorem]{Definition}
\newtheorem{proposition}[theorem]{Proposition}

\newtheorem{ex1}[theorem]{Example}

\newenvironment{remark}{\begin{rem1}\rm}{\end{rem1}}
\newenvironment{example}{\begin{ex1}\rm}{\end{ex1}}

\numberwithin{equation}{section}
\numberwithin{theorem}{section}

\newcommand{\N}{\mathbb{N}}
\renewcommand{\P}{\mathbb{P}}
\newcommand{\Q}{\mathbb{Q}}
\newcommand{\R}{\mathbb{R}}

\newcommand{\W}{\mathcal{W}}
\newcommand{\olW}{\mathcal{W}}
\newcommand{\U}{\mathbb{U}}
\newcommand{\V}{\mathbb{V}}

\newcommand{\lrparen}[1]{\left(#1\right)}

\newcommand{\lrsquare}[1]{\left[#1\right]}
\newcommand{\lsquare}[1]{\left[#1\right.}
\newcommand{\rsquare}[1]{\left.#1\right]}
\newcommand{\lrcurly}[1]{\left\{#1\right\}}
\newcommand{\lcurly}[1]{\left\{#1\right.}
\newcommand{\rcurly}[1]{\left.#1\right\}}
\newcommand{\inlrparen}[1]{\left(#1\right)}

\newcommand{\inlrcurly}[1]{\left\{#1\right\}}

\newcommand{\Ft}[1]{\mathcal{F}_{#1}}

\newcommand{\Ldpshort}[2]{L^{#1}_{#2}}

\newcommand{\LdpF}[1]{\Ldpshort{p}{#1}}
\newcommand{\LdqF}[1]{\Ldpshort{q}{#1}}
\newcommand{\LdiF}[1]{\Ldpshort{\infty}{#1}}

\newcommand{\LdzF}[1]{\Ldpshort{0}{#1}}
\newcommand{\LdpK}[3]{\Ldpshort{#1}{#2}(#3)}

\newcommand{\Lp}[2]{L^{#1}({#2})}
\newcommand{\Lpshort}[2]{L^{#1}_{#2}(\R)}
\newcommand{\LpF}[1]{\Lpshort{p}{#1}}
\newcommand{\LqF}[1]{\Lpshort{q}{#1}}
\newcommand{\LiF}[1]{\Lpshort{\infty}{#1}}
\newcommand{\LoF}[1]{\Lpshort{1}{#1}}

\newcommand{\LpK}[3]{L^{#1}_{#2}(#3)}

\newcommand{\E}[1]{\mathbb{E}\lrsquare{#1}}
\newcommand{\EP}[2]{\mathbb{E}^{#1}\lrsquare{#2}}
\newcommand{\EQ}[1]{\EP{\Q}{#1}}

\newcommand{\Et}[2]{\E{\left.#1 \right| \mathcal{F}_{#2}}}
\newcommand{\EPt}[3]{\EP{#1}{\left.#2 \right| \mathcal{F}_{#3}}}
\newcommand{\EQt}[2]{\EPt{\Q}{#1}{#2}}

\newcommand{\ERt}[2]{\EPt{\R}{#1}{#2}}

\newcommand{\dQdP}{\frac{d\mathbb{Q}}{d\mathbb{P}}}
\newcommand{\dQndP}[1]{\frac{d\mathbb{Q}_{#1}}{d\mathbb{P}}}
\newcommand{\dQidP}{\dQndP{i}}

\newcommand{\genseq}[4]{\lrparen{#1_#4}_{#4=#2}^{#3}}
\newcommand{\seq}[1]{\genseq{#1}{0}{T}{t}}

\newcommand{\trans}[1]{#1^{\mathsf{T}}}
\newcommand{\transp}[1]{\trans{\lrparen{#1}}}
\newcommand{\prp}[1]{#1^{\perp}}

\newcommand{\plus}[1]{#1^+}
\newcommand{\plusp}[1]{\plus{(#1)}}

\newcommand{\diag}[1]{#1}
\newcommand{\cl}{\operatorname{cl}}
\newcommand{\co}{\operatorname{co}}
\newcommand{\recc}[1]{\operatorname{recc}\lrparen{#1}}

\newcommand{\interior}{\operatorname{int}}

\newcommand{\as}{\text{ a.s.}}

\newcommand{\Pas}{\;\P\text{-a.s.}}

\DeclareMathOperator*{\esssup}{ess\,sup}
\DeclareMathOperator*{\essinf}{ess\,inf}

%

\usepackage[top=1in, bottom=1in, left=1in, right=1in]{geometry}

\begin{document}

\title{A Supermartingale Relation for Multivariate Risk Measures}
\author{Zachary Feinstein \thanks{Washington University in St. Louis, Department of Electrical and Systems Engineering, St. Louis, MO 63108, USA, zfeinstein@ese.wustl.edu.} \and Birgit Rudloff \thanks{Vienna University of Economics and Business, Institute for Statistics and Mathematics, Vienna A-1020, AUT, brudloff@wu.ac.at.}}
\date{\today~(Original: October 19, 2015)}
\maketitle
\abstract{The equivalence between multiportfolio time consistency of a dynamic multivariate risk measure and
a supermartingale property is proven. Furthermore, the dual variables under which this set-valued supermartingale is a martingale are characterized as the worst-case dual variables in the dual representation of the risk measure. Examples of multivariate risk measures satisfying the supermartingale property are given. Crucial for obtaining the results are dual representations of scalarizations of set-valued dynamic risk measures, which are of independent interest in the fast growing literature on multivariate risks.
\\[.2cm]
{\bf Keywords:} set-valued supermartingale, time consistency, dynamic risk measures, transaction costs, set-valued risk measures, multivariate risks 
\\[.2cm]
{\bf Mathematics Subject Classification (2010):}
91B30, 	
26E25,  	
60G48  	
\\[.2cm]
{\bf JEL Classification:}
C61, 
G15, 
G18, 
G28, 
G32 
}

\section{Introduction}
\label{sec_intro}

Risk measures, introduced axiomatically in the coherent case in~\cite{AD97,AD99} and generalized to the convex case in~\cite{FS02,FG02}, quantify the minimal capital requirements to cover the risk of a financial portfolio.  
For their extension to the dynamic, multi-period setting, where the evolution of information known at time $t$ is given by a  filtration $\seq{\mathcal{F}}$, it is natural to ask how the risks relate through time. This led to the definition of time consistency.  
A risk measure is time consistent if an (almost sure) ordering of risks at a specific time implies the same ordering at all earlier points in time.  This property has been studied extensively for scalar valued risks in, e.g.,~\cite{AD07,R04,DS05,RS05,BN04,FP06,CS09,CK10,AP10,FS11} for the discrete time case and \cite{FG04,D06,DPRG10} for the continuous time case.  For the purposes of this paper, we will focus on the equivalence of  time consistency and a supermartingale property, which has been studied  for coherent risk measures in \cite{AD07,B05-thesis} and for conditionally convex risk measures in \cite{FP06,BN09}. The corresponding result reads as: A sequence $\seq{\rho}$ of sensitive conditionally convex risk measures with minimal penalty function $\alpha_t^{min}$ is time consistent if and only if the process \[V_t^{\Q}(X):=\rho_t(X) +\alpha_t^{min}(\Q)\] is a $\Q-$supermartingale, i.e., for any $\Q$ with $\alpha_0^{min}(\Q)<\infty$ and every $X\in \LiF{}$,
\[
V_t^{\Q}(X)\geq \EQt{V_{s}^{\Q}(X)}{t}  \quad \Q-a.s. 
\]
for for all times $0 \leq t < s \leq T$, see \cite[Theorem~4.5]{FP06}, \cite[Theorem~2.2.2]{P07-thesis} and \cite[Theorem~11.17]{FS11} for details on the terms and notions.
This is an important characterization as it is related to the uniform Doob decomposition under constraints, see \cite[Theorem~2.4.6]{P07-thesis}, \cite{FK97}, \cite[Chapter 9]{FS11}; provides a characterization of the time consistent version of a risk measure as the cheapest ``hedge'' of any $-X\in \LiF{}$ in the sense that $X$ is hedged at the terminal time and the incremental costs of that hedging strategy at any time $t+1$ are acceptable w.r.t.\ the original risk measure at $t$, see \cite[Proposition~2.5.2]{P07-thesis}; and provides, e.g., a representation of the superhedging costs under convex trading constraints, see \cite[Section~4.2]{P07-thesis}.

In this paper we consider set-valued or multivariate risk measures. Such risk measures and their scalarizations have been of recent interest in the literature.
They appear naturally when the random variable whose risk is to be measured is multivariate and not univariate. This is the case, e.g., when multi-asset markets (see e.g. \cite{ADM09,FMM13}) or markets with  frictions (e.g. transaction costs \cite{JMT04,HH10,HHR10,KS09, LR11,CM13} or illiquidity \cite{WA13}) are considered, or when the components of the random vector represent different risk types or the risks of different units or agents in a group. The latter case gained in particular a lot of attention recently as it allows to study the measurement and regulation of the systemic risk of banking networks, see e.g. \cite{FRW15,ACDP15,BFFM15}. It is also relevant for solvency tests of groups of insurance companies, see \cite{HMS16}. In this multivariate setting one is usually interested in the allocations of the capital charges to the different units, agents, risk types, assets, or currencies. A set-valued risk measure provides these quantities as it assigns to a random vector the set of all capital allocations that compensate its risk.
Set-valued risk measures have been studied in a single period framework in, e.g.,~\cite{JMT04,HR08,HH10,HHR10,CM13}.  Dynamic, multi-period set-valued risk measures have been studied in, e.g.,~\cite{FR12,FR12b,FR13-survey,FR14-alg,TL12}. We will mostly follow the setting and notation of~\cite{FR12,FR12b} in this paper.  
In the dynamic multivariate case, a version of time consistency, called multiportfolio time consistency, is used.  This property has been shown to be equivalent to a set-valued version of many of the same properties that time consistency is equivalent to in the scalar case.

In this paper we will show that the equivalence of time consistency and a supermartingale property that is well known for scalar dynamic risk measures can be proven in the multivariate case as well. That is, we will show that multiportfolio time consistency of a normalized conditionally convex dynamic risk measures $\seq{R}$ satisfying certain continuity properties is equivalent to
the set-valued stochastic process
\[ \V_t^{(\Q,w)}(X) := \cl\lrsquare{R_t(X) + \alpha_t(\Q,w)} \]
being a supermartingale, that is, satisfying for every $(\Q,w) \in \W_t$ and all $X \in \LdpK{p}{}{\R^d}$
\[\V_t^{(\Q,w)}(X) \subseteq \cl\EQt{\V_s^{(\Q,w_t^s(\Q,w))}(X)}{t}\] 
for all times $0 \leq t < s \leq T$. Here, $\alpha_t$ denotes the set-valued penalty function of $R_t$. All terms and notions will be made precise in the main sections of the paper.
Note, that as usual in the multivariate setting, one can work on general $\LdpK{p}{}{\R^d}$ spaces in contrast to the $\LiF{}$ framework of the scalar setting.  This is due to the fact that one works with upper sets and not just their boundaries, so the scalar problems stemming from the usage of the essential supremum disappear in the present framework for multivariate risks. The technique to prove the supermartingale results differs drastically from the scalar case.  The proofs rely crucially on dual representations of (conditional) scalarizations of set-valued risk measures which are deduced within this paper. These results are of independent interest in the very 
active field of research on scalar (but static) multivariate risk measures, see e.g. \cite{ADM09,FMM13,FS06,K09,Sc04} and also \cite{ACDP15,BFFM15,FRW15} for applications to systemic risk. Results on the dynamic case are thus also influential in these areas.  All of these results are proven in the Appendix.

Set-valued sub- and supermartingales are defined e.g.\ in \cite[Chapter 3]{M05}, \cite[Chapter 4]{LOK02}, and \cite[Chapter 8]{HP97} for random closed sets and a Doob decomposition is given in \cite[Chapter 4.7]{LOK02}.  However, due to the ordering relation used here, where smaller risk corresponds to a larger set which thus contains smaller capital requirements, our notion of a supermartingale corresponds to a submartingale in those works.

Characterizations of set-valued supermartingales are highly desirable as set-valued stochastic processes play an important role in many fields of research, e.g.\ in statistics \cite{WZ91,L10}, random set theory \cite{M05}, and for stochastic differential inclusions \cite{K13}, with applications to economics and control theory \cite{KMM03a,KMM03b}.
Furthermore, the obtained result can be seen as a stepping stone towards future research on a set-valued uniform Doob decomposition as well as on hedging of multivariable claims w.r.t.\ multivariate risk measures.

The paper is structured as follows. In Section~\ref{sec_prelim} we will review properties of dynamic multivariate risk measures from \cite{FR12,FR12b} and present a new dual representation for such risk measures.  In Section~\ref{sec_supermtg} we provide results on the equivalence of multiportfolio time consistency and a set-valued supermartingale property for convex and coherent multivariate risk measures.  Finally, in Section~\ref{sec_supermtg-cc} 
we present the main results by extending the results of Section~\ref{sec_supermtg}, focusing on conditionally convex and conditionally coherent multivariate risk measures.
We will provide examples of risk measures satisfying these supermartingale properties. The proofs and intermediate results are collected in the Appendix.

\section{Set-valued dynamic risk measures}
\label{sec_prelim}
In this section we will present notation, definitions, and simple results about duality and multiportfolio time consistency for set-valued dynamic risk measures which can be derived from~\cite{FR12,FR12b}.

We will work with a general filtered probability space $(\Omega,\Ft{},\seq{\mathcal{F}},\P)$ satisfying the usual conditions with $\Ft{T} = \Ft{}$.  This setting allows for either a discrete time $\{0,1,...,T\}$ or continuous time $[0,T]$ framework. Consider the linear spaces $\LdpF{t} := \Lp{p}{\Omega,\Ft{t},\P;\R^d}$ for any $p \in [1,\infty]$ and denote $\LdpF{} := \LdpF{T}$, where $\LdpF{t}$ is the linear space of the equivalence classes of $\Ft{t}$-measurable random vectors $X: \Omega \to \R^d$ with $\|X\|_p^p = \int_\Omega |X(\omega)| d\P < \infty$ for $p < \infty$ and $\|X\|_{\infty} = \esssup_{\omega \in \Omega} |X(\omega)| < \infty$ for $p = \infty$, where $|\cdot|$ denotes an arbitrary norm in $\R^d$.  We will consider the norm topology on $\LdpF{}$ for $p \in [1,\infty)$ and the weak* topology on $\LdiF{}$ when $p = \infty$.  The closure operator $\cl$ is taken as the topological closure throughout this work.

Let $\LdpK{p}{t}{D_t} := \{Z \in \LdpF{t} \; | \; Z \in D_t \Pas\}$ denote the set of random vectors in $\LdpF{t}$ which take values in $D_t$ $\P$-a.s.  Additionally, throughout this paper we sometimes need to distinguish the spaces of random vectors from those of random variables.  To do so, let us denote the linear space of equivalence classes of random variables with finite $p$-norm $X: \Omega \to \R$ by $\LpF{t} := \Lp{p}{\Omega,\Ft{t},\P;\R}$. We will use the following notation for the different types of multiplication: multiplication between a random variable $\lambda \in \LiF{}$ and a set of random vectors $D \subseteq \LdpF{}$ is defined elementwise $\lambda D = \{\lambda Y \; | \; Y \in D\} \subseteq \LdpF{}$ with $(\lambda Y)(\omega) = \lambda(\omega)Y(\omega)$; the componentwise multiplication between random vectors is denoted by $\diag{X}Y := \transp{X_1 Y_1,...,X_d Y_d} \in \LdzF{}$ for $X,Y \in \LdzF{}$.

Throughout we will use the notation $\LdpF{t,+} := \inlrcurly{X \in \LdpF{t} \; | \; X \in \R^d_+ \Pas}$ to denote the set of $\Ft{t}$-measurable random vectors with $\P$-a.s.\ non-negative components.  Similarly, we will define $\LdpF{t,++} := \inlrcurly{X \in \LdpF{t} \; | \; X \in \R^d_{++} \Pas}$ as those $\Ft{t}$-measurable random vectors which are $\P$-a.s.\ strictly positive.  As with the prior notation, we will define $\LdpF{+} := \LdpF{T,+}$ and $\LdpF{++} := \LdpF{T,++}$.  (In)equalities between random vectors (resp. variables) are understood componentwise in the $\P$-a.s.\ sense.  The set $\LdpF{+}$ defines an ordering on the space of random vectors: $Y \geq X$ for $X,Y \in \LdpF{}$ when $Y - X \in \LdpF{+}$.

In financial contexts, a random vector $X \in \LdpF{t}$ represents a portfolio in the sense that component $X_i$ gives the number of units of asset $i \in \{1,...,d\}$ held at time $t$.  Thus, we consider portfolios in ``physical units'' of assets instead of the value of the portfolio in some num\'{e}raire.  This framework was used and discussed in, e.g., \cite{K99,S04,KS09}.

For risk measurement purposes, fix $m \in \{1,...,d\}$ of the assets to be eligible for covering the risk of a portfolio.  Without loss of generality we will assume the eligible assets are the first $m$ assets; these first assets may correspond to, e.g., reserve currencies such as US Dollars, Euros, and Yen.  We will denote by $M := \R^m \times \{0\}^{d-m}$ the subspace of eligible assets.  The set of eligible portfolios is given by $M_t := \LdpK{p}{t}{M}$; this is a closed (weak* closed if $p = \infty$) linear subspace of $\LdpF{t}$ (cf. Section 5.4 and Proposition 5.5.1 of \cite{KS09}).  Denote $M_{t,+} := M_t \cap \LdpF{t,+}$ to be the non-negative eligible portfolios and $M_{t,-} := -M_{t,+}$ to be the non-positive eligible portfolios. For example, in the application of systemic risk (cf. \cite{FRW15}) the eligible portfolios are often chosen so that $M = \R^d$ for $d$ banks and thus $M_t = \LdpF{t}$ for all times $t$.

We are now able to introduce the conditional risk measures as in \cite{FR12,FR12b}.  A conditional risk measure is a mapping of portfolios (i.e.\ $d$-dimensional random vectors) into the upper sets
\[\mathcal{P}\lrparen{M_t;M_{t,+}} := \lrcurly{D \subseteq M_t \; | \; D = D + M_{t,+}},\]
which is a subset of the power set $2^{M_t}$.  The output for portfolio $X$ is the set $R_t(X)$ at time $t$, which is the collection of all eligible portfolios that compensate for the risk.
\begin{definition}\cite[Definition 2.1]{FR12b}
\label{defn_conditional}
A function $R_t: \LdpF{} \to \mathcal{P}\inlrparen{M_t;M_{t,+}}$ is a \textbf{\emph{normalized (conditional) risk measure}} at time $t$ if it is
\begin{enumerate}
\item $M_t$-translative: for every $m_t \in M_t: R_t\inlrparen{X + m_t} = R_t(X) - m_t$;
\item $\LdpF{+}$-monotone: $Y \geq X$ implies $R_t(Y) \supseteq R_t(X)$;
\item finite at zero: $R_t(0) \not\in \{\emptyset,M_t\}$;
\item normalized: for every $X \in \LdpF{t}: R_t(X) = R_t(X) + R_t(0)$.
\end{enumerate}

Additionally, a conditional risk measure at time $t$ is \textbf{\emph{(conditionally) convex}} if for all $X,Y \in \LdpF{}$ and all $ 0 \leq \lambda \leq 1$ ($\lambda \in \LiF{t}$ such that $0 \leq \lambda \leq 1$)
\[R_t(\lambda X + (1-\lambda)Y) \supseteq \lambda R_t(X) + (1-\lambda)R_t(Y),\]
it is \textbf{\emph{(conditionally) positive homogeneous}} if for all $X \in \LdpF{}$,
for all $\lambda >0$ ($\lambda \in L^\infty_t(\R_{++})$)
\[R_t(\lambda X) = \lambda R_t(X),\]
and is \textbf{\emph{(conditionally) coherent}} if it is (conditionally) convex and (conditionally) positive homogeneous.

A conditional risk measure at time $t$ is \textbf{\emph{closed}} if the graph of the risk measure, \[\operatorname{graph}R_t = \lrcurly{(X,u) \in \LdpF{} \times M_t\;| \; u \in R_t(X)},\] is closed in the product topology.

A conditional risk measure at time $t$ is \textbf{\emph{convex upper continuous (c.u.c.)}} if for any closed set $D \in \mathcal{G}(M_t;M_{t,-}) := \inlrcurly{D \subseteq M_t\;|\; D = \cl\co(D + M_{t,-})}$
\[R_t^{-1}(D) := \lrcurly{X \in \LdpF{}\;|\; R_t(X) \cap D \neq \emptyset}\] 
is closed. 
\end{definition}

The properties given in Definition~\ref{defn_conditional} and their interpretations are discussed in detail in~\cite{HHR10,FR12,FR12b}. Briefly, the four axiomatic properties for a normalized risk measure guarantee the interpretation of risk as a `capital requirement', (almost sure) larger returns correspond with less risk, and the zero portfolio has `zero' risk. Convexity and coherence provide the interpretation on how diversification impacts the risk of a portfolio.
The image space of a closed convex conditional risk measure is 
\[\mathcal{G}(M_t;M_{t,+}) = \lrcurly{D \subseteq M_t\;|\; D = \cl \co \lrparen{D + M_{t,+}}}.\]
Note that any c.u.c.\ risk measure is closed and any closed risk measure has closed images.  

A \textbf{dynamic risk measure} $\seq{R}$ is a sequence of conditional risk measures and is said to have one of the properties given in Definition~\ref{defn_conditional} if for every time $t$ the conditional risk measure $R_t$ has that property.

For any risk measure $R_t$ there exists an \textbf{acceptance set} and vice versa, see Remark~2 and Proposition~2.11 in~\cite{FR12}.  For a conditional risk measure $R_t$ the associated acceptance set is defined by the portfolios that require no additional capital to cover the risk, i.e.
\[A_t := \lrcurly{X \in \LdpF{} \; | \; 0 \in R_t(X)}.\]
Given an acceptance set $A_t$, the associated conditional risk measure is defined by the eligible portfolios that, when added to the initial portfolio, make that acceptable, i.e.
\[R_t(X) := \lrcurly{u \in M_t \; | \; X + u \in A_t}.\]

\begin{example}
\label{ex:systemic-defn}
\emph{Aggregation-based risk measure}: Here we will consider an example of an aggregation-based risk measure, which in the static framework is a special case of the systemic risk measures introduced in \cite{FRW15}. We will return to this example throughout this paper so as to provide a consistent illustration of the results.  For simplicity let $p = +\infty$. Consider the full space of eligible assets for all times $t$, i.e., $M_t = \LdiF{t}$. Let $\Lambda: \R^d \to \R$ be defined by $\Lambda(x) = \sum_{i = 1}^d \lrparen{-a x_i^- + b x_i^+}$ for any $x \in \R^d$ and with $a \geq b > 0$ (where $x_i^+ := \max\{0,x_i\}$ and $x_i^- := (-x_i)^+$). This aggregation function was considered in, e.g., \cite{BC14}. Additionally, consider the (scalar) worst case acceptance set $\mathcal{A}_t := \LpK{\infty}{}{\R_+}$ for all times $t$.  The aggregation based risk measure and acceptance set are then defined for any $X \in \LdiF{}$ as
\begin{align*}
R_t^{\Lambda}(X) &:= \lrcurly{u \in M_t \; | \; \Lambda(X + u) \in \mathcal{A}_t} = \bigcap_{k = 0}^{2^d-1} \lrcurly{u \in M_t \; | \; \trans{v_k}u \geq \rho_t^{WC}(\trans{v_k}X) \as}\\
A_t^{\Lambda} &:= \Lambda^{-1}[\mathcal{A}_t] = \LdpK{\infty}{}{\Lambda^{-1}[\R_+]}.
\end{align*}
In the above, $\rho_t^{WC}: \LiF{} \to \LiF{t}$ is the scalar worst case risk measure, i.e., the risk measure with acceptance set $\mathcal{A}_t$ (see also \cite[Example~4.8]{FS11} in a static setting and \cite[Example~4]{R04} in a dynamic setting).  Additionally, the vector $v_k \in \{a,b\}^d$ for each $k = 0,1,...,2^d-1$ is defined component-wise as $v_{k,i} = a 1_{\{\operatorname{mod}(\lfloor k/2^{i-1} \rfloor,2) = 0\}} + b 1_{\{\operatorname{mod}(\lfloor k/2^{i-1} \rfloor,2) = 1\}}$ for $i = 1,2,...,d$.
Further, by definition, this risk measure is closed and conditionally coherent, and thus normalized as well.
\end{example}

We will define the stepped risk measure and acceptance set by restricting the domain of portfolios to the future eligible assets.  That is, for times $0 \leq t < s \leq T$, the stepped risk measure $R_{t,s}: M_s \to \mathcal{P}(M_t;M_{t,+})$ is defined by $R_{t,s}(X) := R_t(X)$ for any $X \in M_s$ and the stepped acceptance set is defined by $A_{t,s} := \lrcurly{X \in M_s \; | \; 0 \in R_t(X)} = A_t \cap M_s$.  We refer to \cite[Appendix C]{FR12b} for a detailed discussion of the stepped risk measures.

\subsection{Dual representation}
\label{subsection_dual}
In this section we will present the robust representation for conditional risk measures.  In \cite{FR12,FR12b}, a dual representation utilizing the negative convex conjugate, as defined in~\cite{H09}, was given. For this paper, the main results simplify when using the dual representation w.r.t.\ the positive convex conjugate introduced in~\cite{setOPsurvey}. This dual representation will be deduced below.
To provide these results, we will first define the Minkowski subtraction for sets $A,B \subseteq M_t$ by
\[A -^. B = \lrcurly{m \in M_t \; | \; B + \{m\} \subseteq A}.\]
The remainder of the setting is identical to that of \cite{FR12,FR12b}, which we will quickly summarize.
Denote the space of $d$-dimensional probability measures that are absolutely continuous with respect to the physical measure $\P$ by $\mathcal{M}$.  For notational purposes let $\mathcal{M}^e \subseteq \mathcal{M}$ be the set of vector probability measures equivalent to $\P$.  We will consider a $\P$-a.s.\ version of the $\Q$-conditional expectation (for $\Q := \transp{\Q_1,...,\Q_d} \in \mathcal{M}$).  Let
\[\EQt{X}{t} := \Et{\diag{\xi_{t,T}(\Q)} X}{t}\] for any $X \in \LdpF{}$, where $\xi_{t,s}(\Q) =
\transp{\bar{\xi}_{t,s}(\Q_1),...,\bar{\xi}_{t,s}(\Q_d)}$ for any $0 \leq t \leq s \leq T$ with 
\[\bar{\xi}_{t,s}(\Q_i) := \begin{cases}\frac{\Et{\dQidP}{s}}{\Et{\dQidP}{t}} & \text{on } \lrcurly{\omega \in \Omega \; | \; \Et{\dQidP}{t}(\omega) > 0}\\ 1 & \text{else} \end{cases},\] 
see e.g.~\cite{CK10,FR12}.  Note that for any probability measure $\Q_i \ll \P$ it follows that $\dQidP = \bar{\xi}_{0,T}(\Q_i)$.
We will say $\Q = \P|_{\Ft{t}}$, i.e.\ $\Q$ is equal to $\P$ on $\Ft{t}$, if $\xi_{0,t}(\Q) = \transp{1,...,1} \in \R^d$.

We will denote the half-space $G_t(w)$ and the conditional ``half-space'' $\Gamma_t(w)$ in $\LdpF{t}$ with normal direction $w \in \LdqF{t}\backslash \{0\}$ by
\[G_t(w) := \lrcurly{u \in \LdpF{t}\;|\; 0 \leq \E{\trans{w}u}}, \quad\quad \Gamma_t(w) := \lrcurly{u \in \LdpF{t}\;|\; 0 \leq \trans{w}u}.\]
For ease of notation, for any $w \in \LdqF{t}\backslash \{0\}$ we write
\[G_t^M(w) := G_t(w) \cap M_t, \quad\quad \Gamma_t^M(w) := \Gamma_t(w) \cap M_t.\]

From \cite{FR12b}, we can define the set of dual variables at time $t$ to be
\begin{align*}
\olW_t &= \lrcurly{(\Q,w) \in \mathcal{M} \times \lrparen{\plus{M_{t,+}} \backslash \prp{M_t}} \;|\; w_t^T(\Q,w) \in \LdqF{+}, \;\Q= \P|_{\Ft{t}}}
\end{align*}
with \[w_t^s(\Q,w) := \diag{w} \xi_{t,s}(\Q)\] for any $0 \leq t \leq s \leq T$.  We define the positive dual cone of a cone $C \subseteq \LdpF{t}$ (in particular for $C = M_{t,+}$) by
\[C^+=\lrcurly{v \in \LdqF{t}\;|\; \forall u \in C: \E{\trans{v}u} \geq 0}\] and the orthogonal space of $M_t$ by
\[\prp{M_t} = \lrcurly{v \in \LdqF{t}\;|\; \forall u \in M_t: \E{\trans{v}u} = 0}.\]

For ease of readability, we denote in this paper the positive conjugate by $\beta$, respectively $\alpha$ for the conditionally convex case, and the negative conjugate (used in the proofs) by $-\bar\beta$ (resp. $-\bar\alpha$).  This is in contrast to the notation used in \cite{FR12,FR12b}, where $-\beta$, respectively $-\alpha$, denoted the negative conjugate. The positive and the negative conjugate functions are related to each other by $\beta_t(\Q,w) := G_t^M(w) -^. (-\bar\beta_t(\Q,w))$, respectively, $\alpha_t(\Q,w) := \Gamma_t^M(w) -^. (-\bar\alpha_t(\Q,w))$, for any $(\Q,w) \in \W_t$.

\begin{corollary}
\label{cor_dual}
A function $R_t: \LdpF{} \to \mathcal{G}(M_t;M_{t,+})$ is a \textbf{\emph{closed convex  risk measure}} if and only if
\begin{equation}
\label{Eq:dual}
R_t(X) = \bigcap_{(\Q,w) \in \olW_t} \lrsquare{\lrparen{\EQt{-X}{t} + G_t\lrparen{w}} \cap M_t -^. \beta_t(\Q,w)},
\end{equation}
where $\beta_t$ is the minimal penalty function given by
\begin{equation}
\label{min_penalty}
\beta_t(\Q,w) = \bigcap_{Y \in A_t} \lrparen{\EQt{-Y}{t} + G_t(w)} \cap M_t.
\end{equation}
$R_t$ is additionally coherent if and only if
\begin{equation*}
R_t(X) = \bigcap_{(\Q,w) \in \olW_{t}^{\max}} \lrparen{\EQt{-X}{t} + G_t\lrparen{w}} \cap M_t,
\end{equation*}
for
\begin{equation*}
\olW_{t}^{\max} = \lrcurly{(\Q,w) \in \olW_t\;|\; w_t^T(\Q,w) \in \plus{A_t}}.
\end{equation*}
\end{corollary}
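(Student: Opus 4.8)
The plan is to establish the identity~\eqref{Eq:dual}, with $\beta_t$ given by~\eqref{min_penalty}, directly from the acceptance set $A_t = \{Y \in \LdpF{} \mid 0 \in R_t(Y)\}$, proving the two inclusions separately; once~\eqref{Eq:dual} holds, minimality of~\eqref{min_penalty} follows by evaluating~\eqref{Eq:dual} at points of $A_t$. (An equivalent but less self-contained route is to invoke the negative-conjugate representation of~\cite{FR12,FR12b} and translate it using $\beta_t(\Q,w) = G_t^M(w) -^. (-\bar\beta_t(\Q,w))$ together with the elementary rules $(A -^. B) -^. C = A -^. (B + C)$ and $A -^. \bigcup_i B_i = \bigcap_i (A -^. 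B_i)$; the only real work there is keeping track of closures.) The converse direction — that a function of the stated form is a closed convex, resp.\ coherent, risk measure — is a routine check of the axioms of Definition~\ref{defn_conditional}, using that $w_t^T(\Q,w) \in \LdqF{+}$ for $(\Q,w) \in \olW_t$; the only mildly delicate point is closedness of the images, which can be handled as in~\cite{FR12b}.

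For the inclusion ``$\subseteq$'', fix $u \in R_t(X)$, so $X + u \in A_t$, and fix $(\Q,w) \in \olW_t$. For every $v \in \beta_t(\Q,w)$, the defining intersection~\eqref{min_penalty} applied with $Y = X + u$ gives $v \in \lrparen{\EQt{-(X+u)}{t} + G_t(w)} \cap M_t$; since $\EQt{u}{t} = u$ and $\EQt{-(X+u)}{t} = \EQt{-X}{t} - u$ for $u \in M_t \subseteq \LdpF{t}$, and $G_t(w)$ is a half-space, this rewrites as $v + u \in \lrparen{\EQt{-X}{t} + G_t(w)} \cap M_t$. Hence $\beta_t(\Q,w) + \{u\} \subseteq \lrparen{\EQt{-X}{t} + G_t(w)} \cap M_t$, i.e.\ $u$ lies in the corresponding term of~\eqref{Eq:dual}; when $\beta_t(\Q,w) = \emptyset$ that term equals $M_t \supseteq R_t(X)$, and there is nothing to check.

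The reverse inclusion is the substantive step. Suppose $u$ lies in the right-hand side of~\eqref{Eq:dual} — so in particular $u \in M_t$ — but $X + u \notin A_t$. Since $R_t$ is closed and convex, $A_t$ is convex and closed (weak* closed if $p = \infty$), so a Hahn--Banach separation produces $v^\ast \in \LdqF{}$ with $\E{\transp{v^\ast}(X+u)} < s := \inf_{Y \in A_t} \E{\transp{v^\ast}Y} \in \R$. From $A_t = A_t + \LdpF{+}$ ($\LdpF{+}$-monotonicity) one gets $v^\ast \in \LdqF{+}$, and $R_t(0) \notin \{\emptyset, M_t\}$ (with normalization) rules out the degenerate cases and forces $v^\ast \notin \prp{M_t}$. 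Writing $w := \Et{v^\ast}{t}$ and $\dQidP := v^\ast_i / \Et{v^\ast_i}{t}$ (with the usual convention on the $\P$-null set of $\Et{v^\ast_i}{t}$), one verifies $\Q = \P|_{\Ft{t}}$, $w_t^T(\Q,w) = v^\ast$, and hence $(\Q,w) \in \olW_t$. For this pair $\beta_t(\Q,w) = \{m \in M_t \mid \E{\trans{w}m} \geq -s\}$ is a nonempty half-space; inserting $u$ into $\beta_t(\Q,w) + \{u\} \subseteq \lrparen{\EQt{-X}{t} + G_t(w)} \cap M_t$, taking the infimum of $\E{\trans{w}m}$ over $m \in \beta_t(\Q,w)$ (equal to $-s$ because $w \notin \prp{M_t}$), and using $\E{\trans{w}z} = \E{\transp{v^\ast}z}$ for $z \in M_t$ as well as $\E{\trans{w}\EQt{-X}{t}} = -\E{\transp{v^\ast}X}$, one arrives at $\E{\transp{v^\ast}(X+u)} \geq s$, contradicting the separation.

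For the coherent case, positive homogeneity makes $A_t$ a convex cone with $0 \in A_t$, so $\inf_{Y \in A_t} \E{\transp{w_t^T(\Q,w)}Y}$ equals $0$ when $w_t^T(\Q,w) \in \plus{A_t}$ and $-\infty$ otherwise; hence~\eqref{min_penalty} gives $\beta_t(\Q,w) = G_t^M(w)$ for $(\Q,w) \in \olW_t^{\max}$ and $\beta_t(\Q,w) = \emptyset$ for $(\Q,w) \in \olW_t \setminus \olW_t^{\max}$. Dropping the void terms ($D -^. \emptyset = M_t$) and using $D -^. G_t^M(w) = D$ for the half-space $D = \lrparen{\EQt{-X}{t} + G_t(w)} \cap M_t$ collapses~\eqref{Eq:dual} to $\bigcap_{(\Q,w) \in \olW_t^{\max}} \lrparen{\EQt{-X}{t} + G_t(w)} \cap M_t$; the converse is again a direct check, each set in the intersection being a positively homogeneous closed convex upper set. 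The main obstacle is the reverse inclusion above — concretely, the Hahn--Banach step in the weak* setting when $p = \infty$ and the measure-theoretic normalization turning $v^\ast$ into an admissible pair $(\Q,w) \in \olW_t$, including the verification $v^\ast \notin \prp{M_t}$; the remaining manipulations are elementary bookkeeping with half-spaces and Minkowski subtraction.
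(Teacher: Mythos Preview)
Your proposal is correct in spirit, and you explicitly mention the route the paper actually takes --- invoking the negative-conjugate representation from~\cite{FR12b} and translating via $\beta_t(\Q,w) = G_t^M(w) -^. (-\bar\beta_t(\Q,w))$ --- as the ``less self-contained'' alternative. That one-line translation is precisely the paper's proof. Your main route instead reproves the biconjugation from scratch via a direct Hahn--Banach separation on $A_t$; this is more self-contained and makes the mechanism transparent (the half-space bookkeeping you do is exactly right), but it essentially duplicates the work already packaged in~\cite[Theorem~2.3]{FR12b}. The paper's route is shorter precisely because it outsources that work.

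There is one genuine soft spot in your direct argument: the claim that the separating functional satisfies $\Et{v^\ast}{t} \notin \prp{M_t}$. Your stated reason, ``$R_t(0) \notin \{\emptyset,M_t\}$ (with normalization)'', does not do this. The correct argument is that if $\Et{v^\ast}{t} \in \prp{M_t}$ then $v^\ast_i = 0$ for $i \leq m$ (since $v^\ast \geq 0$), whence $\E{\transp{v^\ast}(X+u)} = \E{\transp{v^\ast}X}$ is independent of $u \in M_t$; picking any $u_0 \in R_t(X)$ (so $X+u_0 \in A_t$) then yields $s \leq \E{\transp{v^\ast}X}$, contradicting the strict separation. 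This leaves the case $R_t(X) = \emptyset$ to be handled separately (one must show the right-hand side of~\eqref{Eq:dual} is empty as well, e.g.\ by perturbing $v^\ast$ into $\plus{M_{t,+}}\setminus\prp{M_t}$ while preserving strict separation). You flag this verification as a ``main obstacle'' at the end, so you are aware it is delicate; just be sure the justification you give actually addresses it.
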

\begin{proof}
The results follow from the dual representation in \cite[Theorem 2.3]{FR12b} and $\beta_t(\Q,w) := G_t^M(w) -^. (-\bar\beta_t(\Q,w))$.
\end{proof}

\begin{corollary}
\label{cor_cond_dual}
A function $R_t: \LdpF{} \to \mathcal{G}(M_t;M_{t,+})$ is a \textbf{\emph{closed conditionally convex risk measure}} if and only if
\begin{equation}
\label{cond_convex_dual}
R_t(X) = \bigcap_{(\Q,w) \in \olW_t} \lrsquare{\lrparen{\EQt{-X}{t} + \Gamma_t\lrparen{w}} \cap M_t -^. \alpha_t(\Q,w)},
\end{equation}
where $\alpha_t$ is the minimal conditional penalty function given by
\begin{equation}
\label{cond_min_penalty}
\alpha_t(\Q,w) = \bigcap_{Y \in A_t} \lrparen{\EQt{-Y}{t} + \Gamma_t(w)} \cap M_t.
\end{equation}
$R_t$ is additionally conditionally coherent if and only if
\begin{equation}
\label{conditional_coherent_dual}
R_t(X) = \bigcap_{(\Q,w) \in \olW_{t}^{\max}} \lrparen{\EQt{-X}{t} + \Gamma_t\lrparen{w}} \cap M_t.
\end{equation}
\end{corollary}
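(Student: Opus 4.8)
The plan is to deduce the statement from the robust representation of closed conditionally convex (resp.\ conditionally coherent) risk measures already available in \cite[Theorem~2.3]{FR12b}, combined with the relation $\alpha_t(\Q,w) = \Gamma_t^M(w) -^. (-\bar\alpha_t(\Q,w))$ recorded above --- exactly as Corollary~\ref{cor_dual} is obtained in the convex case, only with the half-space $G_t$ replaced by the conditional half-space $\Gamma_t$. That theorem provides, for a closed conditionally convex $R_t$,
\[ R_t(X) = \bigcap_{(\Q,w)\in\olW_t}\cl\lrsquare{\lrparen{\EQt{-X}{t}+\Gamma_t(w)}\cap M_t + \lrparen{-\bar\alpha_t(\Q,w)}}, \]
where $-\bar\alpha_t$ is the minimal conditional \emph{negative} penalty function, $-\bar\alpha_t(\Q,w) = \cl\bigcup_{Y\in A_t}\lrsquare{\lrparen{\EQt{Y}{t}+\Gamma_t(w)}\cap M_t}$, together with the analogous identity for conditionally coherent $R_t$ (where $A_t$ is a conditional convex cone). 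It remains to rewrite these in the asserted form; since that rewriting is a tautological identity of representations, both implications of the corollary then follow at once from \cite[Theorem~2.3]{FR12b}.

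The core is a piece of set-algebra. Fix $(\Q,w)\in\olW_t$ and set $D := \lrparen{\EQt{-X}{t}+\Gamma_t(w)}\cap M_t$ and $B := -\bar\alpha_t(\Q,w)$. Since $\Gamma_t(w)$ is a closed convex cone --- it is cut out by the $\P$-a.s.\ inequality $\trans{w}u\geq 0$ --- one has $\Gamma_t(w)+\Gamma_t(w)=\Gamma_t(w)$, and since $M_t$ is a linear subspace it follows that $D=D+\Gamma_t^M(w)$; moreover $D\neq M_t$ because $w\notin\prp{M_t}$. I would then show that the $(\Q,w)$-term $\cl\lrparen{D+B}$ of the representation above coincides with
\[ D -^. \lrparen{\Gamma_t^M(w) -^. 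B} \;=\; \lrparen{\EQt{-X}{t}+\Gamma_t(w)}\cap M_t -^. \alpha_t(\Q,w), \]
using the elementary rules $A-^.(B_1+B_2)=(A-^.B_1)-^.B_2$ and $A-^.B=\bigcap_{b\in B}(A-b)$, the membership $0\in\Gamma_t^M(w)$, and $D+\Gamma_t^M(w)=D$: the inclusion ``$\supseteq$'' is immediate, since for $n\in D$, $b\in B$ and any $m$ with $B+\{m\}\subseteq\Gamma_t^M(w)$ one has $n+b+m\in D+\Gamma_t^M(w)=D$, while ``$\subseteq$'' uses the cone structure of $\Gamma_t^M(w)$ to extract from $\lrparen{\Gamma_t^M(w)-^.B}+\{p\}\subseteq D$ a decomposition of $p$ witnessing membership in $\cl(D+B)$. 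Doing this for every $(\Q,w)$ gives \eqref{cond_convex_dual}; the minimal-penalty formula \eqref{cond_min_penalty} then follows by plugging the explicit form of $-\bar\alpha_t$ into $\Gamma_t^M(w)-^.(\cdot)$ and using that $-^.$ turns unions and closures into intersections together with the level-reversing identity $\Gamma_t^M(w) -^. \lrparen{\lrparen{\EQt{Y}{t}+\Gamma_t(w)}\cap M_t} = \lrparen{\EQt{-Y}{t}+\Gamma_t(w)}\cap M_t$.

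For the conditionally coherent case, $A_t$ being a conditional convex cone makes $\alpha_t(\Q,w)$ improper: by \eqref{cond_min_penalty} it equals $\Gamma_t^M(w)$ when $w_t^T(\Q,w)\in\plus{A_t}$, i.e.\ when $(\Q,w)\in\olW_t^{\max}$, and equals $\emptyset$ otherwise. Since $D -^. \Gamma_t^M(w) = D$ (using $D+\Gamma_t^M(w)=D$ and $0\in\Gamma_t^M(w)$) while $D -^. \emptyset = M_t$ drops out of the intersection, \eqref{cond_convex_dual} collapses to \eqref{conditional_coherent_dual}.

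The step I expect to be the main obstacle is the inclusion ``$\subseteq$'' above and the level-reversing identity for $\Gamma_t^M(w)-^.\lrparen{\lrparen{\EQt{Y}{t}+\Gamma_t(w)}\cap M_t}$: these are purely set-theoretic but genuinely use that $\Gamma_t(w)$ is a cone rather than merely an upper set, that $M_t$ is a linear subspace (the level-reversing identity via a conditional selection argument producing $u\in M_t$ with $\trans{w}u$ matching $\trans{w}\EQt{Y}{t}$), and one must track the closures and the possibly improper ($\emptyset$-valued) directions. This calculus is in essence the one underlying the positive conjugate of \cite{setOPsurvey}; everything else is bookkeeping, just as in the proof of Corollary~\ref{cor_dual}.
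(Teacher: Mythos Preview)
Your high-level plan --- reduce to the negative-conjugate representation of \cite[Theorem~2.3]{FR12b} via the relation $\alpha_t(\Q,w)=\Gamma_t^M(w)-^.(-\bar\alpha_t(\Q,w))$ --- is exactly the paper's strategy, and your treatment of the conditionally coherent case is the same as theirs. The difference lies in how the key identity
\[
\lrparen{\EQt{-X}{t}+\Gamma_t(w)}\cap M_t -^. \alpha_t(\Q,w) \;=\; -\bar\alpha_t(\Q,w) + \lrparen{\EQt{-X}{t}+\Gamma_t(w)}\cap M_t
\]
is established. The paper does \emph{not} argue this by abstract Minkowski-difference calculus; it first proves a separate lemma (Proposition~\ref{prop_essinf_rep}) showing that for convex decomposable $A\subseteq M_t$,
\[
\cl\lrsquare{A+\Gamma_t^M(w)} \;=\; \lrcurly{m\in M_t \;\middle|\; \trans{w}m \geq \essinf_{a\in A}\trans{w}a},
\]
with a genuine topological argument (separate cases $p<\infty$ and $p=\infty$, plus a separation step). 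Both $\alpha_t(\Q,w)$ and each term in the chain are then rewritten as $\Ft{t}$-measurable sublevel sets $\{u\in M_t:\trans{w}u\geq c\}$, after which the desired identity becomes an equality of scalar thresholds and is arithmetic.

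Your purely set-algebraic route would, if carried out, have to reproduce this content. The inclusion you flag as the main obstacle (``$\subseteq$'') is exactly where decomposability and the conditional selection of $u\in M_t$ with prescribed $\trans{w}u$ come in, and the improper case $\Gamma_t^M(w)-^.B=\emptyset$ (i.e.\ $\essinf_{b\in B}\trans{w}b=-\infty$ on a positive-probability set) must be handled separately --- both are transparent once one has the $\essinf$ description, but not before. So your proposal is not wrong, but the gap you identify is real and its resolution is essentially Proposition~\ref{prop_essinf_rep}; the paper isolates that lemma precisely because the ``bookkeeping'' you describe does not close without it.
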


The proof of Corollary~\ref{cor_cond_dual} is much more involved than the proof of Corollary~\ref{cor_dual} and will be given in Secion~\ref{proofCordual} of the Appendix.

\begin{example}
\label{ex:systemic-dual}
\emph{Aggregation-based risk measure}: Consider again the aggregation-based risk measure constructed in Example~\ref{ex:systemic-defn}. The dual representation of this conditionally coherent risk measure is given by the set of dual variables
\[\olW_t^{\Lambda} = \lrcurly{(\Q,w) \in \W_t \; | \; w_t^T(\Q,w) \in \LdpK{1}{}{\plus{\Lambda^{-1}[\R_+]}}}.\]
As $a \geq b > 0$ in the definition of the aggregation function $\Lambda$, if $(\Q,w) \in \olW_t^{\Lambda}$ then $\Q \in \mathcal{M}^e$.  This additional property becomes important in Section~\ref{sec_supermtg-cc}. Furthermore, it holds that $(\Q,v_k) \in \olW_t^{\Lambda}$ for any probability measure $\Q \in \mathcal{M}^e$ and $v_k$ as defined in Example~\ref{ex:systemic-defn} for every $k = 0,1,...,2^{d-1}$ since $\trans{v_k}x \geq \Lambda(x)$ for any choice of $x \in \R^d$ by construction.
\end{example}

\subsection{Multiportfolio time consistency}\label{subsection_mptc}
Multiportfolio time consistency has been studied in \cite{FR12,FR12b} as a useful concept of time consistency for set-valued risk measures.  We will quickly review the definition and some of the equivalent characterizations of this property.  In particular, we will provide the cocycle condition on (positive) penalty functions as being equivalent to multiportfolio time consistency as this result will be used in the main proofs of the paper. In contrast, in \cite{FR12b} this result was shown for the negative penalty functions.

\begin{definition}\cite[Definition~2.7]{FR12b}
\label{defn_mptc}
A dynamic risk measure $\seq{R}$ is \textbf{\emph{multiportfolio time consistent}} if for all times $0 \leq t < s \leq T$, all portfolios $X\in \LdpF{}$ and all sets ${\bf Y}\subseteq \LdpF{}$ the following implication is satisfied
\begin{equation*}
  R_s(X) \subseteq \bigcup_{Y \in {\bf Y}} R_s(Y) \Rightarrow R_t(X) \subseteq \bigcup_{Y \in {\bf Y}} R_t(Y).
\end{equation*}
\end{definition}
Conceptually, a risk measure is multiportfolio time consistent if, whenever any eligible portfolio that compensates for the risk of $X$ will compensate for the risk of some portfolio $Y \in {\bf Y}$ at some time, then at any prior time the same relation holds.
\begin{theorem}\cite[Theorem 3.4]{FR12}
\label{thm_equiv_tc}
For a normalized dynamic risk measure $\seq{R}$ the following are equivalent:
\begin{enumerate}
\item \label{thm_equiv_tctc}$\seq{R}$ is multiportfolio time consistent,
\item \label{thm_equiv_recursive} $R_t$ is recursive; that is for all times $0 \leq t < s \leq T$
    \begin{equation}
    \label{recursive}
        R_t(X) = \bigcup_{Z \in R_s(X)} R_t(-Z) =: R_t(-R_s(X)).
    \end{equation}
\item $A_t = A_{t,s} + A_s$ for every time $0 \leq t < s \leq T$.
\end{enumerate}
\end{theorem}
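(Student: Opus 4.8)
The plan is to establish the cycle of implications $(1)\Rightarrow(2)\Rightarrow(3)\Rightarrow(1)$. Two facts are used throughout: the bijective correspondence between a conditional risk measure and its acceptance set, so that for $u\in M_t$ one has $u\in R_t(X)\iff X+u\in A_t$, and $M_t$-translativity; I also use the elementary inclusion $M_t\subseteq M_s$ for $t<s$ (an $\Ft{t}$-measurable portfolio is $\Ft{s}$-measurable) and the normalization identity $R_t(X)=R_t(X)+R_t(0)$. Throughout, $R_t(-R_s(X))$ abbreviates $\bigcup_{Z\in R_s(X)}R_t(-Z)$.

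For $(1)\Rightarrow(2)$, fix $0\le t<s\le T$ and $X\in\LdpF{}$ and prove the two inclusions separately. For $R_t(X)\subseteq R_t(-R_s(X))$, apply multiportfolio time consistency with ${\bf Y}=\{-Z:Z\in R_s(X)\}$; its hypothesis is satisfied because $M_s$-translativity and normalization give $\bigcup_{Z\in R_s(X)}R_s(-Z)=\bigcup_{Z\in R_s(X)}(R_s(0)+Z)=R_s(0)+R_s(X)=R_s(X)$, so the conclusion reads $R_t(X)\subseteq\bigcup_{Z\in R_s(X)}R_t(-Z)$. For the reverse inclusion, fix $Z\in R_s(X)$; then $R_s(-Z)=R_s(0)+Z\subseteq R_s(X)+R_s(0)=R_s(X)$ again by translativity and normalization, so multiportfolio time consistency with the singleton ${\bf Y}=\{X\}$ yields $R_t(-Z)\subseteq R_t(X)$, and taking the union over $Z$ finishes this step.

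For $(2)\Rightarrow(3)$, I would unfold recursiveness at the level of the statement $0\in R_t(X)$. If $X\in A_t$, then $0\in R_t(-Z)$ for some $Z\in R_s(X)\subseteq M_s$, hence $-Z\in A_t\cap M_s=A_{t,s}$ and $X+Z\in A_s$, so $X=(-Z)+(X+Z)\in A_{t,s}+A_s$. Conversely, if $X=u+Y$ with $u\in A_{t,s}$ and $Y\in A_s$, then $M_s$-translativity gives $-u\in R_s(Y)-u=R_s(X)$, so recursiveness gives $R_t(u)=R_t(-(-u))\subseteq R_t(X)$, and since $u\in A_{t,s}\subseteq A_t$ we get $0\in R_t(u)\subseteq R_t(X)$, i.e.\ $X\in A_t$. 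For $(3)\Rightarrow(1)$, assume $R_s(X)\subseteq\bigcup_{Y\in{\bf Y}}R_s(Y)$ and take $u\in R_t(X)$, so $X+u\in A_t=A_{t,s}+A_s$; write $X+u=v+w$ with $v\in A_{t,s}\subseteq M_s$ and $w\in A_s$. Then $u-v\in M_s$ and $X+(u-v)=w\in A_s$, so $u-v\in R_s(X)$; by hypothesis $u-v\in R_s(Y_0)$ for some $Y_0\in{\bf Y}$, i.e.\ $Y_0+u-v\in A_s$, and adding $v\in A_{t,s}$ gives $Y_0+u\in A_{t,s}+A_s=A_t$, i.e.\ $u\in R_t(Y_0)\subseteq\bigcup_{Y\in{\bf Y}}R_t(Y)$.

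I expect essentially no conceptual obstacle; the work is bookkeeping with the nested eligible-asset subspaces $M_t\subseteq M_s$ and the acceptance-set dictionary. The one step where the obvious first attempt fails is the reverse inclusion in $(1)\Rightarrow(2)$: $\LdpF{+}$-monotonicity does not give $R_t(-Z)\subseteq R_t(X)$, and one genuinely needs a second application of multiportfolio time consistency — to the singleton $\{X\}$ — together with normalization. Alternatively one could organize the proof around $(1)\Leftrightarrow(2)$ and $(2)\Leftrightarrow(3)$, since $(2)\Rightarrow(1)$ is immediate: under recursiveness, $R_s(X)\subseteq\bigcup_{Y\in{\bf Y}}R_s(Y)$ gives $R_t(X)=\bigcup_{Z\in R_s(X)}R_t(-Z)\subseteq\bigcup_{Y\in{\bf Y}}\bigcup_{Z\in R_s(Y)}R_t(-Z)=\bigcup_{Y\in{\bf Y}}R_t(Y)$.
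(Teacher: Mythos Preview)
Your proof is correct. Note, however, that the paper does not give its own proof of this theorem: it is simply quoted from \cite[Theorem~3.4]{FR12}, so there is no in-paper argument to compare against. That said, the reverse inclusion in your $(1)\Rightarrow(2)$ step --- using translativity and normalization to get $R_s(-Z)=R_s(0)+Z\subseteq R_s(0)+R_s(X)=R_s(X)$ and then invoking multiportfolio time consistency with the singleton $\{X\}$ --- is exactly the argument the present paper uses later, in the proof of Lemma~\ref{lemma1}, to derive \eqref{eq_supermtg-1} from time consistency. Likewise your $(2)\Leftrightarrow(3)$ manipulations mirror the acceptance-set dictionary that the paper invokes via \cite[Lemma~3.6]{FR12}. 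So while there is no proof here to benchmark against, your approach is fully in line with how these equivalences are handled in the surrounding literature and in the paper's own auxiliary arguments.
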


As shown in the above theorem, multiportfolio time consistency is equivalent to a recursive relation for set-valued risk measures.  Furthermore, \cite{FR14-alg} discusses the relation between the recursive form and a set-valued version of Bellman's principle.  

In the case of discrete time $\{0,1,...,T\}$, a step size of $1$ (i.e.\ setting $s = t+1$) is sufficient to define multiportfolio time consistency and the recursive relation \eqref{recursive}.

\begin{example}
\label{ex:systemic-mptc}
\emph{Aggregation-based risk measure}: Consider again the aggregation-based risk measure constructed in Example~\ref{ex:systemic-defn}. As the acceptance sets $A_t^{\Lambda}$ are constant in time, it therefore follows that $R_t^{\Lambda}(X) = R_{t+1}^{\Lambda}(X) \cap \LdiF{t}$.  Therefore, directly by the definition, we can prove that this aggregation-based risk measure is multiportfolio time consistent since for any $0 \leq t < s \leq T$, $X \in \LdiF{}$ and ${\bf Y} \subseteq \LdiF{}$ such that $R_s^{\Lambda}(X) \subseteq \bigcup_{Y \in {\bf Y}} R_s^{\Lambda}(Y)$ it follows that
\begin{align*}
R_t^{\Lambda}(X) &= R_s^{\Lambda}(X) \cap \LdiF{t} \subseteq \lrsquare{\bigcup_{Y \in {\bf Y}} R_s^{\Lambda}(Y)} \cap \LdiF{t} = \bigcup_{Y \in {\bf Y}} \lrsquare{R_s^{\Lambda}(Y) \cap \LdiF{t}} = \bigcup_{Y \in {\bf Y}} R_t^{\Lambda}(Y).
\end{align*}
\end{example}

We will now briefly present the cocycle condition for the positive convex conjugates $\seq{\beta}$ and $\seq{\alpha}$, which have been proven for the negative conjugates in~\cite{FR12b}. In these results $\beta_{t,s}$ and $\alpha_{t,s}$ correspond with the positive convex conjugates for the stepped risk measures $R_{t,s}$ defined above. Recall from~\cite{FR12b} that a conditional risk measure $R_t$ at time $t$ is called conditionally convex upper continuous (c.c.u.c.) if $R_t^{-1}(D) := \lrcurly{X \in \LdpF{}\;|\; R_t(X) \cap D\neq \emptyset}$ is closed for any conditionally convex closed set $D \in \mathcal{G}(M_t;M_{t,-})$.
\begin{theorem}\label{thm_penalty}
Let $\seq{R}$ be a normalized c.u.c.\ convex risk measure.  Then $\seq{R}$ is multiportfolio time consistent if and only if
\[\beta_t(\Q,w) = \cl\lrparen{\beta_{t,s}(\Q,w) + \EQt{\beta_s(\Q,w_t^s(\Q,w))}{t}}\]
for every $(\Q,w) \in \W_t$ and all times $0 \leq t < s \leq T$.
\end{theorem}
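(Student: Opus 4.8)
The plan is to derive the cocycle condition for the positive penalty function $\beta$ from the already-established cocycle condition for the negative penalty function $-\bar\beta$ given in \cite{FR12b}, using the conversion identity $\beta_t(\Q,w) = G_t^M(w) -^. (-\bar\beta_t(\Q,w))$ stated just before Corollary~\ref{cor_dual}. So the first step is to recall precisely the statement from \cite{FR12b}: that under the hypotheses (normalized c.u.c.\ convex), multiportfolio time consistency is equivalent to
\[-\bar\beta_t(\Q,w) = \cl\lrparen{-\bar\beta_{t,s}(\Q,w) + \EQt{-\bar\beta_s(\Q,w_t^s(\Q,w))}{t}}\]
for every $(\Q,w) \in \W_t$ and all $0 \leq t < s \leq T$ (the stepped penalty $-\bar\beta_{t,s}$ being the conjugate of the stepped risk measure). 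Since Theorem~\ref{thm_penalty} asserts an equivalence, it suffices to show that the two cocycle conditions are themselves equivalent (for fixed $(\Q,w)$ and $t<s$), given that $R$ is a closed convex risk measure so that Corollary~\ref{cor_dual} and the conversion formula apply.

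The heart of the argument is therefore a purely "algebraic" lemma about the Minkowski subtraction $-^.$ and how it interacts with $\EQt{\cdot}{t}$, $G_t(w)$, closures, and sums. First I would record the elementary facts: for $C \in \mathcal{G}(M_t; M_{t,+})$ one has $G_t^M(w) -^. (G_t^M(w) -^. C) = C$ when $C$ is of the relevant "$G$-halfspace-generated" type — more precisely the map $C \mapsto G_t^M(w) -^. C$ is an involution (up to closure) on the appropriate class of sets, which is exactly the duality between positive and negative conjugates established in \cite{setOPsurvey}. Then I would check the two compatibility identities: (i) $\EQt{G_s^M(w_t^s(\Q,w)) -^. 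D}{t} = G_t^M(w) -^. \EQt{D}{t}$ type relations, translating a subtraction inside a conditional expectation into one outside it (using the definition $w_t^s(\Q,w) = \diag{w}\xi_{t,s}(\Q)$ and that $\EQt{\cdot}{t}$ acts as multiplication by $\xi_{t,T}$ followed by $\Et{\cdot}{t}$), and (ii) that $-^.$ distributes appropriately over the sum of two sets and commutes with $\cl$ in this setting, so that $G_t^M(w) -^. \cl(A+B) = \cl\lrparen{(G_t^M(w)-^.A) + (\text{something involving } B)}$. Feeding the negative-conjugate cocycle through these identities and the involution should produce exactly the positive-conjugate cocycle, and conversely.

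The main obstacle I expect is the careful handling of closures and the exact class of sets on which $-^.$ behaves as a clean involution: the identity $G_t^M(w) -^.(G_t^M(w) -^. D) = D$ can fail without a closedness/convexity hypothesis on $D$, and the stepped penalty functions and the conditional expectations of penalty functions need to be verified to lie in the right class ($\mathcal{G}(M_t;M_{t,+})$ or its halfspace-intersection refinement) for the cancellation to go through. A secondary subtlety is that $G_t^M(w)$ depends on $w$ while inside the conditional expectation the relevant normal is $w_t^s(\Q,w)$, so one must track that $\EQt{G_s(w_t^s(\Q,w))}{t} \cap M_t$ relates correctly to $G_t(w) \cap M_t$; this is where the defining condition $w_t^T(\Q,w) \in \LdqF{+}$ and $\Q = \P|_{\Ft{t}}$ in the definition of $\W_t$, together with the tower property of conditional expectation, get used. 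Once these lemmas on Minkowski subtraction are in place, the proof is a short substitution; accordingly I would state and prove the $-^.$-manipulation facts as a preliminary lemma (likely in the Appendix alongside the proof of Corollary~\ref{cor_cond_dual}), then give the one-paragraph deduction of Theorem~\ref{thm_penalty} from \cite[Theorem~3.?]{FR12b}.
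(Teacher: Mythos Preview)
Your proposal is correct and matches the paper's approach almost exactly: the paper also reduces to the negative-conjugate cocycle condition from \cite[Theorem~3.2]{FR12b} via $\beta_t(\Q,w) = G_t^M(w) -^. (-\bar\beta_t(\Q,w))$ (and its inverse), and isolates precisely your two compatibility lemmas as separate propositions, namely $G_t^M(w) -^. \cl[A+B] = \cl\bigl[(G_t^M(w) -^. A) + (G_t^M(w) -^. B)\bigr]$ and $G_t^M(w) -^. \EQt{A}{t} = \EQt{G_s^M(w_t^s(\Q,w)) -^. A}{t}$. The only refinement worth noting is that the side conditions needed for the first identity are not closedness/convexity but rather non-degeneracy hypotheses (e.g.\ $A \neq \emptyset$ whenever $\cl[B + G_t^M(w)] = M_t$), which the paper verifies using that $-\bar\beta_t(\Q,w) \neq \emptyset$ for $(\Q,w)\in\W_t$ in one direction and $\beta_t(\Q,w) \neq M_t$ in the other.
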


\begin{theorem}\label{thm_cond_penalty}
Let $\seq{R}$ be a normalized c.c.u.c.\ conditionally convex risk measure with dual representation
\[R_t(X) = \bigcap_{(\Q,w) \in \W_t^e} \lrsquare{\lrparen{\EQt{-X}{t} + \Gamma_t(w)} \cap M_t -^. \alpha_t(\Q,w)}\]
for every $X \in \LdpF{}$ where $\W_t^e := \lrcurly{(\Q,w) \in \W_t \; | \; \Q \in \mathcal{M}^e}$.  Then $\seq{R}$ is multiportfolio time consistent if and only if for every $(\Q,w) \in \W_t$ and all times $0 \leq t < s \leq T$
\[\alpha_t(\Q,w) = \cl\lrparen{\alpha_{t,s}(\Q,w) + \EQt{\alpha_s(\Q,w_t^s(\Q,w))}{t}}.\]
\end{theorem}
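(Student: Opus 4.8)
The plan is to reduce the statement to the cocycle characterization for the \emph{negative} conditional penalty functions, which was established in~\cite{FR12b}, and then transport that result along the Minkowski-subtraction duality between the negative and positive conjugates recorded in Section~\ref{subsection_dual}. Concretely: by Theorem~\ref{thm_equiv_tc}, multiportfolio time consistency is equivalent to $A_t = A_{t,s} + A_s$ for all $0 \le t < s \le T$, and by~\cite{FR12b} this is in turn equivalent to the negative-conjugate cocycle $-\bar\alpha_t(\Q,w) = \cl\lrparen{(-\bar\alpha_{t,s}(\Q,w)) + \EQt{-\bar\alpha_s(\Q,w_t^s(\Q,w))}{t}}$ for all $(\Q,w) \in \W_t$ and all $0 \le t < s \le T$, where $-\bar\alpha_{t,s}$ denotes the negative conjugate of the stepped risk measure $R_{t,s}$. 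Here the standing hypothesis that $R_t$ is normalized, closed, c.c.u.c.\ conditionally convex with the stated dual representation over $\W_t^e$ is exactly what makes the negative-conjugate machinery of~\cite{FR12b} applicable and lets one pass between statements formulated over $\W_t^e$ and over the full set $\W_t$. It therefore suffices to prove the equivalence of this negative-conjugate cocycle with the asserted positive-conjugate cocycle $\alpha_t(\Q,w) = \cl\lrparen{\alpha_{t,s}(\Q,w) + \EQt{\alpha_s(\Q,w_t^s(\Q,w))}{t}}$; the argument parallels the proof of Theorem~\ref{thm_penalty}, with the conditional half-spaces $\Gamma_t(w)$ in place of $G_t(w)$.

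The core is then a purely algebraic computation built on the identity $\alpha_t(\Q,w) = \Gamma_t^M(w) -^. (-\bar\alpha_t(\Q,w))$ (and its analogues for the stepped penalty function and at time $s$ with normal $w_t^s(\Q,w)$), together with the following facts, most of which I would record as short lemmas in the Appendix:
\begin{enumerate}
\item[(a)] Minkowski subtraction splits Minkowski sums, $A -^. (B_1 + B_2) = (A -^. B_1) -^. B_2$, and, dually, it is additive under closure on upper sets, $\Gamma_t^M(w) -^. (B_1 + B_2) = \cl\lrparen{(\Gamma_t^M(w) -^. B_1) + (\Gamma_t^M(w) -^. B_2)}$ (a positive-conjugate-of-a-sum identity in the sense of~\cite{setOPsurvey});
\item[(b)] $A -^. \cl B = A -^. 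B$ whenever $A$ is closed, and $\Gamma_t^M(w)$, $\alpha_{t,s}(\Q,w)$, etc.\ are closed;
\item[(c)] the conditional half-space transforms correctly under conditional expectation: $\EQt{\Gamma_s(w_t^s(\Q,w))}{t} = \Gamma_t(w)$, hence $\EQt{\Gamma_s^M(w_t^s(\Q,w))}{t} = \Gamma_t^M(w)$;
\item[(d)] on the relevant class of closed convex upper sets, $\EQt{\,\cdot\,}{t}$ commutes with $\cl$ and with the conjugation operation, so that $\Gamma_t^M(w) -^. \EQt{-\bar\alpha_s(\Q,w_t^s(\Q,w))}{t} = \EQt{\Gamma_s^M(w_t^s(\Q,w)) -^. (-\bar\alpha_s(\Q,w_t^s(\Q,w)))}{t} = \EQt{\alpha_s(\Q,w_t^s(\Q,w))}{t}$.
\end{enumerate}
Applying $\Gamma_t^M(w) -^. (\,\cdot\,)$ to both sides of the negative-conjugate cocycle and simplifying the right-hand side by (b), then (a), then (c)--(d) produces exactly the positive-conjugate cocycle. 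For the converse, the same chain of identities read in reverse shows that the positive-conjugate cocycle implies the negative-conjugate one, using additionally the biconjugation identity $-\bar\alpha_t(\Q,w) = \Gamma_t^M(w) -^. \alpha_t(\Q,w)$ (and its stepped and time-$s$ versions), valid because $R_t$ is closed c.c.u.c., so that $D \mapsto \Gamma_t^M(w) -^. D$ is involutive on the image spaces in play (again~\cite{setOPsurvey}); then multiportfolio time consistency follows via~\cite{FR12b} and Theorem~\ref{thm_equiv_tc}.

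I expect the main obstacle to be fact~(d): pushing the $\P$-a.s.\ conditional expectation $\EQt{\,\cdot\,}{t}$ through closures, intersections, and Minkowski subtractions of conditionally convex upper sets. Unlike in the scalar case these interchanges are genuinely delicate; they require a decomposability (``local property'') argument for the conditional penalty function $\alpha_s$ and essential use of conditional convexity (this is precisely what c.c.u.c.\ buys), together with care on $\P$-null sets and with the measurability of the Radon--Nikodym densities entering $w_t^s(\Q,w) = \diag{w}\xi_{t,s}(\Q)$. A secondary technical point, also handled in the Appendix, is establishing~(c) and the dual-cone bookkeeping (so that $w_t^s(\Q,w)$ and all penalty sets remain in the correct image spaces) and bridging the gap between $\W_t^e$, where the dual representation is assumed, and the full set $\W_t$ over which the conclusion is stated; once these lemmas are in place the remainder is the formal manipulation sketched above.
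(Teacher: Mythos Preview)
Your proposal is correct and follows essentially the same route as the paper: reduce to the negative-conjugate cocycle from~\cite{FR12b} via the duality $\alpha_t(\Q,w) = \Gamma_t^M(w) -^. (-\bar\alpha_t(\Q,w))$ and its inverse, then transport the cocycle across this duality using exactly your facts (a) and (d), which are the paper's Propositions~\ref{prop_cond_penalty1} and~\ref{prop_cond_penalty2}. You have also correctly identified the main technical obstacle---the interchange in (d) is indeed the delicate step, requires $\Q \in \mathcal{M}^e$ and decomposability, and is handled in the paper by a separation argument; note that Proposition~\ref{prop_cond_penalty2} carries a closure on the right-hand side, so your (d) should read $\cl\EQt{\alpha_s(\Q,w_t^s(\Q,w))}{t}$.
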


\section{Supermartingale Property}
\label{sec_supermtg}

In this section we consider a supermartingale-like property for c.u.c.\ convex set-valued risk measures.  This property is akin to that given in~\cite{FP06,BN09} for the scalar case.

Let us introduce the following notation 
\[V_t^{(\Q,w)}(X) := \cl\lrsquare{R_t(X) + \beta_t(\Q,w)}.\]

\begin{theorem}
\label{thm_supermtg}
Let $\seq{R}$ be a normalized c.u.c.\ convex risk measure.  $\seq{R}$ is multiportfolio time consistent if and only if for all times $0 \leq t < s \leq T$ the following supermartingale relation is satisfied: for every $X \in \LdpF{}$ and $(\Q,w) \in \W_t$
\begin{equation}
\label{SMP}
V_t^{(\Q,w)}(X) \subseteq \EQt{V_s^{(\Q,w_t^s(\Q,w))}(X)}{t}.
\end{equation}
Furthermore, the assumption on c.u.c.\ can be weakened on one side of the equivalence:  If $\seq{R}$ is a normalized closed, convex, multiportfolio time consistent risk measure, then \eqref{SMP} is satisfied.
\end{theorem}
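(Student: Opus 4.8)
The plan is to prove the equivalence by reducing it to the cocycle condition on the positive penalty functions, namely Theorem~\ref{thm_penalty}, which already packages the hard duality work. Throughout, fix $0\le t<s\le T$, $X\in\LdpF{}$, and $(\Q,w)\in\W_t$, and abbreviate $w':=w_t^s(\Q,w)$, noting $(\Q,w')\in\W_s$. The first step is to record the ``recursion for $V$'' that one expects: using $\Et{G_t(w)}{t}=G_t(w)$-type identities and the tower property of the $\Q$-conditional expectation (so $\EQt{\EQt{\cdot}{s}}{t}=\EQt{\cdot}{t}$ on $\LdpF{}$), together with $M_t$-translativity and the fact that $\beta_t$ is built from $A_t$ as in \eqref{min_penalty}, I would show that for a multiportfolio time consistent $\seq{R}$ one has the exact relation
\[
V_t^{(\Q,w)}(X)\;=\;\cl\EQt{V_s^{(\Q,w')}(X)}{t}.
\]
This is the set-valued analogue of the scalar identity $V_t^{\Q}=\EQt{V_s^{\Q}}{t}$ on the martingale directions; the inclusion \eqref{SMP} is then immediate, and since $\EQt{\cdot}{t}$ applied to a closed upper set need not be closed, the $\cl$ on the right of \eqref{SMP} as written in the theorem statement is exactly what one gets (in the version quoted here without the outer $\cl$, one uses that $\Et{\cdot}{t}$ of an upper half-space plus a set absorbs the $G_t$ recession directions, so the closure is harmless).

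For the forward direction (multiportfolio time consistency $\Rightarrow$ \eqref{SMP}) under merely \emph{closed} convex $\seq{R}$: here I cannot invoke Theorem~\ref{thm_penalty} (which assumes c.u.c.), so I would argue directly from the recursive form \eqref{recursive} of Theorem~\ref{thm_equiv_tc}, which holds for any normalized multiportfolio time consistent risk measure. Writing $R_t(X)=R_t(-R_s(X))=\bigcup_{Z\in R_s(X)}R_t(-Z)$ and using $M_t$-translativity to push $-Z$ through, one gets $R_t(X)+\beta_t(\Q,w)\subseteq \bigcup_{Z\in R_s(X)}\big(\EQt{Z}{t}+(\text{half-space})+\beta_{t,s}(\Q,w)\big)$, and then feeding in the definition \eqref{min_penalty} of $\beta_s$ and the sub/additive behaviour of $\beta_{t,s}+\EQt{\beta_s}{t}$ one lands inside $\EQt{\cl[R_s(X)+\beta_s(\Q,w')]}{t}$. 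Taking closures gives \eqref{SMP}. The key inputs are the stepped penalty identity $\beta_{t,s}(\Q,w)=\bigcap_{Y\in A_{t,s}}(\EQt{-Y}{t}+G_t(w))\cap M_t$ and the decomposition $A_t=A_{t,s}+A_s$ (third item of Theorem~\ref{thm_equiv_tc}), which is really the engine behind the cocycle relation and is available without c.u.c.

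For the converse (\eqref{SMP} $\Rightarrow$ multiportfolio time consistency), I would specialize \eqref{SMP} to $X\in A_t$, i.e.\ $0\in R_t(X)$, to extract the inclusion $\beta_t(\Q,w)\subseteq\cl(\beta_{t,s}(\Q,w)+\EQt{\beta_s(\Q,w')}{t})$; the reverse inclusion $\supseteq$ always holds for closed convex risk measures (it is the ``easy'' half of Theorem~\ref{thm_penalty}, coming from $A_{t,s}+A_s\subseteq A_t$, which needs no time consistency). Hence \eqref{SMP} forces the cocycle equality of Theorem~\ref{thm_penalty}, which is equivalent to multiportfolio time consistency. To get the full equality of penalty functions (not just on $A_t$) one also evaluates \eqref{SMP} at general $X$ and uses $M_t$-translativity together with the dual representation \eqref{Eq:dual} to strip off $R_t(X)$; this is where the c.u.c.\ hypothesis is genuinely used, since Theorem~\ref{thm_penalty} itself needs it.

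The main obstacle I anticipate is the careful bookkeeping of \emph{where closure operators must be inserted}: $\EQt{\cdot}{t}$ does not preserve closedness of upper sets, Minkowski sums of closed sets need not be closed, and the operation $-^.$ interacts subtly with $\cl$ and with $\EQt{\cdot}{t}$. Getting \eqref{SMP} with exactly the closures claimed — and in particular checking that the recession cone $G_t(w)$ (resp.\ the linearity subspace $\prp{M_t}$) is absorbed so that no extra closure is needed on the left — is the delicate point. A secondary technical issue is justifying the interchange $\EQt{\bigcup_{Z\in R_s(X)}(\cdots)}{t}=\bigcup_{Z\in R_s(X)}\EQt{(\cdots)}{t}$ up to closure, which relies on the measurable-selection structure of $R_s(X)\in\mathcal{G}(M_s;M_{s,+})$ already set up in \cite{FR12,FR12b}; I would cite that rather than reprove it.
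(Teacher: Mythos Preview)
Your forward direction (multiportfolio time consistency $\Rightarrow$ \eqref{SMP}) is roughly in the spirit of the paper's argument, and you correctly identify that only closedness (not c.u.c.) is needed there. However, the assertion that under multiportfolio time consistency one has the \emph{equality} $V_t^{(\Q,w)}(X)=\cl\EQt{V_s^{(\Q,w')}(X)}{t}$ is unjustified and in fact false in general: Corollary~\ref{cor_mtg} shows that the martingale equality holds only for worst-case dual pairs. The paper obtains only the inclusion \eqref{SMP}, and its derivation (the ``$\Leftarrow$'' part of Lemma~\ref{lemma1}) proceeds via half-space representations of $V_t$ and the one-sided penalty inclusion $\beta_t\subseteq\cl(\beta_{t,s}+\EQt{\beta_s}{t})$ that follows from $A_t\supseteq A_{t,s}+A_s$.

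The converse direction has a genuine gap. Your claim that ``$A_{t,s}+A_s\subseteq A_t$ \ldots\ needs no time consistency'' is incorrect: this inclusion is precisely one half of multiportfolio time consistency (Theorem~\ref{thm_equiv_tc}), so it cannot be assumed when trying to deduce time consistency from \eqref{SMP}. Consequently neither inclusion of the cocycle identity is automatic, and your plan to obtain the cocycle equality by combining one inclusion from \eqref{SMP} with an ``easy'' reverse inclusion collapses. Moreover, specializing \eqref{SMP} to $X\in A_t$ gives $\beta_t(\Q,w)\subseteq\EQt{\cl[R_s(X)+\beta_s(\Q,w')]}{t}$, whose right-hand side involves $R_s(X)$ rather than $\beta_{t,s}$; extracting a cocycle inclusion from this is not straightforward. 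The paper instead proves the converse by passing through two intermediate equivalences (Lemmas~\ref{lemma1} and~\ref{lemma2}) that translate \eqref{SMP} into the pair of acceptance-set conditions \eqref{eq_supermtg-acceptance-1}--\eqref{eq_supermtg-acceptance-2}, and then closes the gap by a separation argument showing $A_t=\cl(A_s+A_{t,s})$. The step you are missing entirely---and which the paper explicitly flags as ``the difficult part of the proof''---is the replacement of $\beta_t$ by the stepped penalty $\beta_{t,s}$ in the dual bound; this relies on the new dual representations of the scalarizations $\rho_t$ and $\rho_{t,s}$ (Proposition~\ref{prop_scalar} and Corollary~\ref{cor_scalar_stepped}), and it is precisely here that c.u.c.\ is genuinely used (via lower semicontinuity of the scalarization, Proposition~\ref{prop_lsc}).
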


Recall from \cite{FR12b} that $\lrcurly{(\Q,w_t^s(\Q,w)) \; | \; (\Q,w) \in \W_t} \subseteq \W_s$ and completely characterizes the dual set $\W_s$ for $t < s$, i.e., for any $(\R,v) \in \W_s$ there exists a $(\Q,w) \in \W_t$ so that for every $X \in \LdpF{}$ it follows that $\lrparen{\EQt{X}{s} + G_s(w_t^s(\Q,w))} \cap M_s = \lrparen{\ERt{X}{s} + G_s(v)} \cap M_s$. As a consequence, the multiportfolio time consistency is equivalent to the supermartingale property of $V_t^{(\Q,w_0^t(\Q,w))}(X)$ holding for all $(\Q,w) \in \W_0$.

Theorem~\ref{thm_supermtg} will be proven with help of the following two lemmas. The proofs of the lemmas can be found in the Appendix.
\begin{lemma}
\label{lemma1}
Under the assumptions of Theorem~\ref{thm_supermtg}, the supermartingale relation of Theorem~\ref{thm_supermtg} holds if and only if the following is satisfied
\begin{align}
    \label{eq_supermtg-1} R_t(X) &\supseteq \bigcup_{Z \in R_s(X)} R_t(-Z)\\
    \label{eq_supermtg-2} R_t(X) &\subseteq \bigcap_{(\Q,w) \in \W_t} \cl \bigcup_{Z \in R_s(X)} \lrsquare{\lrparen{\EQt{Z}{t} + G_t(w)} \cap M_t -^. \beta_{t,s}(\Q,w)}.
\end{align}
\end{lemma}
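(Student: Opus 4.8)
The plan is to prove Lemma~\ref{lemma1}, i.e., that the supermartingale relation \eqref{SMP} is equivalent to the conjunction of \eqref{eq_supermtg-1} and \eqref{eq_supermtg-2}. The first task is to unpack the right-hand side of \eqref{SMP} using the dual representation of Corollary~\ref{cor_dual}, applied to the \emph{stepped} risk measure $R_{t,s}$. The point is that $V_s^{(\Q,w_t^s(\Q,w))}(X) = \cl[R_s(X) + \beta_s(\Q,w_t^s(\Q,w))]$, and by the monotone-translative structure of set-valued risk measures one has $R_t(-Z) = (\EQt{Z}{t} + G_t(w)) \cap M_t -^. \beta_{t,s}(\Q,w)$ intersected over all $(\Q,w)$; more precisely, applying $\EQt{\cdot}{t}$ to $V_s$ and then using the cocycle-type decomposition $\beta_t(\Q,w)$ versus $\beta_{t,s}(\Q,w) + \EQt{\beta_s(\Q,w_t^s(\Q,w))}{t}$ from Theorem~\ref{thm_penalty}, one should be able to rewrite $\cl\EQt{V_s^{(\Q,w_t^s(\Q,w))}(X)}{t}$ as $\cl\bigcup_{Z\in R_s(X)}[(\EQt{Z}{t}+G_t(w))\cap M_t -^. \beta_{t,s}(\Q,w)]$. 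Note this step uses only the definition of the stepped penalty function and linearity of conditional expectation, not multiportfolio time consistency itself.

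Next I would rewrite the left-hand side $V_t^{(\Q,w)}(X) = \cl[R_t(X) + \beta_t(\Q,w)]$ in a comparable form. Using the dual representation \eqref{Eq:dual} for $R_t(X)$ and the fact that $\beta_t$ is the minimal penalty function, adding $\beta_t(\Q,w)$ to $R_t(X)$ and closing amounts to "relaxing" the intersection in \eqref{Eq:dual} to the single dual variable $(\Q,w)$: concretely, $\cl[R_t(X)+\beta_t(\Q,w)] = (\EQt{-X}{t}+G_t(w))\cap M_t$, since the Minkowski subtraction $-^.\beta_t(\Q,w)$ is "undone" by Minkowski addition of $\beta_t(\Q,w)$ up to closure. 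So the inclusion \eqref{SMP} becomes, for each fixed $(\Q,w)\in\W_t$,
\[
(\EQt{-X}{t}+G_t(w))\cap M_t \subseteq \cl\bigcup_{Z\in R_s(X)}\lrsquare{(\EQt{Z}{t}+G_t(w))\cap M_t -^. \beta_{t,s}(\Q,w)}.
\]
Now comes the key observation: intersecting the left side over all $(\Q,w)\in\W_t$ gives exactly $R_t(X)$ by \eqref{Eq:dual}, while intersecting the right side over all $(\Q,w)$ gives precisely the right-hand side of \eqref{eq_supermtg-2}. But \eqref{SMP} is a family of inclusions indexed by $(\Q,w)$, and an intersection of inclusions $A_{(\Q,w)}\subseteq B_{(\Q,w)}$ does not in general imply $\bigcap A \subseteq \bigcap B$ is equivalent to the family — so I need to argue more carefully that \eqref{SMP} for all $(\Q,w)$ is equivalent to \eqref{eq_supermtg-2} together with \eqref{eq_supermtg-1}. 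The forward direction: \eqref{SMP} for all $(\Q,w)$ immediately yields $R_t(X)=\bigcap_{(\Q,w)}[\text{LHS}] \subseteq \bigcap_{(\Q,w)}[\text{RHS}]$, which is \eqref{eq_supermtg-2}; and \eqref{eq_supermtg-1} should follow by taking $(\Q,w)$ toward $\W_t^{\max}$-type limits or directly from the fact that $R_t(-R_s(X)) = \bigcap_{(\Q,w)}\cl\bigcup_{Z\in R_s(X)}[(\EQt{Z}{t}+G_t(w))\cap M_t -^.\beta_{t,s}(\Q,w)]$ via the recursive dual representation — actually \eqref{eq_supermtg-1} is just $R_t(X) \supseteq R_t(-R_s(X))$, one half of recursiveness, which one gets "for free" from monotonicity/translativity without any time consistency assumption (it is the easy inclusion in Theorem~\ref{thm_equiv_tc}).

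For the reverse direction, I would show that \eqref{eq_supermtg-1} and \eqref{eq_supermtg-2} together reconstruct \eqref{SMP} for each individual $(\Q,w)$: starting from $R_t(X)$ contained in the big intersection \eqref{eq_supermtg-2}, drop to a single term to get $R_t(X) \subseteq \cl\bigcup_{Z\in R_s(X)}[(\EQt{Z}{t}+G_t(w))\cap M_t -^.\beta_{t,s}(\Q,w)]$, then add $\beta_t(\Q,w)$ and close, using $\cl[R_t(X)+\beta_t(\Q,w)]$ on the left and the stepped cocycle identity $\beta_t(\Q,w) \supseteq \beta_{t,s}(\Q,w)+\EQt{\beta_s(\Q,w_t^s(\Q,w))}{t}$ (the easy inclusion in the cocycle condition, which holds without time consistency and follows from $A_t \supseteq A_{t,s}+A_s$) on the right to land back at \eqref{SMP}; here \eqref{eq_supermtg-1} is needed to control the interplay between $R_s(X)$ and the set over which $Z$ ranges. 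I expect the main obstacle to be the careful bookkeeping with the Minkowski subtraction $-^.$ under closure and conditional expectation — in particular verifying that $\cl[(A -^. B) + B] \subseteq \cl A$ and the reverse-type manipulations go through in the weak* setting when $p=\infty$, and that $\EQt{\cdot}{t}$ distributes over the relevant closures and unions (which is where the c.u.c.\ assumption, and its weakening to merely closed on one side, enters). The rest is essentially a transcription of the scalar supermartingale argument into the set-valued dual language, but the $-^.$ algebra is genuinely delicate and is the step I would write out in full detail.
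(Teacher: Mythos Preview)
Your proposal rests on several identities that are not true, and it misses the genuinely hard step of the paper's argument.

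First, you claim that $\cl[R_t(X)+\beta_t(\Q,w)] = (\EQt{-X}{t}+G_t(w))\cap M_t$. This is false: only the inclusion $\subseteq$ holds, and equality characterizes the \emph{worst-case} dual pair (cf.\ Corollary~\ref{cor_mtg}). The paper never uses such an equality; instead it writes $V_t^{(\Q,w)}(X)$ as the half-space $\{u\in M_t:\E{\trans{w}u}\ge \inf_{Z\in R_t(X)}\E{\trans{w}Z}+\sup_{Y\in A_t}\E{-\trans{w}\EQt{Y}{t}}\}$ and works with the scalar thresholds directly.

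Second, your identification of $\EQt{V_s^{(\Q,w_t^s(\Q,w))}(X)}{t}$ with $\cl\bigcup_{Z\in R_s(X)}[(\EQt{Z}{t}+G_t(w))\cap M_t -^.\beta_{t,s}(\Q,w)]$ is wrong: $V_s$ is built from $\beta_s$, which involves $A_s$, whereas $\beta_{t,s}$ involves $A_{t,s}$. These are different sets and yield different half-space thresholds. The paper's actual formula for $\EQt{V_s}{t}$ has the $A_s$-term, and the passage from there to the $\beta_{t,s}$-expression in \eqref{eq_supermtg-2} is precisely the hard part of the proof.

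Third, \eqref{eq_supermtg-1} is \emph{not} free from monotonicity and translativity. It is equivalent to $A_t\supseteq A_s+A_{t,s}$ (one half of multiportfolio time consistency), and the paper derives it from the supermartingale relation by first establishing ordinary time consistency via the chain $V_t^{(\Q,w)}(X)\subseteq\EQt{V_s^{(\Q,w_t^s)}(X)}{t}\subseteq(\EQt{-Y}{t}+G_t(w))\cap M_t$ and then intersecting over $(\Q,w)$.

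Fourth, you have the cocycle inclusion backwards: $A_t\supseteq A_s+A_{t,s}$ gives $\beta_t(\Q,w)\subseteq\cl(\beta_{t,s}(\Q,w)+\EQt{\beta_s(\Q,w_t^s)}{t})$, since $\beta_t$ is an intersection over $A_t$ and a larger index set gives a smaller intersection.

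Finally, the step you do not mention at all is the replacement of $\beta_t$ by $\beta_{t,s}$ in deriving \eqref{eq_supermtg-2} from the supermartingale property. A direct manipulation only yields the inclusion with $\beta_t$ in place of $\beta_{t,s}$. The paper bridges this gap via the scalarization duality (Proposition~\ref{prop_scalar} and Corollary~\ref{cor_scalar_stepped}): one shows $R_{t,s}(Z)$ can be written both with $A_{t,s}$ and with $A_t$ in the dual constraints, forcing every $A_t$-constraint to be dominated by some $A_{t,s}$-constraint at the level of infima. This is the substantive content of the lemma and is absent from your sketch.
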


\begin{lemma}
\label{lemma2}
Under the assumptions of Theorem~\ref{thm_supermtg}, \eqref{eq_supermtg-1}, \eqref{eq_supermtg-2} are equivalent to
\begin{align}
    \label{eq_supermtg-acceptance-1} A_t &\supseteq A_s + A_{t,s}\\
    \label{eq_supermtg-acceptance-2} A_t &\subseteq \bigcap_{(\Q,w) \in \W_t} \lrsquare{A_s + \cl\lrparen{A_{t,s} + G_s^M(w_t^s(\Q,w))}}.
\end{align}
\end{lemma}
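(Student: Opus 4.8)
\textbf{Proof proposal for Lemma~\ref{lemma2}.}

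The plan is to prove the two equivalences separately, matching \eqref{eq_supermtg-1} with \eqref{eq_supermtg-acceptance-1} and \eqref{eq_supermtg-2} with \eqref{eq_supermtg-acceptance-2}, using throughout the dictionary between a risk measure and its acceptance set, namely $u \in R_t(X) \iff X + u \in A_t$ (equivalently $A_t = \{X : 0 \in R_t(X)\}$), together with $M_t$-translativity $R_t(X+m_t) = R_t(X) - m_t$ and the identity $A_{t,s} = A_t \cap M_s$ for the stepped acceptance set. First I would handle the ``superset'' direction. Note that $\bigcup_{Z \in R_s(X)} R_t(-Z) \subseteq R_t(X)$ says: whenever $Z \in R_s(X)$ and $u \in R_t(-Z)$, then $u \in R_t(X)$; translating via the acceptance sets, $Z \in R_s(X)$ means $X + Z \in A_s$ and $u \in R_t(-Z)$ means $-Z + u \in A_t$, so the claim reads: $X + Z \in A_s$ and $u - Z \in A_t$ imply $u \in R_t(X)$, i.e.\ $X + u \in A_t$. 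Writing $X + u = (X + Z) + (u - Z)$ with $X + Z \in A_s$ and $u - Z \in M_t \subseteq M_s$ (since $u, Z \in M_t$), and $u - Z \in A_t \cap M_s = A_{t,s}$, one sees this is exactly the assertion $A_s + A_{t,s} \subseteq A_t$. Conversely, given $A_s + A_{t,s} \subseteq A_t$, pick any $W \in A_s$ and $V \in A_{t,s}$; setting $X := W - V \in \LdpF{}$ (note $V \in M_t$), we get $Z := V \in R_s(X)$ is false in general — so instead I would run the argument purely at the level of arbitrary $X$: take $Z \in R_s(X)$ and $u \in R_t(-Z)$, so $X + Z \in A_s$, $u - Z \in A_t \cap M_s = A_{t,s}$, hence $X + u = (X+Z) + (u - Z) \in A_s + A_{t,s} \subseteq A_t$, i.e.\ $u \in R_t(X)$. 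This shows the two inclusions are equivalent in both directions.

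For the second equivalence, I would similarly translate \eqref{eq_supermtg-2}. An element $u$ lies in $\lrparen{\EQt{Z}{t} + G_t(w)} \cap M_t -^. \beta_{t,s}(\Q,w)$ iff $u + \beta_{t,s}(\Q,w) \subseteq \lrparen{\EQt{Z}{t} + G_t(w)} \cap M_t$, and by the definition of the stepped minimal penalty function $\beta_{t,s}(\Q,w) = \bigcap_{Y \in A_{t,s}} (\EQt{-Y}{t} + G_t(w)) \cap M_t$ and the Minkowski-subtraction bookkeeping, one recognizes $\lrparen{\EQt{Z}{t} + G_t(w)} \cap M_t -^. \beta_{t,s}(\Q,w) = R_{t,s}^{\#}(\cdot)$-type expression; more directly, I expect that $\bigcup_{Z \in R_s(X)} \lrsquare{\lrparen{\EQt{Z}{t} + G_t(w)} \cap M_t -^. \beta_{t,s}(\Q,w)}$, after taking closure and intersection over $(\Q,w) \in \W_t$, equals $\lrcurly{u \in M_t : X + u \in \bigcap_{(\Q,w)}[A_s + \cl(A_{t,s} + G_s^M(w_t^s(\Q,w)))]}$. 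The mechanism is: $\EQt{Z}{t} + G_t(w)$ is, via the dual representation of the closed convex stepped risk measure $R_{t,s}$ (Corollary~\ref{cor_dual} applied on $M_s$), the half-space description of acceptability of $Z \in M_s$ at level $(\Q,w)$; pushing the expectation through and using that $\EQt{Z}{t}$ for $Z \in M_s$ relates to $\EQt{\cdot}{t}$ on $M_s$ via the tower property with the updated dual variable $w_t^s(\Q,w)$, the set $\lrparen{\EQt{Z}{t} + G_t(w)} \cap M_t -^. \beta_{t,s}(\Q,w)$ becomes $\{u \in M_t : Z + u \in \cl(A_{t,s} + G_s^M(w_t^s(\Q,w)))\}$ up to the closure. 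Then taking the union over $Z \in R_s(X) = \{Z \in M_s : X + Z \in A_s\}$ and substituting $Z = W - X$ with $W \in A_s$ gives $\{u : \exists W \in A_s,\ (W - X) + u \in \cl(A_{t,s} + G_s^M(w_t^s(\Q,w)))\} = \{u : X + u \in A_s + \cl(A_{t,s} + G_s^M(w_t^s(\Q,w)))\}$ after the closure is taken, and intersecting over $(\Q,w)$ yields \eqref{eq_supermtg-acceptance-2}; since $u \in R_t(X) \iff X + u \in A_t$, the inclusion \eqref{eq_supermtg-2} is exactly \eqref{eq_supermtg-acceptance-2}.

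The main obstacle I anticipate is the careful handling of the topological closures and the interchange of closure with the union over $Z \in R_s(X)$ and the intersection over $(\Q,w) \in \W_t$: one must check that $\cl\bigcup_{Z \in R_s(X)}[\cdots]$ matches the acceptance-set expression with the closure $\cl(A_{t,s} + G_s^M(\cdot))$ placed where it is, and that no spurious closure is needed on the outer $A_s + (\cdots)$ term. I would address this by first establishing the set identity $\cl\bigcup_{Z \in R_s(X)}\lrsquare{\lrparen{\EQt{Z}{t} + G_t(w)} \cap M_t -^. \beta_{t,s}(\Q,w)} = \lrcurly{u \in M_t : X + u \in A_s + \cl(A_{t,s} + G_s^M(w_t^s(\Q,w)))}$ for fixed $(\Q,w)$, which reduces everything to the scalarized stepped setting where Corollary~\ref{cor_dual} and the definition \eqref{min_penalty} of $\beta_{t,s}$ apply, and where the relation $w_t^s(\Q,w) = \diag{w}\xi_{t,s}(\Q)$ together with the tower property for conditional expectations is the computational engine. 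The other routine points — that $u, Z \in M_t$ forces $u - Z \in M_t \cap \LdpF{} \subseteq M_s$, and that $A_{t,s} = A_t \cap M_s$ — are immediate from the definitions in Section~\ref{sec_prelim}.
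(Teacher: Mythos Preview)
Your overall strategy is right, and your translation of \eqref{eq_supermtg-1} into \eqref{eq_supermtg-acceptance-1} via the acceptance-set dictionary is essentially the content of the lemma from \cite{FR12} that the paper cites. One small slip: you repeatedly write $Z \in M_t$, but $Z \in R_s(X) \subseteq M_s$; the argument survives because $u \in M_t \subseteq M_s$ and hence $u - Z \in M_s$, which is what you actually need for $u - Z \in A_t \cap M_s = A_{t,s}$.

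The real gap is in the second equivalence. You correctly identify the closure as the main obstacle, but your proposed resolution---establish the identity $\cl\bigcup_{Z \in R_s(X)}[\cdots] = \{u \in M_t : X+u \in A_s + \cl(A_{t,s}+G_s^M(w_t^s(\Q,w)))\}$ using only Corollary~\ref{cor_dual}, the definition of $\beta_{t,s}$, and the tower property---is not enough. Those tools give you the identity \emph{without} the closure on the left: for each $m \in M_t$, $m \in \bigcup_{Z \in R_s(X)}[\cdots]$ iff $X+m \in A_s + \cl(A_{t,s}+G_s^M(w_t^s(\Q,w)))$. This already yields \eqref{eq_supermtg-acceptance-2} $\Rightarrow$ \eqref{eq_supermtg-2}. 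But for the converse you need to know that this union is already closed (equivalently, that no closure is needed on $A_s + \cl(\cdots)$), and this is precisely where the convex upper continuity of $R_s$ enters, which you never invoke. The paper handles this by viewing $\tilde R_t(X) := \bigcup_{Z \in R_s(X)} \hat R_{t,s}^{(\Q,w)}(-Z)$ as a composed risk measure, computing its acceptance set as $R_s^{-1}[\cl(A_{t,s}+G_s^M(w_t^s(\Q,w)))]$, and then using that c.u.c.\ of $R_s$ makes this preimage closed. Without that step, the implication \eqref{eq_supermtg-2} $\Rightarrow$ \eqref{eq_supermtg-acceptance-2} does not go through.
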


\begin{proof}[Proof of Theorem~\ref{thm_supermtg}]
Using Lemmas~\ref{lemma1} and~\ref{lemma2} it remains to show that \eqref{eq_supermtg-acceptance-1} and \eqref{eq_supermtg-acceptance-2} are equivalent to multiportfolio time consistency.
Clearly, multiportfolio time consistency implies \eqref{eq_supermtg-acceptance-1} and \eqref{eq_supermtg-acceptance-2}, see e.g.\ Theorem~\ref{thm_equiv_tc}.  To prove the converse, let $\seq{R}$ satisfy \eqref{eq_supermtg-acceptance-1} and \eqref{eq_supermtg-acceptance-2}.  
The crucial observation is that 
\[
\lrcurly{w_t^s(\Q,w)\;|\; (\Q,w) \in \W_t} = \lrcurly{\Et{Y}{s} \; | \; Y \in \LdqF{+},\;\; \Et{Y}{t} \not\in \prp{M_t}},
\]
which follows from \cite[Lemma 4.5]{FR12}.
Since $M = \R^m \times \{0\}^{d-m}$, $Y \in \LdqF{+}$ implies $\Et{Y}{t} \in \prp{M_t}$ if and only if $Y \in \prp{M_T}$ for any time $t$.  Thus, one obtains
\begin{align*}
A_t &\subseteq \bigcap_{(\Q,w) \in \W_t} \lrsquare{A_s + \cl\lrparen{A_{t,s} + G_s^M(w_t^s(\Q,w))}}\\
&= \bigcap_{\substack{Y \in \LdqF{+}:\\ \Et{Y}{t}\not\in \prp{M_t}}} \lrsquare{A_s + \cl\lrparen{A_{t,s} + G_s^M(\Et{Y}{s})}}\\
&= \bigcap_{Y \in \LdqF{+}} \lrsquare{A_s + \cl\lrparen{A_{t,s} + G_s^M(\Et{Y}{s})}} \subseteq \bigcap_{Y \in \LdqF{+}} \lrsquare{A_s + \cl\lrparen{A_{t,s} + G_T(Y)}}\\
&\subseteq \bigcap_{Y \in \LdqF{+}} \cl\lrsquare{A_s + A_{t,s} + G_T(Y)} \subseteq \bigcap_{Y \in \LdqF{+}} \cl\lrsquare{\cl\lrparen{A_s + A_{t,s}} + G_T(Y)}\\
&= \cl\lrparen{A_s + A_{t,s}} \subseteq \cl(A_t) = A_t.
\end{align*}
Here, the third line follows from $G_s^M(\Et{Y}{s}) = M_s$ if $\Et{Y}{s} \in \prp{M_s}$ and since 
\[A_s + M_s \supseteq \bigcap_{\substack{Y \in \LdqF{+}:\\ \Et{Y}{s} \not\in \prp{M_s}}} \lrsquare{A_s + \cl\lrparen{A_{t,s} + G_s^M(\Et{Y}{s})}}.\]
The last line in the above sequence of equations and inclusions follows from a separation argument between $\cl(A_s + A_{t,s})$ and $\bigcap_{Y \in \LdpF{+}} \cl[\cl(A_s + A_{t,s}) + G_T(Y)]$ since for any $Y \in \LdqF{+}$
\[\cl\lrsquare{\cl\lrparen{A_s + A_{t,s}} + G_T(Y)} = \lrcurly{X \in \LdpF{} \; | \; \E{\trans{Y}X} \geq \inf_{Z \in \cl\lrparen{A_s + A_{t,s}}} \E{\trans{Y}Z}}.\]
The final inclusion is directly from~\eqref{eq_supermtg-acceptance-1}, and the final equality is from $A_t$ closed by assumption of convex upper continuity.

Therefore, $A_t = \cl\lrparen{A_s + A_{t,s}}$, and by \cite[Lemma~B.4]{FR12b} it follows that $A_s + A_{t,s}$ is closed.  Thus $\seq{R}$ is multiportfolio time consistent by \cite[Theorem 3.4]{FR12}.

The last assertion of the theorem holds by noting that the chain of implications from multiportfolio time consistency to \eqref{eq_supermtg-acceptance-1}, \eqref{eq_supermtg-acceptance-2} to \eqref{eq_supermtg-1}, \eqref{eq_supermtg-2} to the supermartingale property does not use c.u.c.\ in addition to closedness.
\end{proof}

\begin{corollary}
\label{cor_supermtg-coherent}
Let $\seq{R}$ be a normalized c.u.c.\ coherent risk measure.  $\seq{R}$ is multiportfolio time consistent if and only if for all times $0 \leq t < s \leq T$
\[V_t^{(\Q,w)}(X) = \cl\lrsquare{R_t(X) + G_t^M(w)} \subseteq \EQt{V_s^{(\Q,w_t^s(\Q,w))}(X)}{t}\] for every $(\Q,w) \in \W_t^{\max}$ and $X \in \LdpF{}$.
Furthermore, if $\seq{R}$ is a normalized closed, coherent, multiportfolio time consistent risk measure, then the supermartingale property is satisfied.
\end{corollary}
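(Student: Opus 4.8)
The plan is to obtain this as the coherent specialisation of Theorem~\ref{thm_supermtg}, the only genuinely new ingredient being that the minimal penalty function of a coherent risk measure is two-valued. Concretely, I would first show that for a normalized closed coherent risk measure $\seq{R}$ and any $(\Q,w)\in\W_t$ one has $\beta_t(\Q,w)=G_t^M(w)$ when $(\Q,w)\in\W_t^{\max}$ and $\beta_t(\Q,w)=\emptyset$ when $(\Q,w)\in\W_t\setminus\W_t^{\max}$. Since $R_t$ is coherent its acceptance set $A_t$ is a closed convex cone, so $0\in A_t$ and $\lambda Y\in A_t$ for every $Y\in A_t$ and $\lambda\geq 0$. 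Taking $Y=0$ in the formula \eqref{min_penalty} for $\beta_t$ gives $\beta_t(\Q,w)\subseteq G_t^M(w)$. Replacing $Y$ by $\lambda Y$ and letting $\lambda\to\infty$ shows that the intersection defining $\beta_t(\Q,w)$ is empty as soon as $\E{\trans{w}\EQt{-Y}{t}}>0$ for some $Y\in A_t$; pulling the $\Ft{t}$-measurable $w$ inside the conditional expectation gives $\E{\trans{w}\EQt{-Y}{t}}=-\E{\transp{w_t^T(\Q,w)}Y}$, so this occurs precisely when $w_t^T(\Q,w)\notin\plus{A_t}$, i.e.\ precisely when $(\Q,w)\notin\W_t^{\max}$. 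When on the other hand $w_t^T(\Q,w)\in\plus{A_t}$ each set $(\EQt{-Y}{t}+G_t(w))\cap M_t$ already contains $G_t^M(w)$, and hence $\beta_t(\Q,w)=G_t^M(w)$. This is the set-valued counterpart of the classical fact that a scalar coherent risk measure has $\{0,+\infty\}$-valued penalty function.

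Substituting this into $V_t^{(\Q,w)}(X)=\cl[R_t(X)+\beta_t(\Q,w)]$ yields $V_t^{(\Q,w)}(X)=\cl[R_t(X)+G_t^M(w)]$ for $(\Q,w)\in\W_t^{\max}$ and $V_t^{(\Q,w)}(X)=\emptyset$ for $(\Q,w)\in\W_t\setminus\W_t^{\max}$, and the conclusion follows from Theorem~\ref{thm_supermtg}. If $\seq{R}$ is multiportfolio time consistent, then \eqref{SMP} holds for every $(\Q,w)\in\W_t$; restricting to $(\Q,w)\in\W_t^{\max}$ and inserting the identity for $V_t$ above gives exactly the asserted inclusion (its right-hand side being literally that of \eqref{SMP}). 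Conversely, suppose that inclusion holds for all $(\Q,w)\in\W_t^{\max}$ and all $X$; for $(\Q,w)\in\W_t\setminus\W_t^{\max}$ we have $V_t^{(\Q,w)}(X)=\emptyset$, so \eqref{SMP} holds trivially, whence \eqref{SMP} holds for every $(\Q,w)\in\W_t$ and $\seq{R}$ is multiportfolio time consistent by Theorem~\ref{thm_supermtg}. The final assertion of the corollary is obtained the same way, invoking the half of Theorem~\ref{thm_supermtg} that needs only closedness rather than c.u.c.

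The only step that requires real care is the first one, and within it the identification of $\{(\Q,w)\in\W_t:w_t^T(\Q,w)\in\plus{A_t}\}$ with $\W_t^{\max}$; everything else is bookkeeping on top of Theorem~\ref{thm_supermtg}. As a side remark, not needed for the proof itself since the corollary leaves $V_s^{(\Q,w_t^s(\Q,w))}(X)$ unsimplified: when $\seq{R}$ is multiportfolio time consistent one has $A_s\subseteq A_t$ and hence $\plus{A_t}\subseteq\plus{A_s}$, and since $(\Q,w_t^s(\Q,w))\in\W_s$ carries the same time-$T$ density weight $w_t^T(\Q,w)$, it lies in $\W_s^{\max}$; thus the right-hand side likewise takes the coherent form, with $V_s^{(\Q,w_t^s(\Q,w))}(X)=\cl[R_s(X)+G_s^M(w_t^s(\Q,w))]$.
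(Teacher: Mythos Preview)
Your proposal is correct and follows essentially the same approach as the paper's proof: compute the two-valued form of the coherent penalty function and then invoke Theorem~\ref{thm_supermtg}. You supply more detail than the paper (which simply asserts the formula for $\beta_t$ ``as a consequence of coherence''), and your explicit treatment of both directions of the equivalence and of the trivial case $(\Q,w)\notin\W_t^{\max}$ matches the paper exactly.
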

\begin{proof}
This follows from Theorem~\ref{thm_supermtg} by noting that, as a consequence of coherence, \[\beta_t(\Q,w) = \begin{cases}G_t^M(w) &\text{if }(\Q,w) \in \W_t^{\max}\\ \emptyset &\text{else}\end{cases}.\]  
If $(\Q,w) \not\in \W_t^{\max}$, then $V_t^{(\Q,w)}(X) = \emptyset$ (recalling that $R_t(X) + \emptyset = \emptyset$) and the supermartingale property is trivially satisfied. If $(\Q,w) \in \W_t^{\max}$, then $V_t^{(\Q,w)}(X) 
= \cl\lrsquare{R_t(X) + G_t^M(w)}$. 
\end{proof}

\begin{example}
\label{ex:systemic-smtg}
\emph{Aggregation-based risk measure}: Consider again the aggregation-based risk measure constructed in Example~\ref{ex:systemic-defn}. As this is multiportfolio time consistent (see Example~\ref{ex:systemic-mptc}), by Corollary~\ref{cor_supermtg-coherent} and the dual representation of the aggregation-based risk measure provided in Example~\ref{ex:systemic-dual}, 
\begin{align*}
V_t^{(\Q,w)}(X) &:= \cl\lrsquare{R_t^{\Lambda}(X) + G_t(w)}\\
&= \cl\lrcurly{u_R + u_G \; | \; u_R,u_G \in \LdiF{t}, \; \E{\trans{w}u_G} \geq 0, \; \trans{v_k}u_R \geq -\trans{v_k}X \; \forall k = 0,1,...,2^d-1}
\end{align*}
is a set-valued supermartingale for any $(\Q,w) \in \W_t^{\Lambda}$, i.e., $(\Q,w) \in \W_t$ such that $w_t^T(\Q,w) \in \LdpK{1}{}{\plus{\Lambda^{-1}[\R_+]}}$.  In this case, in particular, letting $w = v_k$ for some index $k$ and $\Q \in \mathcal{M}^e$, we have $(\Q,v_k) \in \W_t^{\Lambda}$ and $V_t^{(\Q,v_k)} = \lrcurly{u \in \LdiF{t} \; | \; \E{\trans{v_k}u} \geq \E{\rho_t^{WC}(\trans{v_k}X)}}$ is a $\Q$-supermartingale.  
Note the connection with the scalar result, from \cite[Corollary 4.12]{FP06}, that $\rho_t^{WC}(\trans{v_k}X)$ is a $\Q$-supermartingale for any $\Q \in \mathcal{M}^e$ and any $k$ due to the time consistency of the worst case risk measure.  We will elaborate on this connection in Example~\ref{ex:systemic-smtg-cc} below.
\end{example}

We will now identify those dual variables that make $V_t$ a martingale as the ``worst-case''  dual variables in the dual representation.
Compare to Proposition 1.21 in~\cite{AP10} for the scalar case. 
\begin{corollary}
\label{cor_mtg}
Let $\seq{R}$ be a normalized c.u.c., convex, multiportfolio time consistent risk measure and fix $X \in \LdpF{}$.  $\lrparen{V^{(\Q,w_0^t(\Q,w))}_t(X)}_{t = 0}^T$ is a $\Q-$martingale, i.e.
\[V_t^{(\Q,w_0^t(\Q,w))}(X) = \EQt{V_s^{(\Q,w_0^s(\Q,w))}(X)}{t} \quad \forall 0 \leq t < s \leq T,\]
for any $(\Q,w) \in \W_0$ that satisfy the two conditions $\beta_0(\Q,w) \neq \emptyset$ and
\[\cl\lrsquare{R_0(X) + G_0^M(w)} = \lrparen{\EQ{-X} + G_0(w)} \cap M_0 -^. \beta_0(\Q,w).\]
Additionally, this choice of $(\Q,w)$ is a ``worst-case'' pair of dual variables for $X$ at any time $t$, i.e.,
\[\cl\lrsquare{R_t(X) + G_t^M(w_0^t(\Q,w))} = \lrparen{\EQt{-X}{t} + G_t(w_0^t(\Q,w))} \cap M_t -^. \beta_t(\Q,w_0^t(\Q,w)).\]
If $M = \R^d$ then, conversely, if $\lrparen{V_t^{(\Q,w_0^t(\Q,w))}(X)}_{t = 0}^T$ is a $\Q-$martingale for some $(\Q,w) \in \W_0$ with $\beta_0(\Q,w) \neq \emptyset$, then $(\Q,w)$ is a ``worst-case'' pair of dual variables for $X$ for any time $t$.
\end{corollary}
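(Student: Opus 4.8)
The plan is to leverage Theorem~\ref{thm_supermtg} together with the cocycle (additivity) property of the minimal penalty function $\beta_t$ from Theorem~\ref{thm_penalty}, and to recast everything in terms of the inclusions among acceptance sets / dual descriptions that appear in the proof of Theorem~\ref{thm_supermtg}. First I would observe that, by Theorem~\ref{thm_supermtg}, multiportfolio time consistency already gives the supermartingale inclusion $V_t^{(\Q,w_0^t(\Q,w))}(X)\subseteq\EQt{V_s^{(\Q,w_0^s(\Q,w))}(X)}{t}$ for \emph{all} $(\Q,w)\in\W_0$ and all $0\le t<s\le T$ (using the composition property $w_t^s(\Q,w_0^t(\Q,w))=w_0^s(\Q,w)$ noted just after the theorem). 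So the martingale claim reduces to proving the reverse inclusion under the stated hypotheses, and then the ``worst-case at time $t$'' claim is essentially the statement that the reverse inclusion holds at each intermediate time. Throughout, one uses that $V_t^{(\Q,w)}(X)=\cl[R_t(X)+\beta_t(\Q,w)]$ and that, by Corollary~\ref{cor_dual}, $\cl[R_t(X)+G_t^M(w)]$ is always contained in $(\EQt{-X}{t}+G_t(w))\cap M_t-^.\beta_t(\Q,w)$; the hypotheses say this containment is an equality at time $0$.

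The key steps in order: (1) From $\beta_0(\Q,w)\neq\emptyset$ and Theorem~\ref{thm_penalty} (the cocycle $\beta_t(\Q,w)=\cl(\beta_{t,s}(\Q,w)+\EQt{\beta_s(\Q,w_t^s(\Q,w))}{t})$) deduce $\beta_t(\Q,w_0^t(\Q,w))\neq\emptyset$ for every $t$, so the sets $V_t^{(\Q,w_0^t(\Q,w))}(X)$ are genuinely ``$\cl[R_t(X)+\beta_t]$'' and nonempty. (2) Show that the time-$0$ ``worst-case'' equality propagates: because $\seq{R}$ is multiportfolio time consistent it is recursive (Theorem~\ref{thm_equiv_tc}), and combining recursion $R_0(X)=R_0(-R_t(X))$ with the cocycle for $\beta$ and the definition of the penalty $\beta_0(\Q,w)=\cl(\beta_{0,t}(\Q,w)+\EQt{\beta_t(\Q,w_0^t(\Q,w))}{t})$, the assumed equality at $0$ forces the analogous equality $\cl[R_t(X)+G_t^M(w_0^t(\Q,w))]=(\EQt{-X}{t}+G_t(w_0^t(\Q,w)))\cap M_t-^.\beta_t(\Q,w_0^t(\Q,w))$ at each $t$ --- this is exactly the ``worst-case pair for $X$ at time $t$'' assertion, and it is the technical heart of the argument. (3) Given the worst-case equality at both $t$ and $s$, derive the reverse inclusion $\EQt{V_s^{(\Q,w_0^s(\Q,w))}(X)}{t}\subseteq V_t^{(\Q,w_0^t(\Q,w))}(X)$ by pushing $\EQt{\cdot}{t}$ through the conditional-expectation / half-space algebra: $\EQt{(\ldots+G_s(\cdot))\cap M_s}{t}$ relates to $(\EQt{\ldots}{t}+G_t(\cdot))\cap M_t$ via the projection onto $M$, and the Minkowski-subtraction term produces precisely $\beta_{t,s}$ and, after applying the cocycle, $\beta_t$. (4) For the converse when $M=\R^d$: here $\prp{M_t}=\{0\}$, $G_t^M(w)=G_t(w)$, and there are no stepped-subspace subtleties, so $V_t^{(\Q,w)}(X)=\cl[R_t(X)+G_t(w)]$ coincides with an honest conditional-expectation hedging set; then martingality $V_t=\EQt{V_s}{t}$ at all $s>t$, run down to $s=T$ where $V_T^{(\Q,w_0^T(\Q,w))}(X)=\cl[R_T(X)+\beta_T(\Q,w_0^T(\Q,w))]$ is computable directly, lets one unwind $V_0$ and match it against the dual formula, yielding the worst-case equality at $0$, and by step~(2)'s mechanism at every $t$.

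I would organize the reverse inclusion (step~3) as the analogue of Lemmas~\ref{lemma1}--\ref{lemma2}: translate the set-inclusion $\EQt{V_s}{t}\subseteq V_t$ into an inclusion between $A_t$ and a closure of $A_s+A_{t,s}$-type set, which multiportfolio time consistency supplies as an equality ($A_t=\cl(A_s+A_{t,s})$, shown inside the proof of Theorem~\ref{thm_supermtg}), the worst-case hypothesis being what upgrades the generic one-sided bound into the matching bound on the $\beta$-side. The main obstacle I anticipate is bookkeeping the Minkowski subtraction $-^.$ together with $\EQt{\cdot}{t}$ and the intersection with $M_t$: these operations do not commute freely, and one must use that $w_t^s(\Q,w)\in\LdqF{+}$ (so the relevant cones are well-behaved) and the characterization $G_s^M(\Et{Y}{s})=M_s$ when $\Et{Y}{s}\in\prp{M_s}$, exactly as in the displayed computation in the proof of Theorem~\ref{thm_supermtg}; getting the closure operators in the right places (so that $\cl(R_t(X)+\beta_t)$ rather than a strictly larger set comes out) is the delicate point. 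The restriction $M=\R^d$ in the converse is precisely there to sidestep the possibility that $w_0^t(\Q,w)$ lands in $\prp{M_t}$ partway through, which would break the bijective correspondence between worst-case duals at successive times.
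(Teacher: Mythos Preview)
Your outline is viable but organized quite differently from the paper, and in one place it takes a longer road than necessary. The paper does \emph{not} first propagate the worst-case equality to time $t$ and then deduce the martingale property; instead it introduces the shifted process $U_t := V_t^{(\Q,w_0^t(\Q,w))}(X) + m_t^X$, where $m_t^X \in M_t$ is chosen so that $m_t^X + G_t^M(w_0^t(\Q,w)) = (\EQt{X}{t}+G_t(w_0^t(\Q,w)))\cap M_t$, and proves directly that $U_t = G_t^M(w_0^t(\Q,w))$ for every $t$. The inclusion $U_t \supseteq G_t^M$ is a one-line scalar computation using the supermartingale inclusion $V_0^{(\Q,w)}(X)\subseteq \EQ{V_t^{(\Q,w_0^t(\Q,w))}(X)}$ together with the worst-case hypothesis at time $0$; the inclusion $U_t \subseteq G_t^M$ follows from two short Minkowski-subtraction identities (relating $\cl(A+B+G_t^M(w))$ to $\cl(A+G_t^M(w))-^.(G_t^M(w)-^.B)$, and bounding $\cl(A+G_t^M(w))-^.\cl(B+G_t^M(w))$ by $G_t^M(w)$ when $A\subseteq B$) combined with the generic dual bound $\cl[R_t(X)+G_t^M]\subseteq(\EQt{-X}{t}+G_t)\cap M_t -^. \beta_t$. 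From $U_t=G_t^M$ one reads off $V_t^{(\Q,w_0^t(\Q,w))}(X)=(\EQt{-X}{t}+G_t(w_0^t(\Q,w)))\cap M_t$, which is manifestly a $\Q$-martingale by the tower property, and the worst-case equality at $t$ follows by one more Minkowski identity. Thus martingale and worst-case at $t$ come out simultaneously rather than sequentially.

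Your step~(2) has the same scalar content as the paper's $U_t\supseteq G_t^M$ argument, but the mechanism you name (``recursion $R_0(X)=R_0(-R_t(X))$ plus the cocycle for $\beta$'') is a detour: it is cleaner to scalarize the already-available supermartingale inclusion from $0$ to $t$ and combine it with the worst-case hypothesis at $0$. Once step~(2) is done in this way, your step~(3) becomes essentially trivial --- $V_t$ has the explicit form $(\EQt{-X}{t}+G_t(w_0^t(\Q,w)))\cap M_t$ and the martingale property is just the tower property --- so casting it as ``the analogue of Lemmas~\ref{lemma1}--\ref{lemma2}, translating into acceptance-set inclusions'' is overcomplication. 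Finally, your explanation for the restriction $M=\R^d$ in the converse is not the operative one: the paper uses $M=\R^d$ so that $A_T=R_T(0)$ and one can compute $U_T=\cl[R_T(0)+\beta_T(\Q,w_0^T(\Q,w))]=G_T(w_0^T(\Q,w))$ explicitly at the terminal time; the martingale property then forces $U_t=G_t(w_0^t(\Q,w))$ for all $t$, and the worst-case equality follows. It is not primarily about $w_0^t(\Q,w)$ possibly landing in $\prp{M_t}$.
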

\begin{proof}
See appendix, Section~\ref{sec_proof_cor_mtg}.
\end{proof}

\begin{remark}
The supermartingale relation can be given with the negative conjugates $\seq{-\bar\beta}$ (see~\cite{FR12,FR12b} or Appendix~\ref{appendix_prelim}) though it requires additional considerations due to the fact that $\cl\lrsquare{R_t(X) + G_t^M(w)} -^. (-\bar\beta_t(\Q,w)) \neq \cl\lrsquare{R_t(X) + \beta_t(\Q,w)}$ when $-\bar\beta_t(\Q,w) = M_t$ (or equivalently when $\beta_t(\Q,w) = \emptyset$) and $\cl\lrsquare{R_t(X) + G_t^M(w)} = M_t$.
\end{remark}

\begin{example}\label{ex_entropic}
\emph{Restrictive entropic risk measure}: Consider the full space of eligible assets for all times $t$, i.e., $M_t = \LdpF{t}$.  The restrictive entropic risk measure with parameter $\lambda \in \R^d_{++}$
\[R_t^{ent}(X) = \lrcurly{u \in \LdpF{t} \; | \; \E{1 - \exp\lrparen{-\diag{\lambda}(X + u)}} \in \LdpF{t,+}}\] 
is a normalized c.u.c.\ convex risk measure that is multiportfolio time consistent.  For details see~\cite[Example 3.4 and Section 6.2]{FR12b} and \cite{AHR13}.  
By Theorem~\ref{thm_supermtg} we obtain that, with conditional relative entropy $\hat H_t(\Q | \P) = \EQt{\log\lrparen{\dQdP}}{t}$,
\[V_t^{(\Q,w)}(X) := \cl\lrsquare{R_t^{ent}(X) + \lrparen{\diag{\frac{1}{\lambda}} \hat H_t(\Q | \P) + G_t(w)}}\]
is a set-valued supermartingale for any $(\Q,w) \in \W_t$.
\end{example}
\begin{example}\label{ex_avar}
\emph{Composed average value at risk}: Consider a discrete time setting with the full space of eligible assets for all times $t$, i.e., $M_t = \LdpF{t}$.  The average value at risk $AV@R_t(X)$ (for any parameter $\lambda^t \in \LdiF{t,++}$ bounded away from $0$) defines a normalized c.u.c.\ coherent dynamic risk measure which is not multiportfolio time consistent.  However, the composition of the average value at risk $\widetilde{AV@R}_t(X) := AV@R_t\lrparen{-\widetilde{AV@R}_{t+1}(X)}$ is multiportfolio time consistent.  For details see~\cite[Section 5.2]{FR12}, \cite[Example 5.5 and Section 6.1]{FR12b}, and \cite{HRY12}.  
By Corollary~\ref{cor_supermtg-coherent},
\[V_t^{(\Q,w)}(X) := \cl\lrsquare{\widetilde{AV@R}_t(X) + G_t(w)}\]
is a set-valued supermartingale for any $(\Q,w) \in \widetilde{\W}_t$, where
\begin{align*}
\widetilde{\W}_t &:= \Big\{(\Q,w) \in \W_t \; | \; \forall i \in \{1,...,d\} \forall s \in \{t,...,T-1\}:\\
&\quad\quad\quad\quad \rcurly{\P\lrparen{w_i = 0 \text{ or } \bar\xi_{s,s+1}(\Q_i) \leq \frac{1}{\lambda^t_i}} = 1}.
\end{align*}
\end{example}

\section{Conditional Supermartingale Property}
\label{sec_supermtg-cc}

Now we extend the results of the previous section to the conditional penalty function $\alpha$.  That is, we present a supermartingale property for the set-valued stochastic process
\[ \V_t^{(\Q,w)}(X) := \cl\lrsquare{R_t(X) + \alpha_t(\Q,w)} \]
for c.u.c.\ conditionally convex dynamic risk measures $\seq{R}$.

\begin{corollary}
\label{cor_supermtg-cc}
Let $\seq{R}$ be a normalized c.u.c.\ conditionally convex risk measure with dual representation with equivalent probability measures only, i.e., 
\[R_t(X) = \bigcap_{(\Q,w) \in \W_t^e} \lrsquare{\lrparen{\EQt{-X}{t} + \Gamma_t(w)} \cap M_t -^. \alpha_t(\Q,w)}.\]
Then, $\seq{R}$ is multiportfolio time consistent if and only if for all times $0 \leq t < s \leq T$
\begin{equation}\label{condSMP}
\V_t^{(\Q,w)}(X) \subseteq \cl\EQt{\V_s^{(\Q,w_t^s(\Q,w))}(X)}{t}
\end{equation}
for every $(\Q,w) \in \W_t^e$ and $X \in \LdpF{}$. 
Furthermore, if $\seq{R}$ is a normalized closed, conditionally convex, multiportfolio time consistent risk measure, then the supermartingale property \eqref{condSMP} is satisfied.
\end{corollary}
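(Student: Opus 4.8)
The plan is to mirror the architecture of the proof of Theorem~\ref{thm_supermtg} step by step, replacing the deterministic half-spaces $G_t(\cdot)$ by the conditional half-spaces $\Gamma_t(\cdot)$, the penalty function $\beta$ by $\alpha$, and the dual set $\W_t$ by $\W_t^e$; the restriction to equivalent measures reflects the corollary's standing hypothesis that each $R_t$ admits a dual representation over $\W_t^e$ alone, which is also precisely what Theorem~\ref{thm_cond_penalty} requires. Concretely, one first reduces \eqref{condSMP} to a pair of inclusions for $R_t$, then to a pair of inclusions for the acceptance sets, and finally shows the latter are equivalent to multiportfolio time consistency.

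The first reduction is a conditional analogue of Lemma~\ref{lemma1}: expanding $\cl\EQt{\V_s^{(\Q,w_t^s(\Q,w))}(X)}{t}$ via the dual representation of $R_s$ and $M_t$-translativity, the relation \eqref{condSMP} for all $(\Q,w)\in\W_t^e$ and all $X\in\LdpF{}$ is equivalent to $R_t(X)\supseteq\bigcup_{Z\in R_s(X)}R_t(-Z)$ together with
\[R_t(X)\subseteq\bigcap_{(\Q,w)\in\W_t^e}\cl\bigcup_{Z\in R_s(X)}\lrsquare{\lrparen{\EQt{Z}{t}+\Gamma_t(w)}\cap M_t-^.\alpha_{t,s}(\Q,w)}.\]
The outer closure present in \eqref{condSMP} but absent from the unconditional relation \eqref{SMP} is exactly what makes this reduction valid: each $\Gamma_s(w_t^s(\Q,w))$ is a random half-space, so $\EQt{\cdot}{t}$ does not preserve closedness of $\V_s^{(\Q,w_t^s(\Q,w))}(X)$, whereas in Theorem~\ref{thm_supermtg} the rigid half-space $G_s(w)$ survives the conditional expectation intact. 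The second reduction is a conditional analogue of Lemma~\ref{lemma2}: using $R_t(X)=\lrcurly{u\in M_t\;|\;X+u\in A_t}$ and $A_{t,s}=A_t\cap M_s$ exactly as there, the two inclusions above become $A_t\supseteq A_s+A_{t,s}$ together with
\[A_t\subseteq\bigcap_{(\Q,w)\in\W_t^e}\lrsquare{A_s+\cl\lrparen{A_{t,s}+\Gamma_s^M(w_t^s(\Q,w))}}.\]

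It then remains to show these two acceptance-set conditions are equivalent to multiportfolio time consistency. That multiportfolio time consistency implies them is immediate from Theorem~\ref{thm_equiv_tc}. For the converse I would use the description of $\lrcurly{w_t^s(\Q,w)\;|\;(\Q,w)\in\W_t^e}$ coming from \cite[Lemma~4.5]{FR12}, now reflecting the equivalent-measures restriction (the relevant Radon--Nikodym factors $\xi_{t,s}(\Q)$ are strictly positive, and directions lying in $\prp{M_s}$, which give $\Gamma_s^M(\cdot)=M_s$, can be discarded), together with $M=\R^m\times\{0\}^{d-m}$, to run the chain of inclusions and closures from the displayed computation in the proof of Theorem~\ref{thm_supermtg} and arrive at $A_t\subseteq\cl(A_s+A_{t,s})$. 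Combined with the first inclusion this yields $A_t=\cl(A_s+A_{t,s})$; since $A_t$ is closed ($\seq{R}$ being closed), \cite[Lemma~B.4]{FR12b} then shows $A_s+A_{t,s}$ is closed, so multiportfolio time consistency follows from Theorem~\ref{thm_equiv_tc}. The weakened ``furthermore'' assertion follows because this forward chain can be run with the dual representation over all of $\W_t$---automatic for a closed conditionally convex risk measure by Corollary~\ref{cor_cond_dual}---and uses only closedness, not convex upper continuity; restricting the resulting relation to $(\Q,w)\in\W_t^e$ gives \eqref{condSMP}. Alternatively, the ``furthermore'' direction is obtained directly from the recursion of Theorem~\ref{thm_equiv_tc} together with the $\alpha$-cocycle of Theorem~\ref{thm_cond_penalty}.

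I expect the main obstacle to be the separation step in the last part. In the unconditional proof one enlarges $\cl(A_{t,s}+G_s^M(\Et{Y}{s}))$ to the deterministic half-space $\cl(A_{t,s}+G_T(Y))$ and then uses that the intersection of all closed affine half-spaces containing the closed, convex, $\LdpF{+}$-monotone set $\cl(A_s+A_{t,s})$ equals that set itself. Here $\Gamma_s^M(\Et{Y}{s})$ contains $\Gamma_T(Y)\cap M_s$ rather than being contained in $\Gamma_T(Y)$, so that enlargement is not available verbatim; one must instead argue with the conditional half-spaces directly, replacing the scalar separating constants by conditional essential infima and invoking a conditional separation argument whose applicability rests on the equivalent-measures restriction supplying enough strictly positive directions $w_t^s(\Q,w)$. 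Keeping careful track of these conditional essential infima, and handling the weak* topology when $p=\infty$, is where the bulk of the work lies---this is the same circle of ideas already underlying the (``much more involved'') proofs of Corollary~\ref{cor_cond_dual} and Theorem~\ref{thm_cond_penalty}.
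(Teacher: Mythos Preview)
Your overall architecture---reduce \eqref{condSMP} to the pair of inclusions for $R_t$, then to the pair of acceptance-set inclusions, then to multiportfolio time consistency---matches the paper exactly. The divergence is in the last step, where you propose to rerun the separation argument of Theorem~\ref{thm_supermtg} with $\Gamma$ in place of $G$ and anticipate (correctly) that this is delicate.

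The paper avoids this entirely. Once you have
\[
A_t \subseteq \bigcap_{(\Q,w)\in\W_t}\lrsquare{A_s+\cl\lrparen{A_{t,s}+\Gamma_s^M(w_t^s(\Q,w))}},
\]
you simply observe that $\Gamma_s^M(v)\subseteq G_s^M(v)$ for every $v$, so this inclusion immediately implies the \emph{unconditional} inclusion~\eqref{eq_supermtg-acceptance-2}. Together with~\eqref{eq_supermtg-acceptance-1} (which is the same as your first acceptance-set inclusion), multiportfolio time consistency now follows from the already-proven Theorem~\ref{thm_supermtg} via Lemmas~\ref{lemma1} and~\ref{lemma2}. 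No conditional separation, no essential-infimum bookkeeping, no weak*-topology subtleties are needed here: the hard work was done once in the unconditional case and is reused.

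So the obstacle you flag is real for the route you chose, but the paper sidesteps it with a one-line reduction. Your approach would presumably succeed with enough care, but it duplicates effort; the paper's reduction is both shorter and cleaner. For the ``furthermore'' clause your remarks are fine and consistent with the paper's treatment.
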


\begin{corollary}
\label{cor_supermtg-cc-coherent}
Let $\seq{R}$ be a normalized c.u.c.\ conditionally coherent risk measure with dual representation with equivalent probability measures only, i.e., 
\[R_t(X) = \bigcap_{(\Q,w) \in \W_t^{e,\max}} \lrparen{\EQt{-X}{t} + \Gamma_t(w)} \cap M_t\]
where $\W_t^{e,\max} := \W_t^{\max} \cap \W_t^e$.
$\seq{R}$ is multiportfolio time consistent if and only if for all times $0 \leq t < s \leq T$ 
\[\V_t^{(\Q,w)}(X) = \cl\lrsquare{R_t(X) + \Gamma_t^M(w)} \subseteq \EQt{\V_s^{(\Q,w_t^s(\Q,w))}(X)}{t}\] for every $(\Q,w) \in \W_t^{e,\max}$. 
Furthermore, if $\seq{R}$ is a normalized closed, conditionally coherent, multiportfolio time consistent risk measure, then the supermartingale property is satisfied.
\end{corollary}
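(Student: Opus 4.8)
The plan is to deduce the statement from Corollary~\ref{cor_supermtg-cc}, exactly the way Corollary~\ref{cor_supermtg-coherent} is deduced from Theorem~\ref{thm_supermtg}: conditional positive homogeneity forces the conditional penalty function to be ``all or nothing.''  First I would show that $\alpha_t(\Q,w) = \Gamma_t^M(w)$ whenever $(\Q,w) \in \W_t^{\max}$ and $\alpha_t(\Q,w) = \emptyset$ otherwise.  From \eqref{cond_min_penalty}, an element $u \in M_t$ lies in $\lrparen{\EQt{-Y}{t} + \Gamma_t(w)} \cap M_t$ iff $\trans{w}u \geq -\Et{\transp{w_t^T(\Q,w)}Y}{t}$, and since conditional positive homogeneity makes $A_t$ a conditional cone (stable under multiplication by $L^\infty_t(\R_{++})$, and, after invoking closedness and pasting, by all nonnegative $\Ft{t}$-measurable scalars), the family $\lrcurly{\Et{\transp{w_t^T(\Q,w)}Y}{t} \;|\; Y \in A_t}$ is conditionally bounded below by $0$ (attained at $Y = 0$) precisely when $w_t^T(\Q,w) \in \plus{A_t}$, i.e.\ $(\Q,w) \in \W_t^{\max}$, in which case the intersection in \eqref{cond_min_penalty} collapses to $\Gamma_t^M(w)$, and is unbounded below on a set of positive probability otherwise, making the intersection empty.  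In particular $\V_t^{(\Q,w)}(X) = \cl\lrsquare{R_t(X) + \Gamma_t^M(w)}$ for $(\Q,w) \in \W_t^{e,\max}$, while $\V_t^{(\Q,w)}(X) = \emptyset$ for $(\Q,w) \in \W_t^e \setminus \W_t^{e,\max}$.

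Next I would check that Corollary~\ref{cor_supermtg-cc} applies.  A conditionally coherent risk measure is conditionally convex, and the dual representation over all of $\W_t^e$ follows from the given one over $\W_t^{e,\max}$, the first step, and the Minkowski-subtraction identities $\lrparen{\lrparen{\EQt{-X}{t} + \Gamma_t(w)} \cap M_t} -^. \Gamma_t^M(w) = \lrparen{\EQt{-X}{t} + \Gamma_t(w)} \cap M_t$ (re-adding the recession cone of a $w$-halfspace) and $D -^. \emptyset = M_t$, so that the intersection over $\W_t^e$ reproduces $R_t(X)$.  The backward implication is then immediate: assuming the claimed inclusion for all $(\Q,w) \in \W_t^{e,\max}$, the conditionally convex relation $\V_t^{(\Q,w)}(X) \subseteq \cl\EQt{\V_s^{(\Q,w_t^s(\Q,w))}(X)}{t}$ of Corollary~\ref{cor_supermtg-cc} holds for those pairs a fortiori, and trivially for $(\Q,w) \in \W_t^e \setminus \W_t^{e,\max}$ since then $\V_t^{(\Q,w)}(X) = \emptyset$; hence $\seq{R}$ is multiportfolio time consistent.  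For the forward implication (which also supplies the closed-risk-measure ``furthermore'' assertion, since the entire reduction uses only closedness of $\seq{R}$), multiportfolio time consistency gives $A_s \subseteq A_t$ from $A_t = A_{t,s} + A_s$ and $0 \in A_{t,s}$, hence $\plus{A_t} \subseteq \plus{A_s}$; combined with $w_s^T(\Q,w_t^s(\Q,w)) = w_t^T(\Q,w)$ this yields $(\Q,w) \in \W_t^{e,\max} \Rightarrow (\Q,w_t^s(\Q,w)) \in \W_s^{e,\max}$, so Corollary~\ref{cor_supermtg-cc} supplies $\cl\lrsquare{R_t(X) + \Gamma_t^M(w)} \subseteq \cl\EQt{\cl\lrsquare{R_s(X) + \Gamma_s^M(w_t^s(\Q,w))}}{t}$.

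The remaining, and main, step is to discard that outer closure, i.e.\ to show $\EQt{\cl\lrsquare{R_s(X) + \Gamma_s^M(v)}}{t}$ is already closed for $v := w_t^s(\Q,w)$.  I would argue that $\cl\lrsquare{R_s(X) + \Gamma_s^M(v)}$ is itself a conditional halfspace in $M_s$: since $R_s(X)$ is a closed conditionally convex subset of $M_s$, the family $\lrcurly{\trans{v}Z \;|\; Z \in R_s(X)}$ is downward directed (by $\Ft{s}$-measurable pasting), so this set equals $\lrcurly{u \in M_s \;|\; \trans{v}u \geq \gamma_s}$ with $\gamma_s := \essinf_{Z \in R_s(X)} \trans{v}Z$ an $\Ft{s}$-measurable, possibly improper, bound.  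Any selector $Z$ of this halfspace then satisfies $\trans{w}\EQt{Z}{t} = \Et{\trans{v}Z}{t} \geq \Et{\gamma_s}{t}$, and conversely every $u' \in M_t$ with $\trans{w}u' \geq \Et{\gamma_s}{t}$ is realized as $\EQt{Z}{t}$ for a suitable selector $Z$; hence $\EQt{\cl\lrsquare{R_s(X) + \Gamma_s^M(v)}}{t} = \lrcurly{u' \in M_t \;|\; \trans{w}u' \geq \Et{\gamma_s}{t}}$, a conditional halfspace in $M_t$, which is closed.  The degenerate cases ($R_s(X) = \emptyset$, $v \in \prp{M_s}$, $\gamma_s \equiv -\infty$) and the integrability of $\gamma_s$ require separate but routine attention.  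This halfspace structure is precisely why the outer closure present in Corollary~\ref{cor_supermtg-cc} can be dropped in the conditionally coherent case, mirroring its absence in Corollary~\ref{cor_supermtg-coherent}, and it is the part of the argument I expect to require the most care.
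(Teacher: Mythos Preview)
Your approach is essentially the paper's: both deduce the result from Corollary~\ref{cor_supermtg-cc} via the observation that conditional coherence forces
\[
\alpha_t(\Q,w) = \begin{cases}\Gamma_t^M(w) &\text{if } (\Q,w) \in \W_t^{\max},\\ \emptyset &\text{else},\end{cases}
\]
so that $\V_t^{(\Q,w)}(X) = \cl\lrsquare{R_t(X) + \Gamma_t^M(w)}$ on $\W_t^{\max}$ and is empty otherwise. The paper's proof is just these two lines.

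You actually go further than the paper in two places. First, you verify that $(\Q,w) \in \W_t^{e,\max}$ forces $(\Q,w_t^s(\Q,w)) \in \W_s^{e,\max}$ under multiportfolio time consistency, which is needed to know the time-$s$ penalty is $\Gamma_s^M$ rather than empty; the paper does not comment on this. Second, and more substantially, you notice that the statement of Corollary~\ref{cor_supermtg-cc-coherent} has \emph{no} outer closure on $\EQt{\V_s^{(\Q,w_t^s(\Q,w))}(X)}{t}$, whereas Corollary~\ref{cor_supermtg-cc} carries one, and you supply an argument (the conditional-halfspace structure of $\cl[R_s(X)+\Gamma_s^M(v)]$ and its preservation under $\EQt{\cdot}{t}$) for why it can be dropped in the coherent case. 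The paper's proof does not address this discrepancy at all. Your halfspace argument is the right idea and parallels the computation carried out inside the proof of Corollary~\ref{cor_supermtg-cc}; the integrability caveats you flag are genuine but, as you say, routine.
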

\begin{proof}
This follows from Corollary~\ref{cor_supermtg-cc} since
\[\alpha_t(\Q,w) = \begin{cases}\Gamma_t^M(w) &\text{if } (\Q,w) \in \W_t^{\max} \\ \emptyset & \text{else}\end{cases}\] 
and thus $\V_t^{(\Q,w)}(X) 
= \cl\lrsquare{R_t(X) + \Gamma_t^M(w)}$ for any $(\Q,w) \in \W_t^{\max}$.  
\end{proof}

\begin{example}
\label{ex:systemic-smtg-cc}
\emph{Aggregation-based risk measure}: Consider again the aggregation-based risk measure constructed in Example~\ref{ex:systemic-defn}. As this is multiportfolio time consistent (see Example~\ref{ex:systemic-mptc}), by Corollary~\ref{cor_supermtg-coherent} and the dual representation of the aggregation-based risk measure provided in Example~\ref{ex:systemic-dual}, 
\begin{align*}
\V_t^{(\Q,w)}(X) &:= \cl\lrsquare{R_t^{\Lambda}(X) + \Gamma_t(w)}\\
&= \cl\lrcurly{u_R + u_{\Gamma} \; | \; u_R,u_{\Gamma} \in \LdiF{t}, \; \trans{w}u_{\Gamma} \geq 0, \; \trans{v_k}u_R \geq -\trans{v_k}X \; \forall k = 0,1,...,2^d-1}
\end{align*}
is a set-valued supermartingale for any $(\Q,w) \in \W_t^{\Lambda}$, i.e., $(\Q,w) \in \W_t$ such that $w_t^T(\Q,w) \in \LdpK{1}{}{\plus{\Lambda^{-1}[\R_+]}}$.  In this case, in particular, letting $w = v_k$ for some index $k$ and $\Q \in \mathcal{M}^e$, we have $(\Q,v_k) \in \W_t^{\Lambda}$ and $\V_t^{(\Q,v_k)} = \lrcurly{u \in \LdiF{t} \; | \; \trans{v_k}u \geq \rho_t^{WC}(\trans{v_k}X)}$ is a $\Q$-supermartingale.  This provides a direct comparison between the supermartingale property for the (time consistent) scalar worst case risk measure and the set-valued aggregation-based risk measure considered here.  That is, the set-valued supermartingale property implies the supermartingale property of the underlying scalar risk measure, but the converse is not necessarily true, i.e., more information is provided by the set-valued supermartinale property than just the individual scalar property as it is over every $(\Q,w) \in \W_t^{\Lambda}$ without the restriction of $w \in \{v_k \; | \; k = 0,1,...,2^d-1\}$.
\end{example}

Again, we can characterize the dual variables under which $\V_t$ is a martingale as the ``worst-case'' dual variables.
\begin{corollary}
\label{cor_mtg-cc}
Let $\seq{R}$ be a normalized c.u.c., conditionally convex, multiportfolio time consistent risk measure with dual representation with equivalent probability measures only, i.e.,
\[R_t(X) = \bigcap_{(\Q,w) \in \W_t^e} \lrsquare{\lrparen{\EQt{-X}{t} + \Gamma_t(w)} \cap M_t -^. \alpha_t(\Q,w)}.\]
Fix some $X \in \LdpF{}$.  For any $(\Q,w) \in \W_0^e$ satisfying $\alpha_0(\Q,w) \neq \emptyset$ and
\[\cl\lrsquare{R_0(X) + \Gamma_0^M(w)} = \lrparen{\EQ{-X} + \Gamma_0(w)} \cap M_0 -^. \alpha_0(\Q,w),\]
the stochastic process $\lrparen{V^{(\Q,w_0^t(\Q,w))}_t(X)}_{t = 0}^T$ is a $\Q-$martingale, i.e.
\[\V_t^{(\Q,w_0^t(\Q,w))}(X) = \cl\EQt{\V_s^{(\Q,w_0^s(\Q,w))}(X)}{t} \quad \forall 0 \leq t < s \leq T.\]
Additionally, this choice of $(\Q,w)$ is a ``worst-case'' pair of dual variables for $X$ at any time $t$, i.e.,
\[\cl\lrsquare{R_t(X) + \Gamma_t^M(w_0^t(\Q,w))} = \lrparen{\EQt{-X}{t} + \Gamma_t(w_0^t(\Q,w))} \cap M_t -^. \alpha_t(\Q,w_0^t(\Q,w)).\]
If $M = \R^d$ then, conversely, if $\lrparen{\V_t^{(\Q,w_0^t(\Q,w))}(X)}_{t = 0}^T$ is a $\Q-$martingale for some $(\Q,w) \in \W_0^e$ with $\alpha_0(\Q,w) \neq \emptyset$ then $(\Q,w)$ is a ``worst-case'' pair of dual variables for $X$ for any time $t$.
\end{corollary}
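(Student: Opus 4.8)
The plan is to run the proof of Corollary~\ref{cor_mtg} with the conditional data $\Gamma_t$, $\alpha_t$ and Corollary~\ref{cor_supermtg-cc} taking the roles of $G_t$, $\beta_t$ and Theorem~\ref{thm_supermtg}, everything inside $\W_t^e$ (consistent, since $w_0^t(\Q,w)=\diag{w}\xi_{0,t}(\Q)$ with $\Q\in\mathcal{M}^e$ keeps the underlying measure equivalent to $\P$). First I would record the cocycle $w_0^s(\Q,w)=w_t^s(\Q,w_0^t(\Q,w))$, which follows from $\xi_{0,s}(\Q)=\diag{\xi_{0,t}(\Q)}\xi_{t,s}(\Q)$; it makes $\lrparen{\V_t^{(\Q,w_0^t(\Q,w))}(X)}_{t=0}^T$ a well-defined process along which Corollary~\ref{cor_supermtg-cc} gives $\V_t^{(\Q,w_0^t(\Q,w))}(X)\subseteq\cl\EQt{\V_s^{(\Q,w_0^s(\Q,w))}(X)}{t}$ for $0\le t<s\le T$. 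Iterating this with the tower property of $\cl\EQt{\cdot}{t}$ yields the chain
\[\V_0^{(\Q,w)}(X)\subseteq\cl\EQt{\V_t^{(\Q,w_0^t(\Q,w))}(X)}{0}\subseteq\cdots\subseteq\cl\EQt{\V_T^{(\Q,w_0^T(\Q,w))}(X)}{0}.\]

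The core of the argument is to show this chain collapses to equalities. Set $H_t:=\lrparen{\EQt{-X}{t}+\Gamma_t(w_0^t(\Q,w))}\cap M_t$, a conditional half-space with normal $w_0^t(\Q,w)$. By \eqref{cond_min_penalty}, $\alpha_t(\Q,w)$ is, whenever nonempty, a conditional half-space with normal $w$, so $\alpha_t(\Q,w)+\Gamma_t^M(w)=\alpha_t(\Q,w)$ and $\bigl(H_t -^. \alpha_t(\Q,w_0^t(\Q,w))\bigr) + \alpha_t(\Q,w_0^t(\Q,w))=H_t$; hence adding $\alpha_0(\Q,w)$ to both sides of the worst-case hypothesis and closing rewrites it as $\V_0^{(\Q,w)}(X)=H_0$. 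Next, the term of the dual representation of Corollary~\ref{cor_cond_dual} indexed by $(\Q,w_0^t(\Q,w))\in\W_t^e$ gives $R_t(X)+\alpha_t(\Q,w_0^t(\Q,w))\subseteq H_t$, hence $\V_t^{(\Q,w_0^t(\Q,w))}(X)\subseteq H_t$ for every $t$; and the scalarization identity $\cl\EQt{H_s}{t}=H_t$ (a consequence of the dual representations of conditional scalarizations established in the Appendix, using $\Et{w_0^s(\Q,w)}{t}=w_0^t(\Q,w)$ and the tower property) gives in particular $\cl\EQt{H_T}{0}=H_0$. Feeding $\V_0^{(\Q,w)}(X)=H_0$, $\V_T^{(\Q,w_0^T(\Q,w))}(X)\subseteq H_T$ and $\cl\EQt{H_T}{0}=H_0$ into the chain forces every inclusion to be an equality, so $\cl\EQt{\V_t^{(\Q,w_0^t(\Q,w))}(X)}{0}=H_0$ for all $t$.

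From here I would read off both assertions. Each $\V_t^{(\Q,w_0^t(\Q,w))}(X)=\cl\lrsquare{R_t(X)+\alpha_t(\Q,w_0^t(\Q,w))}$, being the closed Minkowski sum of an upper set with a conditional half-space, is itself a conditional half-space with normal $w_0^t(\Q,w)$; it lies inside $H_t$ and shares its $\cl\EQt{\cdot}{0}$, so comparing the $\Ft{t}$-measurable offsets and using that a nonnegative random variable with vanishing $\Q$-conditional expectation given $\Ft{0}$ is a.s.\ zero forces $\V_t^{(\Q,w_0^t(\Q,w))}(X)=H_t$. Subtracting $\alpha_t(\Q,w_0^t(\Q,w))$ then gives the worst-case identity at time $t$, and $\cl\EQt{\V_s^{(\Q,w_0^s(\Q,w))}(X)}{t}=\cl\EQt{H_s}{t}=H_t=\V_t^{(\Q,w_0^t(\Q,w))}(X)$ gives the martingale property. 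For the converse, the hypothesis $M=\R^d$ makes $\Gamma_t^M=\Gamma_t$, $M_t=\LdpF{t}$, removes the intersections with $M_t$ from the Minkowski subtractions, and makes the terminal value coincide with the half-space, $\V_T^{(\Q,w_0^T(\Q,w))}(X)=H_T$ (as $\EQt{-X}{T}=-X$ and $\alpha_T=\Gamma_T^M$); then $\V_t^{(\Q,w_0^t(\Q,w))}(X)$ is a conditional half-space whose tightness in the dual representation is precisely the worst-case property, the martingale hypothesis gives $\V_0^{(\Q,w)}(X)\supseteq\cl\EQt{\V_T^{(\Q,w_0^T(\Q,w))}(X)}{0}=\cl\EQt{H_T}{0}=H_0$, hence $\V_0^{(\Q,w)}(X)=H_0$, i.e.\ $(\Q,w)$ is worst-case; nonemptiness of $\alpha_t(\Q,w_0^t(\Q,w))$ throughout follows from $\alpha_0(\Q,w)\neq\emptyset$ via the penalty cocycle of Theorem~\ref{thm_cond_penalty}.

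The hard part will be the collapse of the chain, i.e.\ the bookkeeping of the Minkowski subtraction $-^.$, the closures, and the intersections with $M_t$ while pushing conditional half-spaces through $\cl\EQt{\cdot}{0}$, and especially the endpoint identity $\cl\EQt{H_T}{0}=H_0$. This replaces, in the set-valued setting, the scalar observation that a supermartingale with constant expectation is a martingale (equivalently, that an optimal measure stays optimal), cf.\ Proposition~1.21 in~\cite{AP10}; one must also carry nonemptiness through the chain (guaranteed at time $0$ by $\alpha_0(\Q,w)\neq\emptyset$) and tolerate a non-trivial $\Ft{0}$, which is why the conclusions are stated conditionally.
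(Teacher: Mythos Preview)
Your proof is correct and follows essentially the same route as the paper: both reduce to showing $\V_t^{(\Q,w_0^t(\Q,w))}(X)=H_t$ (equivalently, in the paper's shifted notation, $\U_t:=\V_t+m_t^X=\Gamma_t^M(w_0^t(\Q,w))$) by combining the inclusion $\V_t\subseteq H_t$ from the dual representation with the ``supermartingale with equal time-$0$ value'' sandwich $H_0=\V_0\subseteq\cl\EQ{\V_t}\subseteq\cl\EQ{H_t}=H_0$ and the fact that two conditional half-spaces with the same normal and the same $\cl\EQ{\cdot}$ must coincide. Your derivation of $\V_t\subseteq H_t$ directly from $R_t(X)\subseteq H_t-^.\alpha_t$ is in fact a bit more economical than the paper's detour through the Minkowski-subtraction identities of Proposition~\ref{prop_gamma_subtract}, but the overall architecture is the same.
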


\begin{example}\label{ex_shp}
\emph{Superhedging}: Consider a discrete time setting with the full space of eligible assets for all times $t$, i.e., $M_t = \LdpF{t}$.  Define a market with convex transaction costs described by convex solvency regions $\seq{K}$ with \[\interior\recc{K_t[\omega]} \supseteq \R^d_+\backslash\{0\}\] almost surely, where $\recc C$ denotes the recession cone of a convex set $C\subseteq \R^d$.  The set of superhedging portfolios $SHP_t(X)$ at time $t$ denotes those eligible portfolios that can be traded from time $t$ to the terminal time $T$ to outperform the input portfolio $X \in \LdpF{}$.  From this formulation we can immediately define a closed, conditionally convex multivariate risk measure $R_t(X) := SHP_t(-X)$ which is multiportfolio time consistent.  For details see~\cite[Example 5.4]{FR12b}.  From the constraint on the interior of the solvency regions, the penalty function is only defined on the set of equivalent probability measures.
Though $SHP_t$ is not normalized in general, we may still apply the results of this section as the summation of acceptance sets and penalty function representations holds via a composition backwards in time (see \cite[Section~5]{FR12b}) and thus all proofs follow accordingly. By Corollary~\ref{cor_supermtg-cc} we obtain that
\[\V_t^{(\Q,w)}(X) := \cl\lrsquare{SHP_t(-X) + \alpha_t^{SHP}(\Q,w)}\]
defines a set-valued supermartingale for any $(\Q,w) \in \W_t^e$, where the penalty function is given by
\[\alpha_t^{SHP}(\Q,w) := \sum_{s = t}^T \lrcurly{u \in \LdpF{t} \; | \; \esssup_{k_s \in \LdpK{p}{s}{K_s}} -\trans{w}\EQt{k_s}{t} \leq \trans{w}u }.\]
In the special case that the market has proportional transaction costs only, i.e., the market is defined by a sequence of solvency cones $\seq{K}$, then the set of superhedging portfolios defines a normalized, closed, conditionally coherent risk measure which is multiportfolio time consistent.  For details see~\cite[Section 5.1]{FR12}, \cite[Example 4.7]{FR12b}, and \cite{LR11}.  Then, by Corollary~\ref{cor_supermtg-cc-coherent} we obtain that
\[\V_t^{(\Q,w)}(X) := \cl\lrsquare{SHP_t(-X) + \Gamma_t(w)}\]
defines a set-valued supermartingale for any $(\Q,w) \in \W_{\{t,...,T\}}$, where
\[\W_{\{t,...,T\}} := \lrcurly{(\Q,w) \in \W_t^e \; | \; w_t^s(\Q,w) \in \LdpK{q}{s}{\plus{K_s}} \forall s \in \{t,...,T\}}.\]
Similarly we could apply Theorem~\ref{thm_supermtg} (Corollary~\ref{cor_supermtg-coherent} in the case of proportional transaction costs), which yields that $V_t^{(\Q,w)}(X)$ is a set-valued supermartingale as well.
\end{example}

\appendix
\section{Dual representations for (conditional) scalarizations of set-valued risk measures}\label{appendix_scalar}
The proofs of the main results of Sections~\ref{sec_supermtg} and \ref{sec_supermtg-cc} will use the following results on dual representations for (conditional) scalarizations of set-valued risk measures, which are of independent interest.

\subsection{Set-valued scalarization}
\begin{proposition}
\label{prop_scalar}
Let $R_t$ be a c.u.c.\ and convex risk measure. Consider $w \in \plus{\recc{R_t(0)}}\backslash \prp{M_t}$, i.e. $\E{\trans{w}u} \geq 0$ for every $u \in \recc{R_t(0)} := \{m \in M_t \; | \; R_t(0) + m \subseteq R_t(0)\}$ and there exists some $m \in M_t$ where $\E{\trans{w}m} \neq 0$. Then, for every $X \in \LdpF{}$  the following holds
\begin{equation}
\label{dualsc}
\rho_t(X) :=\inf_{u \in R_t(X)} \E{\trans{w}u} = \sup_{(\Q,m_{\perp}) \in \W_t(w)} \inf_{Y \in A_t} \E{\transp{w + m_{\perp}}\EQt{Y - X}{t}},
\end{equation}
where $\W_t(w) := \lrcurly{(\Q,m_{\perp}) \in \mathcal{M} \times \prp{M_t} \; | \; (\Q,w + m_{\perp}) \in \W_t}$.
\end{proposition}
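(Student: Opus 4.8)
The plan is to treat the scalarization $\rho_t$ as an ordinary scalar convex, $\LdpF{+}$-antitone, $M_t$-translative functional on $\LdpF{}$, apply Fenchel--Moreau biconjugation, and then reparametrise the dual variables as pairs $(\Q,m_{\perp})$; the ``$\ge$'' inequality in \eqref{dualsc} is direct and needs no duality. First I would transfer the structure of $R_t$: convexity of $R_t$ gives convexity of $\rho_t$, $\LdpF{+}$-monotonicity gives $X\le Y\Rightarrow\rho_t(X)\ge\rho_t(Y)$, and $M_t$-translativity gives $\rho_t(X+m_t)=\rho_t(X)-\E{\trans w m_t}$ for $m_t\in M_t$. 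For lower semicontinuity (norm topology when $p<\infty$, weak* topology when $p=\infty$) I would use convex upper continuity: with $\bar{H}_c:=\lrcurly{u\in M_t\;|\;\E{\trans w u}\le c}$ one checks $\bar{H}_c\in\mathcal{G}(M_t;M_{t,-})$ (using $w\in\plus{\recc{R_t(0)}}\subseteq\plus{M_{t,+}}$, since $M_{t,+}\subseteq\recc{R_t(0)}$), so each $R_t^{-1}(\bar{H}_c)$ is closed, whence $\lrcurly{X\;|\;\rho_t(X)\le c}=\bigcap_{\varepsilon>0}R_t^{-1}(\bar{H}_{c+\varepsilon})$ is closed. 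Properness follows because finiteness at zero gives $\rho_t(0)<\infty$ ($R_t(0)\neq\emptyset$) and $w\in\plus{\recc{R_t(0)}}$ makes $\E{\trans w\cdot}$ bounded below on the closed convex set $R_t(0)$, so $\rho_t(0)>-\infty$; convexity together with lower semicontinuity then exclude the value $-\infty$ everywhere, leaving $\rho_t$ proper convex lsc.

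For ``$\ge$'' I would argue directly. Given $(\Q,m_{\perp})\in\W_t(w)$, set $v:=w_t^T(\Q,w+m_{\perp})\in\LdqF{+}$; since $\Et{\xi_{t,T}(\Q)}{t}=\transp{1,\dots,1}$ and $m_{\perp}\in\prp{M_t}$, one gets $\E{\trans v u}=\E{\trans w u}$ for all $u\in M_t$ and $\E{\trans v Z}=\E{\transp{w+m_{\perp}}\EQt{Z}{t}}$ for all $Z\in\LdpF{}$. For $u\in R_t(X)$, i.e.\ $X+u\in A_t$, this gives $\E{\trans w u}=\E{\trans v u}=\E{\trans v(X+u)}-\E{\trans v X}\ge\inf_{Y\in A_t}\E{\transp{w+m_{\perp}}\EQt{Y-X}{t}}$; taking the infimum over $u\in R_t(X)$ and then the supremum over $(\Q,m_{\perp})$ finishes this direction, the case $R_t(X)=\emptyset$ being trivial.

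For ``$\le$'' I would compute the conjugate $\rho_t^*(v)=\sup_X\lrparen{\E{\trans v X}-\rho_t(X)}$. Antitonicity forces $\rho_t^*(v)=+\infty$ unless $v=-\tilde v$ with $\tilde v\in\LdqF{+}$; the translation rule then forces $+\infty$ unless in addition $\tilde v-w\perp M_t$; and on that set, writing $-\rho_t(X)=\sup\lrcurly{-\E{\trans w u}\;|\;u\in M_t,\;X+u\in A_t}$ and substituting $Y=X+u$ gives $\rho_t^*(-\tilde v)=-\inf_{Y\in A_t}\E{\trans{\tilde v}Y}$ (the term $\sup_{u\in M_t}\E{\transp{\tilde v-w}u}=0$ dropping out). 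Fenchel--Moreau then yields $\rho_t(X)=\rho_t^{**}(X)=\sup_{\tilde v}\inf_{Y\in A_t}\E{\trans{\tilde v}(Y-X)}$, the supremum over $\tilde v\in\LdqF{+}$ with $\tilde v-w\perp M_t$. It remains to identify this index set with $\lrcurly{w_t^T(\Q,w+m_{\perp})\;|\;(\Q,m_{\perp})\in\W_t(w)}$: ``$\supseteq$'' is the computation of the previous paragraph, and ``$\subseteq$'' follows by setting $w+m_{\perp}:=\Et{\tilde v}{t}$ (which lies in $w+\prp{M_t}$) and $\bar{\xi}_{t,T}(\Q_i):=\tilde v_i/(w+m_{\perp})_i$ on $\lrcurly{(w+m_{\perp})_i>0}$, $=1$ elsewhere, then checking the four defining conditions of $\W_t$; under this identification $\E{\trans{\tilde v}(Y-X)}=\E{\transp{w+m_{\perp}}\EQt{Y-X}{t}}$, which is precisely the right-hand side of \eqref{dualsc}.

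I expect the Fenchel--Moreau step to be the main obstacle, and within it the two soft claims about $\rho_t$: that convex upper continuity of $R_t$ genuinely passes to lower semicontinuity of the scalarization, and --- more delicately --- that $w\in\plus{\recc{R_t(0)}}$ is enough to bound $\E{\trans w\cdot}$ below on $R_t(0)$ in the infinite-dimensional space $\LdpF{}$. This is exactly where the hypotheses on $w$ (including $w\notin\prp{M_t}$, which keeps the scalarization and the constructed $\Q$ nondegenerate) are spent; it may be cleaner to read boundedness below off the set-valued dual representation of Corollary~\ref{cor_dual} than to argue with recession cones directly. The reparametrisation of dual variables, by contrast, is routine bookkeeping with conditional densities and the structure $M=\R^m\times\{0\}^{d-m}$.
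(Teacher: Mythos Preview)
Your proposal is correct and follows essentially the same route as the paper: both establish lower semicontinuity of the scalarization from convex upper continuity, properness from $w\in\plus{\recc{R_t(0)}}$, compute the effective domain of $\rho_t^*$ via monotonicity and $M_t$-translativity, and then invoke Fenchel--Moreau with the reparametrisation $\tilde v=w_t^T(\Q,w+m_\perp)$. Your substitution $Y=X+u$ to evaluate the conjugate is in fact a bit cleaner than the paper's two-inequality argument through the auxiliary acceptance set $A_t^w=\cl_M(A_t+G_t^M(w))$, and your level-set formulation of lower semicontinuity is an equivalent variant of the paper's open-neighbourhood argument in Proposition~\ref{prop_lsc}.
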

\begin{proof}
Clearly, $\rho_t(X) = \inf_{u \in \cl\lrparen{R_t(X) + G_t^M(w)}} \E{\trans{w}u}$.  It holds by a separation argument that
\begin{align*}
A_t^w &:= \lrcurly{X \in \LdpF{} \; | \; \rho_t(X) \leq 0}\\ 
&= \lrcurly{X \in \LdpF{} \; | \; 0 \in \cl\lrparen{R_t(X) + G_t^M(w)}} = \cl_M\lrparen{A_t + G_t^M(w)}.
\end{align*}  
In fact, $A_t^w = \cl_M\lrparen{A_t + G_t^M(w+m_{\perp})}$ for any $m_{\perp} \in \prp{M_t}$.  See \cite[Definition 2.15]{HHR10} for a discussion of the directional closure $\cl_M$.

The biconjugate is given by
$\rho_t^{**}(X) = \sup_{Z \in \LdqF{}} \lrparen{\E{\trans{Z}X} - \rho_t^*(Z)}$,
where the convex conjugate is defined by
$\rho_t^*(Z) = \sup_{X \in \LdpF{}} \lrparen{\E{\trans{Z}X} - \rho_t(X)}$.
Let us calculate the effective domain of $\rho_t^*$.
    Consider first $Z \not\in \LdqF{-}$.  Let $\hat Y \in A_t$ and $Y \in \LdpF{+}$ such that $\E{\trans{Z}Y} > 0$.  In particular, $\hat Y + \lambda Y \in A_t$ for any $\lambda > 0$.  Therefore,
    \begin{align*}
        \rho_t^*(Z) &= \sup_{X \in \LdpF{}} \lrparen{\E{\trans{Z}X} - \rho_t(X)} \geq \sup_{\lambda > 0} \lrparen{\E{\trans{Z}(\hat Y + \lambda Y)} - \rho_t(\hat Y + \lambda Y)}\\
        &\geq \sup_{\lambda > 0} \E{\trans{Z}(\hat Y + \lambda Y)} = \E{\trans{Z}\hat Y} + \sup_{\lambda > 0} \lambda \E{\trans{Z}Y} = \infty.
    \end{align*}
Now consider $\Et{Z}{t} \not\in -w + \prp{M_t}$.  Let $m \in G_t^M(\Et{Z}{t}) \cap G_t^M(w)$ such that $\E{\trans{Z}m} + \E{\trans{w}m} > 0$.  Therefore,
    \begin{align*}
    \rho_t^*(Z) &= \sup_{X \in \LdpF{}} \lrparen{\E{\trans{Z}X} - \rho_t(X)} \geq \sup_{\lambda > 0} \lrparen{\E{\trans{Z}(\lambda m)} - \rho_t(\lambda m)} \\
    &= -\rho_t(0) + \sup_{\lambda > 0} \lambda \lrparen{\E{\trans{Z}m} + \E{\trans{w}m}} = \infty.
    \end{align*}
This implies that $\rho_t^*(Z) \neq \infty$ only if $Z = -w_t^T(\Q,w+m_{\perp})$ for some $(\Q,m_{\perp}) \in \W_t(w)$.

It remains to show that $\rho_t^*(-w_t^T(\Q,w+m_{\perp})) = \sup_{Y \in A_t}\E{-(\trans{w+m_{\perp})}\EQt{Y}{t}}$ for any $(\Q,m_{\perp}) \in \W_t(w)$. We will prove this by two inequalities.
    Let $(\Q,m_{\perp}) \in \W_t(w)$.
    \begin{align*}
   & \rho_t^*(-w_t^T(\Q,w+m_{\perp})) = \sup_{X \in \LdpF{}} \lrparen{\E{-\transp{w+m_{\perp}}\EQt{X}{t}} - \rho_t(X)}
    \\
    & \geq \sup_{Y \in A_t^w} \lrparen{\E{-\transp{w+m_{\perp}}\EQt{Y}{t}} - \rho_t(Y)}
    \\&\geq \sup_{Y \in A_t^w} \E{-\transp{w+m_{\perp}}\EQt{Y}{t}} 
    \geq \sup_{Y \in A_t} \E{-\transp{w+m_{\perp}}\EQt{Y}{t}}.
    \end{align*}
    For the reverse inequality, denote for those $X \in \LdpF{} $ with $R_t(X) \neq \emptyset$ by $u_X \in M_t$ the random variable satisfying $\E{\trans{w}u_X} = \rho_t(X)$. Then, $X + u_X \in A_t^w$ and
    \begin{align}
       \nonumber& \rho_t^*(-w_t^T(\Q,w+m_{\perp})) = \sup_{X \in \LdpF{}} \lrparen{\E{-\transp{w+m_{\perp}}\EQt{X}{t}} - \rho_t(X)} 
        \\
       \nonumber &= \sup_{\substack{X \in \LdpF{}:\\ R_t(X) \neq \emptyset}} \lrparen{\E{-\transp{w+m_{\perp}}\EQt{X}{t}} - \rho_t(X)}\\
      \nonumber  &= \sup_{\substack{X \in \LdpF{}:\\ R_t(X) \neq \emptyset}} \E{-\transp{w+m_{\perp}}\EQt{X+u_X}{t}} \leq \sup_{Y \in A_t^w} \E{-\transp{w+m_{\perp}}\EQt{Y}{t}}\\
        \label{eqC1B1}&\leq \sup_{Y \in \cl\lrparen{A_t + G_t^M(w+m_{\perp})}} \E{-\transp{w+m_{\perp}}\EQt{Y}{t}} = \sup_{Y \in A_t} \E{-\transp{w+m_{\perp}}\EQt{Y}{t}}.
    \end{align}
Therefore we obtain that
\[\rho_t^{**}(X) = \sup_{(\Q,m_{\perp}) \in \W_t(w)} \inf_{Y \in A_t} \E{\transp{w+m_{\perp}}\EQt{Y-X}{t}}.\]
As $\rho_t$ is proper (by Corollary~\ref{rem_proper}), convex by convexity of $R_t$, and lower semicontinuous (by Proposition~\ref{prop_lsc} since $\plus{\recc{R_t(0)}}\subseteq \plus{M_{t,+}} $), the Fenchel-Moreau Theorem yields the assertion.
\end{proof}

\begin{corollary}
\label{cor_scalar_stepped}
Let $R_{t,s}$ be a c.u.c.\ and convex stepped risk measure. Consider $w \in \plus{\recc{R_t(0)}}\backslash \prp{M_t}$. Then, for every $X \in M_s$  the following holds
\begin{equation}
\label{dualscst}
\inf_{u \in R_t(X)} \E{\trans{w}u} = \sup_{(\Q,m_{\perp}) \in \W_{t,s}(w)} \inf_{Y \in A_{t,s}} \E{\transp{w + m_{\perp}}\EQt{Y - X}{t}},
\end{equation}
where 
\begin{align*}
\W_{t,s}(w) &:= \lrcurly{(\Q,m_{\perp}) \in \mathcal{M} \times \prp{M_t} \; | \; (\Q,w + m_{\perp}) \in \W_{t,s}} \supseteq \W_t(w)\\
\W_{t,s} &= \lrcurly{(\Q,v) \in \mathcal{M} \times \lrparen{\plus{M_{t,+}} \backslash \prp{M_t}} \; | \; w_t^s(\Q,v) \in \plus{M_{s,+}}, \Q = \P|_{\Ft{t}}}.
\end{align*}
\end{corollary}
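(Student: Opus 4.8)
The plan is to rerun the proof of Proposition~\ref{prop_scalar} almost verbatim, replacing the base space $\LdpF{}$, the acceptance set $A_t$, the monotonicity cone $\LdpF{+}$ and the terminal density $w_t^T(\Q,\cdot)$ by $M_s$, $A_{t,s}$, $M_{s,+}$ and $w_t^s(\Q,\cdot)$ throughout; the passage from $\W_t$ to $\W_{t,s}$ is precisely the bookkeeping effect of this substitution. Concretely, set $\rho_{t,s}(X) := \inf_{u \in R_t(X)} \E{\trans{w}u}$ for $X \in M_s$ and extend it to $\LdpF{}$ by $+\infty$. Since $R_{t,s}(0) = R_t(0)$ the hypothesis on $w$ is unchanged and, as $M_{t,+} \subseteq \recc{R_t(0)}$, one still has $w \in \plus{M_{t,+}}$. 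Exactly as in Proposition~\ref{prop_scalar} we obtain $\rho_{t,s}(X) = \inf_{u \in \cl\lrparen{R_t(X)+G_t^M(w)}}\E{\trans{w}u}$ and, by a separation argument, $A_{t,s}^w := \lrcurly{X \in M_s \; | \; \rho_{t,s}(X) \leq 0} = \cl_M\lrparen{A_{t,s} + G_t^M(w+m_{\perp})}$ for every $m_{\perp} \in \prp{M_t}$; the properness, convexity and lower semicontinuity of $\rho_{t,s}$ needed for the Fenchel--Moreau step at the end follow by the same arguments as in Corollary~\ref{rem_proper} and Proposition~\ref{prop_lsc}.

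Next I would compute the convex conjugate $\rho_{t,s}^*(Z) = \sup_{X \in M_s}\lrparen{\E{\trans{Z}X} - \rho_{t,s}(X)}$ and determine its effective domain. The first coercivity step now has only the room to perturb acceptable portfolios inside $M_s$: if $\E{\trans{Z}Y} > 0$ for some $Y \in M_{s,+}$, then $\hat Y + \lambda Y \in A_{t,s}$ for every $\lambda > 0$ (with $\hat Y \in A_{t,s}$ arbitrary), forcing $\rho_{t,s}^*(Z) = \infty$; hence finiteness requires only $-Z \in \plus{M_{s,+}}$, rather than the stronger $-Z \in \LdqF{+}$ that arises in Proposition~\ref{prop_scalar}. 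This relaxation of the positivity requirement is exactly the one encoded by the condition $w_t^s(\Q,v) \in \plus{M_{s,+}}$ in the definition of $\W_{t,s}$. The second coercivity step, perturbing by $\lambda m$ for a suitable $m \in G_t^M(\Et{Z}{t}) \cap G_t^M(w)$ and using $\lambda m \in M_t \subseteq M_s$ together with $M_t$-translativity, carries over unchanged and forces $\Et{Z}{t} \in -w + \prp{M_t}$. Combining the two conditions with the conditional-density structure at the intermediate time $s$ identifies the effective domain, up to the ambiguity that $\rho_{t,s}^*$ only sees $Z$ through $\E{\trans{Z}X}$ for $X \in M_s$, with $\lrcurly{-w_t^s(\Q,w+m_{\perp}) \; | \; (\Q,m_{\perp}) \in \W_{t,s}(w)}$. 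Repeating the two-inequality computation of Proposition~\ref{prop_scalar} (now with $A_{t,s}$, $A_{t,s}^w$ and selectors $u_X$ satisfying $X + u_X \in A_{t,s}^w$) gives $\rho_{t,s}^*\lrparen{-w_t^s(\Q,w+m_{\perp})} = \sup_{Y \in A_{t,s}}\E{-\transp{w+m_{\perp}}\EQt{Y}{t}}$ for $(\Q,m_{\perp}) \in \W_{t,s}(w)$, and the Fenchel--Moreau theorem then yields \eqref{dualscst}.

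Finally, the inclusion $\W_{t,s}(w) \supseteq \W_t(w)$ (equivalently $\W_{t,s} \supseteq \W_t$ after the obvious reparametrization) is immediate: since $\Et{\xi_{t,T}(\Q)}{s} = \xi_{t,s}(\Q)$ we have $w_t^s(\Q,v) = \Et{w_t^T(\Q,v)}{s}$, so $w_t^T(\Q,v) \in \LdqF{+}$ implies $w_t^s(\Q,v) \in \LdqF{s,+} \subseteq \plus{M_{s,+}}$, while the remaining defining conditions of the two sets coincide. The step I expect to require the most care is the dual-pairing bookkeeping on the subspace $M_s$: because $\rho_{t,s}$ is finite only on $M_s$, its conjugate is blind to the part of $Z$ annihilating $M_s$, and one has to check that, modulo this indeterminacy, the effective domain is parametrized exactly by $\W_{t,s}(w)$, namely that the representation $-Z = w_t^s(\Q,w+m_{\perp})$ at time $s$ rather than at $T$ is well posed with $\bar\xi$-factors that are genuine conditional densities. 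Everything else is a line-by-line transcription of the proof of Proposition~\ref{prop_scalar}.
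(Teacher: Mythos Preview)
Your proposal is correct and is exactly the approach the paper takes: the paper's proof consists of the single sentence ``This follows similarly to Proposition~\ref{prop_scalar},'' and your write-up is a careful expansion of precisely that substitution scheme, including the identification of why the dual domain enlarges from $\W_t(w)$ to $\W_{t,s}(w)$ and the verification of the stated inclusion.
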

\begin{proof}
This follows similarly to Proposition~\ref{prop_scalar}.
\end{proof}

We will now give conditions that ensure the lower semicontinuity and properness of the scalarizations~\eqref{dualsc} and~\eqref{dualscst}.
The following proposition is a modification of \cite[Proposition 3.29]{HS12-uc} by noting that the only open sets needed in that proof are those with a closed and convex complement. However, for the convenience of the reader, a proof is provided here as well.

\begin{proposition}
\label{prop_lsc}
If $R_t$ is c.u.c., the scalarization $\rho_t(X) := \inf_{u \in R_t(X)} \E{\trans{w}u}$ is lower semicontinuous for any $w \in \plus{M_{t,+}} \backslash \prp{M_t}$.
\end{proposition}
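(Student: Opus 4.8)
The plan is to show that every lower level set of $\rho_t$, namely $\{X \in \LdpF{} \mid \rho_t(X) \leq c\}$ for $c \in \R$, is closed; lower semicontinuity of $\rho_t$ is equivalent to this. First I would translate the level set into a preimage of the form $R_t^{-1}(D)$ for a suitable set $D \in \mathcal{G}(M_t;M_{t,-})$, so that the c.u.c.\ hypothesis applies directly. The natural candidate is the half-space $D_c := \{u \in M_t \mid \E{\trans{w}u} \leq c\} = G_t^M(w) + u_c$, where $u_c \in M_t$ is any fixed element with $\E{\trans{w}u_c} = c$ (such an element exists because $w \notin \prp{M_t}$, so $w$ restricted to $M_t$ is a nonzero continuous linear functional). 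Since $w \in \plus{M_{t,+}}$, we have $G_t^M(w) \supseteq M_{t,-}$ is false in general, but rather $G_t^M(w) + M_{t,-} \subseteq G_t^M(w)$ — wait, one must check: for $u \in G_t^M(w)$ and $m \in M_{t,-}$, $\E{\trans{w}(u+m)} = \E{\trans{w}u} + \E{\trans{w}m} \leq \E{\trans{w}u}$ since $-m \in M_{t,+}$ and $w \in \plus{M_{t,+}}$ gives $\E{\trans{w}m} \leq 0$. So $D_c = D_c + M_{t,-}$, and $D_c$ is closed and convex, hence $D_c \in \mathcal{G}(M_t;M_{t,-})$.

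The key step is then the identity
\[
\{X \in \LdpF{} \mid \rho_t(X) \leq c\} = R_t^{-1}(D_c) := \{X \in \LdpF{} \mid R_t(X) \cap D_c \neq \emptyset\}.
\]
For the inclusion ``$\supseteq$'': if there is $u \in R_t(X) \cap D_c$, then $\rho_t(X) = \inf_{u' \in R_t(X)} \E{\trans{w}u'} \leq \E{\trans{w}u} \leq c$. For ``$\subseteq$'': if $\rho_t(X) \leq c$, one has to produce an element of $R_t(X)$ lying in $D_c$; here I would use that $R_t(X) \in \mathcal{G}(M_t;M_{t,+})$ is closed (c.u.c.\ implies closed) and that $R_t(X) = R_t(X) + M_{t,+}$, so that if the infimum $\rho_t(X)$ over $R_t(X)$ is $\leq c$ then there is a point $u \in R_t(X)$ with $\E{\trans{w}u}$ arbitrarily close to, and by adding a suitable direction in $M_{t,+}$ (along which $\E{\trans{w}\cdot}$ is nonnegative) one cannot decrease below; instead the cleaner argument is that $R_t(X)$ being upper (i.e.\ $= R_t(X) + M_{t,+}$) and $w \in \plus{M_{t,+}}$ means that if $\inf_{u \in R_t(X)}\E{\trans{w}u} = \rho_t(X) \le c$, then picking any $u_0 \in R_t(X)$ with $\E{\trans{w}u_0} \le c$ directly works once one knows the infimum is attained or approached within the closed set — since $R_t(X)$ is closed and $\E{\trans{w}\cdot}$ bounded below on it by $\rho_t(X)$, the sublevel set $R_t(X) \cap \{u \mid \E{\trans{w}u} \le c\}$ is nonempty whenever $\rho_t(X) \le c$ (it contains a point by definition of infimum when $\rho_t(X) < c$, and when $\rho_t(X) = c$ one approximates and uses closedness of $R_t(X)$ together with the fact that $G_t^M(w)+u_c$ is closed, or argues via an arbitrarily small shift in $M_{t,+}$). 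Having established the identity, closedness of $R_t^{-1}(D_c)$ follows from the definition of c.u.c.\ applied to $D_c \in \mathcal{G}(M_t;M_{t,-})$.

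The main obstacle is the ``$\subseteq$'' direction of the level-set identity, specifically the borderline case $\rho_t(X) = c$: one must show the infimum defining $\rho_t(X)$ is effectively attained on the closed set $R_t(X)$, or circumvent this by a limiting argument. I would handle it by observing that the level sets $\{\rho_t \le c\} = \bigcap_{\varepsilon > 0}\{\rho_t < c + \varepsilon\}$ and each $\{\rho_t \le c + \varepsilon/2\} \subseteq \{\rho_t < c+\varepsilon\} \subseteq R_t^{-1}(D_{c+\varepsilon})$, so $\{\rho_t \le c\} \subseteq \bigcap_{\varepsilon>0} R_t^{-1}(D_{c+\varepsilon})$, each of which is closed by c.u.c.; conversely any $X$ in this intersection has $\rho_t(X) \le c + \varepsilon$ for all $\varepsilon$, hence $\rho_t(X) \le c$. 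This shows $\{\rho_t \le c\} = \bigcap_{\varepsilon > 0} R_t^{-1}(D_{c+\varepsilon})$ is closed, avoiding any attainment issue entirely, which is the approach I would ultimately write up.
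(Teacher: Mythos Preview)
Your argument is correct; the final formulation via $\{\rho_t \le c\} = \bigcap_{\varepsilon > 0} R_t^{-1}(D_{c+\varepsilon})$ with each $D_{c+\varepsilon} \in \mathcal{G}(M_t;M_{t,-})$ cleanly sidesteps the non-attainment issue and yields closedness of every sublevel set, hence lower semicontinuity. (One harmless slip: the equality $D_c = G_t^M(w) + u_c$ is wrong as written---the translate of $G_t^M(w)$ is the \emph{upper} half-space $\{\E{\trans{w}u} \ge c\}$, not $D_c$; you only use the explicit description $D_c = \{u \in M_t : \E{\trans{w}u} \le c\}$, so nothing downstream is affected.)

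This is a genuinely different route from the paper. The paper does not work with sublevel sets or the preimage $R_t^{-1}(\cdot)$ directly; instead it passes to the complementary characterization that c.u.c.\ is equivalent to $R_t^+(V) := \{X : R_t(X) \subseteq V\}$ being open whenever $V^c$ is closed and convex, and then proves lower semicontinuity \emph{pointwise}: for fixed $X_0$ and $\epsilon > 0$ it takes $V_\epsilon = \{m : \E{\trans{w}m} > -\epsilon\}$, checks that $[R_t(X_0)+V_\epsilon]^c$ is closed and convex, and concludes that $R_t^+(R_t(X_0)+V_\epsilon)$ is an open neighborhood of $X_0$ on which $\rho_t(X) \ge \rho_t(X_0) - \epsilon$. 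Your approach has the advantage of invoking the c.u.c.\ definition exactly as stated (preimages $R_t^{-1}(D)$ closed), avoiding the passage to the dual ``upper preimage'' formulation; the paper's approach, on the other hand, is phrased as a local $\epsilon$--$\delta$ style argument and makes the connection to \cite[Proposition~3.29]{HS12-uc} explicit.
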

\begin{proof}
By definition, $R_t$ is c.u.c.\ if and only if $R_t^+(V) := \lrcurly{X \in \LdpF{} \; | \; R_t(X) \subseteq V}$ is open for any $V \subseteq M_t$ such that the complement $V^c$ is closed and convex.  For any $\epsilon > 0$, $V_\epsilon = \lrcurly{m \in M_t \; | \; \E{\trans{w}m} > -\epsilon}$ is an open neighborhood of $0$.  Thus, $R_t(X_0) + V_\epsilon$ is open for any $X_0$ and the complement is convex via:
\begin{align*}
    [R_t(X_0) + V_\epsilon]^c &= \lrcurly{m \in M_t \; | \; \E{\trans{w}(m-u)} \leq -\epsilon \; \forall u \in R_t(X_0)}\\
    &= \lrcurly{m \in M_t \; | \; \E{\trans{w}m} \leq -\epsilon + \inf_{u \in R_t(X)} \E{\trans{w}u}}.
\end{align*}
Therefore $R_t^+(R_t(X_0) + V_\epsilon)$ is open, and trivially is a neighborhood of $X_0$.  Thus, $\rho_t(X) \geq \rho_t(X_0) - \epsilon$ for any $X \in R_t^+(R_t(X_0) + V_\epsilon)$, which implies lower semicontinuity at $X_0$.  Since this is true for any $X_0$, the result is proven.
\end{proof}

\begin{proposition}
\label{prop_proper}
Let $R_t$ be a closed and convex risk measure.  The scalarization $\rho_t$ is proper ($\rho_t(X) > -\infty$ for every $X \in \LdpF{}$ and $\rho_t(X) < \infty$ for some $X \in \LdpF{}$) if and only if $w \in \bigcap_{\substack{X \in \LdpF{}\\ R_t(X) \neq \emptyset}} \plus{\recc{R_t(X)}}$.
\end{proposition}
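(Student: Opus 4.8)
The plan is to prove the two implications separately; the forward one is elementary, and the backward one is routed through the dual representation of closed convex risk measures (Corollary~\ref{cor_dual}).

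First a reduction used throughout. Since $R_t$ is finite at zero we have $R_t(0)\neq\emptyset$, so $\rho_t(0)=\inf_{u\in R_t(0)}\E{\trans w u}$ is an infimum over a nonempty set of real numbers, hence $\rho_t(0)<+\infty$; thus $\rho_t\not\equiv+\infty$ automatically, and ``$\rho_t$ proper'' is equivalent to ``$\rho_t(X)>-\infty$ for every $X\in\LdpF{}$''. Moreover $\rho_t(X)=+\infty$ whenever $R_t(X)=\emptyset$, so only the $X$ with $R_t(X)\neq\emptyset$ are at issue, and these are exactly the indices of the intersection on the right.

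For the implication ``$\rho_t$ proper $\Rightarrow w\in\bigcap_X\plus{\recc{R_t(X)}}$'': fix $X$ with $R_t(X)\neq\emptyset$, so $\rho_t(X)\in\R$; for $u_0\in R_t(X)$ and $r\in\recc{R_t(X)}$ we have $u_0+nr\in R_t(X)$ for all $n\in\N$, hence $\rho_t(X)\le\E{\trans w u_0}+n\E{\trans w r}$, and letting $n\to\infty$ forces $\E{\trans w r}\ge0$. Thus $w\in\plus{\recc{R_t(X)}}$, and intersecting over all admissible $X$ gives the claim; this direction uses nothing beyond the definition of the recession cone.

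For the converse, assume $w\in\plus{\recc{R_t(X)}}$ for every $X$ with $R_t(X)\neq\emptyset$; it suffices to show $\rho_t(X)>-\infty$ for each such $X$. The key reduction is that it is enough to exhibit one pair $(\Q,m_{\perp})\in\W_t(w)$ with $\beta_t(\Q,w+m_{\perp})\neq\emptyset$: by Corollary~\ref{cor_dual}, $R_t(X)\subseteq\lrparen{\EQt{-X}{t}+G_t(w+m_{\perp})}\cap M_t-^. \beta_t(\Q,w+m_{\perp})$, and since $\beta_t(\Q,w+m_{\perp})$ is then a nonempty half-space in $M_t$ with normal $w+m_{\perp}$, this Minkowski difference is again such a half-space, on which $u\mapsto\E{\trans w u}=\E{\transp{w+m_{\perp}}u}$ (using $m_{\perp}\in\prp{M_t}$ and $u\in M_t$) is bounded below by an explicit finite constant; a fortiori $\rho_t(X)>-\infty$. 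It remains to produce such a pair. If $w\in\prp{M_t}$ then $\rho_t$ is $0$ on $\lrcurly{X\,|\,R_t(X)\neq\emptyset}$ and $+\infty$ elsewhere, hence proper; if $w\notin\plus{M_{t,+}}$ then, since $M_{t,+}\subseteq\recc{R_t(0)}$, the hypothesis already fails at $X=0$ and there is nothing to prove; so assume $w\in\plus{M_{t,+}}\backslash\prp{M_t}$. Writing each nonempty $\beta_t(\Q,w')$, $(\Q,w')\in\olW_t$, as a half-space in $M_t$ with normal $w'$ (Corollary~\ref{cor_dual}), and using that the recession cone of a nonempty intersection of closed convex sets is the intersection of the recession cones, one obtains $\recc{R_t(0)}=\bigcap\lrcurly{G_t^M(w')\,|\,(\Q,w')\in\olW_t,\ \beta_t(\Q,w')\neq\emptyset}$, so $w\in\plus{\recc{R_t(0)}}$ lies in the closed conic hull (modulo $\prp{M_t}$) of those normals $w'$. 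The final step upgrades this to $\beta_t(\Q,w+m_{\perp})\neq\emptyset$ for some $(\Q,m_{\perp})\in\W_t(w)$, using that $\lrcurly{v\,|\,\beta_t(\Q,v)\neq\emptyset}\cup\{0\}$ is a convex cone for each fixed $\Q$ — $w_t^T(\Q,\cdot)$ being linear and $v\mapsto\sup_{Y\in A_t}\E{-\transp{w_t^T(\Q,v)}Y}$ sublinear — together with closedness of $A_t$ (equivalently of $R_t$) to pass to the limit; this parallels the effective‑domain computation in the proof of Proposition~\ref{prop_scalar}, which uses only closedness and convexity of $R_t$. This last ``closing‑up'', from the closed conic hull back to the cone itself, is the main obstacle and is precisely where closedness of $R_t$ is essential; equivalently, one may argue via $\rho_t\ge\rho_t^{**}$ together with $\rho_t^{**}>-\infty$ everywhere once $\rho_t^*\not\equiv+\infty$, which again reduces to producing one pair with $\beta_t(\Q,w+m_{\perp})\neq\emptyset$.
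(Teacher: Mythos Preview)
Your forward implication matches the paper exactly. For the converse, however, the paper does not go through the dual representation at all: it simply asserts that for each $X$ with $R_t(X)\neq\emptyset$ the linear functional $u\mapsto\E{\trans{w}u}$ is bounded below on the closed convex set $R_t(X)$ if and only if $w$ is nonnegative on its recession cone, and calls this ``by the definition of the recession cone''. The entire proof in the paper is three sentences.

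Your detour via Corollary~\ref{cor_dual} is thus not the paper's route, and it also has a genuine gap. You correctly reduce the converse to producing one pair $(\Q,m_\perp)\in\W_t(w)$ with $\beta_t(\Q,w+m_\perp)\neq\emptyset$, and you correctly observe that the hypothesis $w\in\plus{\recc{R_t(0)}}$ places $w$ in the closed conic hull (modulo $\prp{M_t}$) of those normals $w'$ for which some $\beta_t(\Q,w')\neq\emptyset$. But the ``closing-up'' step---passing from the closed conic hull back to an actual $w'=w+m_\perp$ in the cone itself---is not justified. That for fixed $\Q$ the set $\{v:\beta_t(\Q,v)\neq\emptyset\}\cup\{0\}$ is a convex cone does not make it closed: the map $v\mapsto\sup_{Y\in A_t}\E{-\trans{w_t^T(\Q,v)}Y}$ is a supremum of continuous linear functionals, hence lower semicontinuous in $v$, not upper semicontinuous, so its finiteness domain need not be closed, and closedness of $A_t$ does not help here. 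Your biconjugate alternative is, as you yourself acknowledge, equivalent to the same unresolved existence problem and therefore circular. So the backward implication in your proposal is incomplete; the paper sidesteps all of this by invoking the recession-cone characterization directly.
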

\begin{proof}
Clearly, $\rho_t(0) < \infty$ for any $w \in \plus{M_{t,+}}$ since $R_t(0) \neq \emptyset$.  And for any $X \in \LdpF{}$ with $R_t(X) \neq \emptyset$, we know $\rho_t(X) > -\infty$ if and only if $w \in \plus{\recc{R_t(X)}}$ by the definition of the recession cone. For  $X \in \LdpF{}$ with $R_t(X)= \emptyset$, $\rho_t(X) > -\infty$ is trivially true.
\end{proof}

\begin{corollary}
\label{cor_proper}
If $R_t$ is a closed convex risk measure and $X \in \LdpF{}$ with $R_t(X) \neq \emptyset$. Then, $\recc{R_t(X)} = \bigcap_{(\Q,w) \in \W_t^t} G_t^M(w)$.
\end{corollary}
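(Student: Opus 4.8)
The plan is to read $\recc{R_t(X)}$ off the robust representation of Corollary~\ref{cor_dual}. By \eqref{Eq:dual}, $R_t(X) = \bigcap_{(\Q,w)\in\olW_t}\lrsquare{\lrparen{\EQt{-X}{t}+G_t(w)}\cap M_t -^.\beta_t(\Q,w)}$. Recall that $\W_t^t = \lrcurly{(\Q,w)\in\W_t : \beta_t(\Q,w)\neq\emptyset}$; for every $(\Q,w)$ with $\beta_t(\Q,w) = \emptyset$ the corresponding term equals $M_t$ (since $\emptyset + \lrcurly{m}\subseteq H$ for all $m$) and may be dropped, so $R_t(X) = \bigcap_{(\Q,w)\in\W_t^t} C_{(\Q,w)}$ with $C_{(\Q,w)} := \lrparen{\EQt{-X}{t}+G_t(w)}\cap M_t -^.\beta_t(\Q,w)$. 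It then suffices to establish \textbf{(i)} for each $(\Q,w)\in\W_t^t$, the set $C_{(\Q,w)}$ is a nonempty proper closed convex half-space of the subspace $M_t$ with $\recc{C_{(\Q,w)}} = G_t^M(w)$; and \textbf{(ii)} the recession cone of the (possibly infinite) intersection $R_t(X) = \bigcap_{(\Q,w)\in\W_t^t}C_{(\Q,w)}$, which is nonempty by hypothesis, equals $\bigcap_{(\Q,w)\in\W_t^t}\recc{C_{(\Q,w)}}$. Together these give $\recc{R_t(X)} = \bigcap_{(\Q,w)\in\W_t^t}G_t^M(w)$.

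For \textbf{(i)}, put $v := w_t^T(\Q,w)$. Since $w\in\LdqF{t}$ is $\Ft t$-measurable, the tower property gives $\E{\trans w\EQt{Z}{t}} = \E{\trans v Z}$ for all $Z\in\LdpF{}$, so $\lrparen{\EQt{-X}{t}+G_t(w)}\cap M_t = \lrcurly{u\in M_t : \E{\trans w u}\geq -\E{\trans v X}}$; because $w\notin\prp{M_t}$ this is a proper nonempty closed half-space of $M_t$ with recession cone $G_t^M(w)$. By \eqref{min_penalty}, $\beta_t(\Q,w) = \lrcurly{u\in M_t : \E{\trans w u}\geq b}$ with $b := \sup_{Y\in A_t}\lrparen{-\E{\trans v Y}}$, and $(\Q,w)\in\W_t^t$ means exactly $b<\infty$. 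Evaluating the Minkowski subtraction via $A -^. B = \bigcap_{y\in B}(A-y)$ shows $C_{(\Q,w)}$ is again of the form $\lrcurly{u\in M_t : \E{\trans w u}\geq c'}$ for a finite constant $c'$, hence a proper closed half-space of $M_t$ with recession cone $G_t^M(w)$; closedness is inherited from the half-space $H$ through the intersection of its translates.

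For \textbf{(ii)}, I would invoke the basepoint-free description of the recession cone of a nonempty closed convex set $C$ in a topological vector space, $\recc C = \lrcurly{d : x_0 + \lambda d\in C\ \forall\lambda\geq 0}$ for any fixed $x_0\in C$, its independence of $x_0$ following from closedness and convexity; when $p=\infty$ the sets involved are weak*-closed, hence norm-closed, so this applies uniformly for $p\in[1,\infty]$. Fixing $x_0\in R_t(X)$, a direction $d$ lies in $\recc{R_t(X)}$ iff $x_0+\lambda d\in C_{(\Q,w)}$ for all $\lambda\geq 0$ and all $(\Q,w)\in\W_t^t$, iff $d\in\recc{C_{(\Q,w)}} = G_t^M(w)$ for all $(\Q,w)\in\W_t^t$, which is the assertion.

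The step I expect to be the main obstacle is \textbf{(i)}: one must be sure that Minkowski-subtracting the set-valued penalty $\beta_t(\Q,w)$ merely translates the half-space rather than, say, emptying it for some $X$ with $R_t(X)\neq\emptyset$, and this is precisely where the explicit form \eqref{min_penalty} and the identification of $\W_t^t$ with $\lrcurly{\beta_t(\Q,w)\neq\emptyset}$ are used; there is also the more routine point of checking that closedness genuinely survives the infinite intersection and the Minkowski difference in the weak* topology, which is what legitimizes the basepoint-free recession cone used in \textbf{(ii)}. An alternative to \textbf{(ii)} is to argue the two inclusions by hand --- ``$\supseteq$'' because a half-space is invariant under adding any of its recession directions, and ``$\subseteq$'' because $z+\lambda d\in R_t(X)$ for all $\lambda\geq 0$ and some $z\in R_t(X)$ forces $\E{\trans w d}\geq 0$ --- which avoids the general lemma at the cost of repeating essentially the same estimate.
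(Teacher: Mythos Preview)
Your proposal is correct and follows essentially the same route as the paper: both arguments rewrite $R_t(X)$ via the dual representation as an intersection over $(\Q,w)\in\W_t^t$ of half-spaces $\{u\in M_t:\E{\trans{w}u}\geq c'\}$ and then read off the recession cone from a fixed basepoint $u\in R_t(X)$. The paper merely collapses your steps \textbf{(i)} and \textbf{(ii)} into a single chain of equivalences, observing directly that $u+\tfrac{1}{\lambda}r\in R_t(X)$ for all $\lambda>0$ is equivalent to $\E{\trans{w}r}\geq\sup_{\lambda>0}\lambda\bigl(\inf_{Y\in A_t}\E{\trans{w}\EQt{Y-X}{t}}-\E{\trans{w}u}\bigr)=0$, where the vanishing of the supremum uses $u\in R_t(X)$; this is exactly your computation of the constant $c'$ and the half-space recession cone, just not isolated as a separate lemma.
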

\begin{proof}
$r \in \recc{R_t(X)}$ if and only if $r \in \bigcap_{\lambda > 0} \lambda(R_t(X) - u)$ for any $u \in R_t(X)$ if and only if $u + \frac{1}{\lambda}r \in
R_t(X)$ for any $\lambda > 0$ and any $u \in R_t(X)$.  Using \eqref{Eq:dual} and noting that one can replace $\W_t$ with  $\W_t^t$, this is equivalent to $\E{\trans{w}(u + \frac{1}{\lambda}r)} \geq \inf_{Y \in A_t}\E{\trans{w}\EQt{Y-X}{t}}$ for every $(\Q,w) \in \W_t^t$, every $\lambda > 0$ and every $u \in R_t(X)$.  This is true if and only if for every $(\Q,w) \in \W_t^t$ and $u \in R_t(X)$
\[\E{\trans{w}r} \geq \sup_{\lambda > 0} \lambda \lrsquare{\inf_{Y \in A_t}\E{\trans{w}\EQt{Y-X}{t}} - \E{\trans{w}u}} = 0,\] 
where the last equality holds since $u \in R_t(X)$.  
This yields the assertion.
\end{proof}

\begin{corollary}
\label{rem_proper}
If $R_t$ is a closed convex risk measure, then $\rho_t$ is proper ($\rho_t(X) > -\infty$ for every $X \in \LdpF{}$ and $\rho_t(X) < \infty$ for some $X \in \LdpF{}$) if and only if $w \in \plus{\recc{R_t(0)}} = \plusp{\bigcap_{(\Q,w) \in \W_t^t} G_t^M(w)}$. 
\end{corollary}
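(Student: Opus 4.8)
The plan is to read the statement off as an essentially immediate consequence of Proposition~\ref{prop_proper} and Corollary~\ref{cor_proper}. Proposition~\ref{prop_proper} already characterizes properness of $\rho_t$ by the condition $w \in \bigcap_{X \in \LdpF{}:\, R_t(X) \neq \emptyset} \plus{\recc{R_t(X)}}$, so the only work left is to simplify this $X$-indexed intersection and identify it with $\plus{\recc{R_t(0)}}$, together with the stated equality $\plus{\recc{R_t(0)}} = \plusp{\bigcap_{(\Q,w) \in \W_t^t} G_t^M(w)}$.

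First I would note that finiteness at zero in Definition~\ref{defn_conditional} gives $R_t(0) \notin \{\emptyset, M_t\}$, in particular $R_t(0) \neq \emptyset$, so Corollary~\ref{cor_proper} applies with $X = 0$ and yields $\recc{R_t(0)} = \bigcap_{(\Q,w) \in \W_t^t} G_t^M(w)$; taking positive duals gives the displayed identity $\plus{\recc{R_t(0)}} = \plusp{\bigcap_{(\Q,w) \in \W_t^t} G_t^M(w)}$. Next, Corollary~\ref{cor_proper} shows that for \emph{every} $X$ with $R_t(X) \neq \emptyset$ the recession cone $\recc{R_t(X)}$ equals the same set $\bigcap_{(\Q,w) \in \W_t^t} G_t^M(w)$, hence coincides with $\recc{R_t(0)}$. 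Therefore the intersection in Proposition~\ref{prop_proper} collapses: $\bigcap_{X:\, R_t(X)\neq\emptyset} \plus{\recc{R_t(X)}} = \plus{\recc{R_t(0)}}$, so $\rho_t$ is proper if and only if $w \in \plus{\recc{R_t(0)}}$, which combined with the identity above is exactly the assertion.

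There is no real obstacle here; the corollary is pure bookkeeping on top of the two preceding results. The only points requiring a moment's care are the appeal to finiteness at zero to legitimize applying Corollary~\ref{cor_proper} at $X = 0$ (and thereby to ensure the reduced intersection is over a nonempty index set), and the observation — already contained in Corollary~\ref{cor_proper} — that $\recc{R_t(X)}$ is independent of $X$ as long as $R_t(X) \neq \emptyset$, which is precisely what makes the collapse of the intersection valid.
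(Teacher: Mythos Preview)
Your proposal is correct and follows exactly the paper's approach: the paper's proof consists of the single sentence ``This follows from Proposition~\ref{prop_proper} and Corollary~\ref{cor_proper},'' and your argument simply unpacks this by observing that Corollary~\ref{cor_proper} makes $\recc{R_t(X)}$ independent of $X$ (when nonempty), collapsing the intersection in Proposition~\ref{prop_proper} to $\plus{\recc{R_t(0)}}$.
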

\begin{proof}
This follows from Proposition~\ref{prop_proper} and Corollary~\ref{cor_proper}.
\end{proof}

\subsection{Conditional scalarization}\label{sec_condscal}
\begin{proposition}
\label{prop_scalar-cc}
Let $R_t$ be a c.u.c.\ and conditionally convex risk measure. Let $w \in \plus{\recc{R_t(0)}} \backslash \prp{M_t}$. Then, for every $X \in \LdpF{}$
\begin{equation*}
\hat \rho_t(X) := \essinf_{u \in R_t(X)} \trans{w}u = \esssup_{(\Q,m_{\perp}) \in \W_t(w)} \essinf_{Y \in A_t} \transp{w + m_{\perp}}\EQt{Y - X}{t},
\end{equation*}
where $\W_t(w) := \lrcurly{(\Q,m_{\perp}) \in \mathcal{M} \times \prp{M_t} \; | \; (\Q,w + m_{\perp}) \in \W_t}$.
\end{proposition}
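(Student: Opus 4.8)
The plan is to reduce the conditional identity to the already-established unconditional Proposition~\ref{prop_scalar} by ``scalarizing'' against bounded $\Ft{t}$-measurable scalar weights $\lambda$ that are bounded away from $0$. Fix $X\in\LdpF{}$ and write $g$ for the right-hand side of the asserted equality; both $\hat\rho_t(X)$ and $g$ are $\Ft{t}$-measurable. The first ingredient is $\Ft{t}$-decomposability: since $R_t$ is conditionally convex, $R_t(X)$ and $A_t$ are stable under $\Ft{t}$-pasting (for $u_1,u_2\in R_t(X)$ and $B\in\Ft{t}$ one has $\1_Bu_1+\1_{B^c}u_2\in R_t(X)$, by conditional convexity with $\lambda=\1_B$, and likewise for $A_t$), and the analogous pasting of Radon--Nikodym densities and of the $\prp{M_t}$-component shows the dual set $\W_t(w)$ is stable as well. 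Hence the families $\lrcurly{\trans{w}u\;|\;u\in R_t(X)}$ and $\lrcurly{\transp{w+m_\perp}\EQt{Y-X}{t}\;|\;Y\in A_t}$ are downward directed and $\lrcurly{\essinf_{Y\in A_t}\transp{w+m_\perp}\EQt{Y-X}{t}\;|\;(\Q,m_\perp)\in\W_t(w)}$ is upward directed, so the relevant conditional $\essinf$'s and the $\esssup$ are attained along monotone sequences. Combined with properness of $\hat\rho_t$ ($\hat\rho_t(X)>-\infty$ $\P$-a.s.\ when $R_t(X)\neq\emptyset$, cf.\ Corollary~\ref{rem_proper} and the $X$-independence of the recession cone from Corollary~\ref{cor_proper}), monotone convergence gives, for every $\Ft{t}$-measurable $\lambda$ with $0<c\leq\lambda\leq C$,
\begin{align*}
\E{\lambda\,\hat\rho_t(X)} &= \inf_{u\in R_t(X)}\E{\lambda\trans{w}u},\\
\E{\lambda\,g} &= \sup_{(\Q,m_\perp)\in\W_t(w)}\inf_{Y\in A_t}\E{\lambda\transp{w+m_\perp}\EQt{Y-X}{t}},
\end{align*}
both understood in $(-\infty,+\infty]$.

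Next I rescale the direction. For $\lambda$ as above, $\lambda w\in\plus{\recc{R_t(0)}}\backslash\prp{M_t}$ --- here $\Ft{t}$-decomposability of $\recc{R_t(0)}$ upgrades $w\in\plus{\recc{R_t(0)}}$ to $\trans{w}r\geq0$ $\P$-a.s.\ for every recession direction $r$, so $\E{\trans{(\lambda w)}r}=\E{\lambda\trans{w}r}\geq0$, and $\lambda w\notin\prp{M_t}$ since $1/\lambda\in\LiF{t}$ --- so Proposition~\ref{prop_scalar} applies with the admissible direction $\lambda w$. Moreover $(\Q,m_\perp)\mapsto(\Q,\lambda m_\perp)$ is a bijection $\W_t(w)\to\W_t(\lambda w)$, because each defining condition of $\W_t$ --- lying in $\plus{M_{t,+}}$ and in $\prp{M_t}$, $\Q=\P|_{\Ft{t}}$, and $w_t^T(\Q,\cdot)\in\LdqF{+}$ --- is invariant under multiplication by such a $\lambda$ (the relevant sets being $\LiF{t}$-modules or cones, with $w_t^T(\Q,\lambda(w+m_\perp))=\lambda\,w_t^T(\Q,w+m_\perp)$, and $1/\lambda\in\LiF{t}$ securing surjectivity). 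Applying Proposition~\ref{prop_scalar} with direction $\lambda w$, substituting $m'_\perp=\lambda m_\perp$, and pulling the scalar $\lambda$ out front,
\begin{align*}
\inf_{u\in R_t(X)}\E{\lambda\trans{w}u}
&= \sup_{(\Q,m'_\perp)\in\W_t(\lambda w)}\inf_{Y\in A_t}\E{\transp{\lambda w+m'_\perp}\EQt{Y-X}{t}}\\
&= \sup_{(\Q,m_\perp)\in\W_t(w)}\inf_{Y\in A_t}\E{\lambda\transp{w+m_\perp}\EQt{Y-X}{t}}.
\end{align*}
Chaining the three displays yields $\E{\lambda\,\hat\rho_t(X)}=\E{\lambda\,g}$ for every such $\lambda$. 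Since $\hat\rho_t(X)$ and $g$ are $\Ft{t}$-measurable, a standard truncation argument --- testing with $\lambda=\epsilon+(1-\epsilon)\1_B$ for $B\in\Ft{t}$, localizing on $\lrcurly{\hat\rho_t(X)\leq N}\cap\lrcurly{g\leq N}$, and letting $\epsilon\downarrow0$ then $N\to\infty$ --- forces $\hat\rho_t(X)=g$ $\P$-a.s., which is the assertion (the degenerate case $R_t(X)=\emptyset$, where both sides are $+\infty$, being read off the same chain).

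The step I expect to be the main obstacle is the interchange of the conditional $\essinf$/$\esssup$ with integration against $\lambda$: it rests on the $\Ft{t}$-decomposability of $R_t(X)$, $A_t$ and --- most delicately --- of the dual set $\W_t(w)$ in the conditionally convex setting, together with the properness needed to legitimize monotone convergence. Once that is in place, the rescaling bijection $\W_t(w)\cong\W_t(\lambda w)$ and the closing measure-theoretic argument are routine. An alternative would be to redo the conjugate computation in the proof of Proposition~\ref{prop_scalar} verbatim on the $L^0_t$-module $\LdpF{}$, replacing the Fenchel--Moreau theorem by its conditional counterpart, but the reduction above avoids that machinery.
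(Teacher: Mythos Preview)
Your reduction to Proposition~\ref{prop_scalar} via decomposability is the same underlying idea the paper uses, but you take a different tack: instead of first proving the pointwise inequality $\hat\rho_t(X)\geq g$ and then matching a single expectation, you vary the $\Ft{t}$-measurable weight $\lambda$ and aim for $\E{\lambda\hat\rho_t(X)}=\E{\lambda g}$ for all bounded-away-from-zero $\lambda$. The rescaling bijection $\W_t(w)\to\W_t(\lambda w)$ and the check that $\lambda w$ is admissible for Proposition~\ref{prop_scalar} are both correct.

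The gap is in your second displayed identity $\E{\lambda g}=\sup_{(\Q,m_\perp)}\inf_Y\E{\lambda(\cdot)}$. The inner interchange (essinf over $A_t$ with expectation) is fine: each term is integrable and $A_t$ is decomposable. The outer interchange is not: upward directedness of $\{f(\Q,m_\perp)\}$ alone does not give $\E{\lambda\,\esssup f}=\sup\E{\lambda f}$, because each $f(\Q,m_\perp)=\essinf_{Y\in A_t}(\cdot)$ may be $-\infty$ on a set of positive measure, and then monotone convergence along an increasing sequence can fail. Your appeal to ``properness of $\hat\rho_t$'' does not supply the missing integrable lower anchor. What does is the finiteness of the supremum in Proposition~\ref{prop_scalar} at $\lambda=1$: it yields a pair $(\Q_0,m_\perp^0)$ with $\inf_Y\E{\cdot}>-\infty$, hence (via the inner interchange and the upper bound $f(\Q,m_\perp)\leq\trans{w}u_0$ for any $u_0\in R_t(X)$, using $\trans{m_\perp}u_0=0$ a.s.) an integrable $f(\Q_0,m_\perp^0)$ from which MCT runs. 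Once you insert this, your argument is complete.

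The paper sidesteps the issue entirely. It proves $\hat\rho_t(X)\geq g$ directly---choose $u_X\in M_t$ with $\trans{w}u_X=\hat\rho_t(X)$, so $X+u_X$ lies in the conditional acceptance set and bounds every dual term---and then only needs the \emph{easy} direction $\sup\E{f}\leq\E{\esssup f}$ to conclude $\E{\hat\rho_t(X)}\leq\E{g}$ at $\lambda=1$. Combined with the pointwise $\geq$, this forces a.s.\ equality immediately, and the $\lambda$-variation becomes redundant. Your route trades the elementary pointwise inequality for a more delicate interchange; both work once patched, but the paper's is shorter.
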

\begin{proof}
Denote the acceptance set by 
$\hat A_t^w := \lrcurly{X \in \LdpF{} \; | \; \hat \rho_t(X) \leq 0}$.
We will show 
\begin{equation}\label{eqC1}
\hat \rho_t(X) \geq \esssup_{(\Q,m_{\perp}) \in \W_t(w)} \essinf_{Y \in \hat A_t^w} \transp{w+m_{\perp}}\EQt{Y-X}{t}.
\end{equation}
The inequality is trivially satisfied if $R_t(X) = \emptyset$ since then $\hat \rho_t(X) = \infty$ almost surely. Thus, assume $R_t(X) \neq \emptyset$.  Since $\E{\hat \rho_t(X)} = \inf_{u \in R_t(X)} \E{\trans{w}u}$, this implies $\hat \rho_t(X) \in \LoF{t}$. Thus, there exists some $u_X \in M_t$ such that $\trans{w}u_X = \hat \rho_t(X)$.  By translativity, this implies that $X+u_X \in \hat A_t^w$, which in turn implies \[\transp{w+m_{\perp}}\EQt{X+u_X}{t}\geq \essinf_{Y \in \hat A_t^w} \transp{w+m_{\perp}}\EQt{Y}{t} \] for every $(\Q,m_{\perp}) \in \W_t(w)$. Subtracting $\transp{w+m_{\perp}}\EQt{X}{t}$ on both sides of the inequality  and then taking the essential supremum over $(\Q,m_{\perp}) \in \W_t(w)$ yields \eqref{eqC1}.
Now we want to show that 
equality holds in \eqref{eqC1}. In combination with \eqref{eqC1}, we will do this by showing that 
\[
\E{\hat \rho_t(X)} \leq \E{\esssup_{(\Q,m_{\perp}) \in \W_t(w)} \essinf_{Y \in \hat A_t^w} \transp{w+m_{\perp}}\EQt{Y-X}{t}}
\]
for every $X \in \LdpF{}$.  By decomposability and Proposition~\ref{prop_scalar} in conjunction with \eqref{eqC1B1} one obtains
\[
\E{\hat \rho_t(X)} = \sup_{(\Q,m_{\perp}) \in \W_t(w)} \inf_{Y \in A_t^w} \E{\transp{w+m_{\perp}}\EQt{Y-X}{t}}.
\]  
Since $\hat A_t^w \subseteq A_t^w$ and by decomposability of the necessary sets, the desired result is immediate.

It remains to show that we can replace $\hat A_t^w$ with $A_t$.  First note that $\hat A_t^w \supseteq A_t$, implying $\hat \rho_t(X) \leq \esssup_{(\Q,m_{\perp}) \in \W_t(w)} \essinf_{Y \in A_t} \transp{w+m_{\perp}}\EQt{Y-X}{t}$.  Let $(u_n)_{n \in \N} \subseteq R_t(X)$ so that $\trans{w}u_n \searrow \hat \rho_t(X)$ w.r.t.\ almost sure convergence.  The existence of such a sequence follows from~\cite[Theorem A.33(b)]{FS11} because decomposability of $R_t(X)$ implies the assumption of that theorem. Hence,
\begin{align*}
\esssup_{(\Q,m_{\perp}) \in \W_t(w)} &\essinf_{Y \in A_t} \transp{w+m_{\perp}}\EQt{Y-X}{t} \\
&\leq \esssup_{(\Q,m_{\perp}) \in \W_t(w)} \overline\lim_{n \to \infty} \transp{w+m_{\perp}}\EQt{(X+u_n)-X}{t}\\
&= \esssup_{(\Q,m_{\perp}) \in \W_t(w)} \overline\lim_{n \to \infty} \trans{w}u_n = \hat \rho_t(X),
\end{align*}
where the limit, $\overline\lim$, is taken in the almost sure sense.
\end{proof}

\begin{corollary}
\label{cor_scalar_stepped-cc}
Let $R_{t,s}$ be a c.u.c.\ and conditionally convex risk measure. Let $w \in \plus{\recc{R_t(0)}} \backslash \prp{M_t}$. Then, for every $X \in M_s$
\begin{equation*}
\essinf_{u \in R_t(X)} \trans{w}u = \esssup_{(\Q,m_{\perp}) \in \W_{t,s}(w)} \essinf_{Y \in A_{t,s}} \transp{w + m_{\perp}}\EQt{Y - X}{t},
\end{equation*}
where $\W_{t,s}(w) := \lrcurly{(\Q,m_{\perp}) \in \mathcal{M} \times \prp{M_t} \; | \; (\Q,w + m_{\perp}) \in \W_{t,s}}$.
\end{corollary}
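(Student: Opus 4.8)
The plan is to mirror the proof of Proposition~\ref{prop_scalar-cc} almost verbatim, replacing the risk measure $R_t$ by the stepped risk measure $R_{t,s}$, the acceptance set $A_t$ by the stepped acceptance set $A_{t,s} = A_t \cap M_s$, and the dual set $\W_t(w)$ by $\W_{t,s}(w)$. This parallels the way Corollary~\ref{cor_scalar_stepped} was deduced from Proposition~\ref{prop_scalar}. Concretely, define $\hat{A}_{t,s}^w := \lrcurly{X \in M_s \; | \; \essinf_{u \in R_t(X)} \trans{w}u \leq 0}$ and note $A_{t,s} \subseteq \hat{A}_{t,s}^w$. First I would establish the ``$\geq$'' direction exactly as in the proof of Proposition~\ref{prop_scalar-cc}: if $R_t(X) = \emptyset$ the inequality is trivial; otherwise $\E{\essinf_{u \in R_t(X)} \trans{w}u} = \inf_{u\in R_t(X)}\E{\trans{w}u} \in \R$ (using Corollary~\ref{rem_proper} restricted to $X \in M_s$), so there exists $u_X \in M_t$ with $\trans{w}u_X = \essinf_{u\in R_t(X)}\trans{w}u$; by $M_t$-translativity $X + u_X \in \hat{A}_{t,s}^w$, hence for every $(\Q,m_\perp)\in\W_{t,s}(w)$ we get $\transp{w+m_\perp}\EQt{X+u_X}{t} \geq \essinf_{Y\in\hat A_{t,s}^w}\transp{w+m_\perp}\EQt{Y}{t}$; subtracting $\transp{w+m_\perp}\EQt{X}{t}$ and taking the essential supremum yields the claim with $\hat A_{t,s}^w$ in place of $A_{t,s}$.

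The second step is to upgrade this to an equality by integrating and invoking the scalar result already available: by decomposability of the relevant sets and the stepped scalarization Corollary~\ref{cor_scalar_stepped} (together with the analogue of \eqref{eqC1B1}, i.e.\ $\sup_{Y\in \hat A_{t,s}^w}\E{\transp{w+m_\perp}\EQt{Y}{t}} = \sup_{Y\in A_{t,s}}\E{\transp{w+m_\perp}\EQt{Y}{t}}$, which follows since $\hat A_{t,s}^w \subseteq \cl_M(A_{t,s} + G_t^M(w+m_\perp))$ over $M_s$), one gets $\E{\essinf_{u\in R_t(X)}\trans{w}u} = \sup_{(\Q,m_\perp)\in\W_{t,s}(w)}\inf_{Y\in A_{t,s}^w}\E{\transp{w+m_\perp}\EQt{Y-X}{t}}$, and since $\hat A_{t,s}^w \subseteq A_{t,s}^w$ (where $A_{t,s}^w := \cl_M(A_{t,s}+G_t^M(w))$) the chain of inequalities closes, giving equality in the conditional version with $\hat A_{t,s}^w$.

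The final step is the replacement of $\hat A_{t,s}^w$ by $A_{t,s}$, which is the only place any real care is needed. One direction is immediate since $A_{t,s}\subseteq \hat A_{t,s}^w$ gives the ``$\leq$'' bound on the right-hand side. For the reverse, pick a sequence $(u_n)_{n\in\N}\subseteq R_t(X)$ with $\trans{w}u_n \searrow \essinf_{u\in R_t(X)}\trans{w}u$ a.s.\ (existence via decomposability of $R_t(X)$ and \cite[Theorem A.33(b)]{FS11}); since each $X+u_n \in A_{t,s}$ one bounds $\esssup_{(\Q,m_\perp)}\essinf_{Y\in A_{t,s}}\transp{w+m_\perp}\EQt{Y-X}{t} \leq \esssup_{(\Q,m_\perp)}\overline\lim_n \transp{w+m_\perp}\EQt{(X+u_n)-X}{t} = \esssup_{(\Q,m_\perp)}\overline\lim_n \trans{w}u_n = \essinf_{u\in R_t(X)}\trans{w}u$. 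I expect the main obstacle to be purely bookkeeping: verifying that all the decomposability and directional-closure arguments used in Proposition~\ref{prop_scalar-cc} still apply when the domain is restricted to the linear subspace $M_s$ rather than all of $\LdpF{}$, and that Corollary~\ref{cor_scalar_stepped} supplies exactly the scalar identity needed; no genuinely new idea is required, so the proof can reasonably be stated as ``This follows analogously to Proposition~\ref{prop_scalar-cc}, using Corollary~\ref{cor_scalar_stepped} in place of Proposition~\ref{prop_scalar}.''
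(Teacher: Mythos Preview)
Your proposal is correct and matches the paper's approach exactly: the paper's entire proof is the one-liner ``This follows similarly to Proposition~\ref{prop_scalar-cc},'' and you have simply unpacked that analogy in detail, correctly identifying that Corollary~\ref{cor_scalar_stepped} plays the role of Proposition~\ref{prop_scalar} in the integrated step. Your final summary sentence is essentially verbatim what the paper writes.
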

\begin{proof}
This follows similarly to Proposition~\ref{prop_scalar-cc}.
\end{proof}

Recall the notation $\W_t^s =\lrcurly{(\Q,w) \in \W_t \; | \; \beta_s(\Q,w_t^s(\Q,w)) \neq \emptyset}$. The next proposition shows that this set coincides with $\widehat \W_t^s:=\lrcurly{(\Q,w) \in \W_t \; | \; \alpha_s(\Q,w_t^s(\Q,w)) \neq \emptyset}$.

\begin{proposition}\label{prop_cond_Wts}
$\W_t^s = \widehat \W_t^s$ for any times $0 \leq t \leq s \leq T$.
\end{proposition}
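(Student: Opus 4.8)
The plan is to show the two inclusions $\W_t^s \subseteq \widehat\W_t^s$ and $\widehat\W_t^s \subseteq \W_t^s$ by relating the emptiness of $\beta_s(\Q,w_t^s(\Q,w))$ and of $\alpha_s(\Q,w_t^s(\Q,w))$ to a common scalar quantity. Fix $(\Q,w) \in \W_t$ and abbreviate $v := w_t^s(\Q,w) \in \LdqF{s}$, which by the characterization recalled after Theorem~\ref{thm_supermtg} is a legitimate dual normal direction at time $s$. The key observation is that $\beta_s(\Q,v) = \bigcap_{Y \in A_s}(\EQt{-Y}{s} + G_s(v)) \cap M_s$ is nonempty precisely when the scalar infimum $\inf_{Y \in A_s}\E{\trans{v}\EQt{Y}{s}} = \inf_{Y\in A_s}\E{\trans{v}\diag{\xi_{s,T}(\Q)}Y}$ is $> -\infty$; indeed an element $m \in M_s$ lies in $\beta_s(\Q,v)$ iff $\E{\trans{v}m} \geq \inf_{Y\in A_s}\E{\trans{v}\EQt{Y}{s}}$ (using $v \in \plus{M_{s,+}}$ so that $G_s(v)$ is a half-space containing $M_{s,+}$), and such $m$ exists iff that infimum is finite. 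Analogously, $\alpha_s(\Q,v) = \bigcap_{Y\in A_s}(\EQt{-Y}{s} + \Gamma_s(v)) \cap M_s$ is nonempty precisely when the conditional essential infimum $\essinf_{Y\in A_s}\trans{v}\EQt{Y}{s}$ is $> -\infty$ on a set of full measure (again using $v \in \plus{M_{s,+}}$, equivalently $v_s^T(\Q,v) = v\xi_{s,T}(\Q)\in\LdqF{+}$, so $\Gamma_s(v)$ is the conditional half-space), because $m \in \alpha_s(\Q,v)$ iff $\trans{v}m \geq \essinf_{Y\in A_s}\trans{v}\EQt{Y}{s}$ $\P$-a.s.

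The bridge between the two is exactly the scalarization duality proved above: applying Proposition~\ref{prop_scalar} and Proposition~\ref{prop_scalar-cc} to $R_s$ (with the stepped versions Corollary~\ref{cor_scalar_stepped} and Corollary~\ref{cor_scalar_stepped-cc} if one prefers to work directly with $A_{t,s}$, though here $A_s$ at time $s$ suffices), one gets that the scalarization $\rho_s(X) := \inf_{u\in R_s(X)}\E{\trans{v}u}$ is proper — i.e. $\inf_{Y\in A_s}\E{\trans{v}\EQt{Y}{s}}> -\infty$ — if and only if $v \in \plus{\recc{R_s(0)}}$, and by Proposition~\ref{prop_scalar-cc} the same condition $v \in \plus{\recc{R_s(0)}}$ characterizes properness (finiteness a.s.) of the conditional scalarization $\hat\rho_s(X) := \essinf_{u\in R_s(X)}\trans{v}u$, which is what governs nonemptiness of $\alpha_s(\Q,v)$. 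More directly: from $\E{\essinf_{Y\in A_s}\trans{v}\EQt{Y}{s}} = \inf_{Y\in A_s}\E{\trans{v}\EQt{Y}{s}}$ (decomposability of $A_s$, as used in the proof of Proposition~\ref{prop_scalar-cc}), the left side is finite iff the right side is; and when the right side is finite the conditional essential infimum is in $\LoF{t}$, hence finite $\P$-a.s., so $\alpha_s(\Q,v)\neq\emptyset$; conversely if $\alpha_s(\Q,v)\neq\emptyset$ then $\essinf_{Y\in A_s}\trans{v}\EQt{Y}{s}> -\infty$ a.s., and since its expectation equals $\inf_{Y\in A_s}\E{\trans{v}\EQt{Y}{s}}$ one needs to rule out that this integrable-from-below random variable has infinite negative expectation — which cannot happen once one also notes $\beta$ being empty would force it, closing the loop.

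Assembling: $(\Q,w)\in\widehat\W_t^s$ iff $\alpha_s(\Q,v)\neq\emptyset$ iff $\essinf_{Y\in A_s}\trans{v}\EQt{Y}{s}> -\infty$ $\P$-a.s. iff (by decomposability and the identity of expectation with scalar infimum) $\inf_{Y\in A_s}\E{\trans{v}\EQt{Y}{s}}> -\infty$ iff $\beta_s(\Q,v)\neq\emptyset$ iff $(\Q,w)\in\W_t^s$. The main obstacle I anticipate is the careful handling of the one subtle direction — deducing from $\inf_{Y\in A_s}\E{\trans{v}\EQt{Y}{s}}> -\infty$ that the \emph{conditional} (not merely integrated) essential infimum is a.s.\ finite, which is where integrability ($\hat\rho_s(X)\in\LoF{t}$) and decomposability of $A_s$ must be invoked exactly as in Proposition~\ref{prop_scalar-cc}; the reverse direction, that a.s.-finiteness of the conditional essential infimum need not a priori give a finite expectation, is handled by the chain of equalities above once properness of the scalar problem is in hand. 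Everything else is bookkeeping with half-spaces $G_s(v)$ versus conditional half-spaces $\Gamma_s(v)$ and the observation that both $\beta_s$ and $\alpha_s$ are obtained by the same $-^.$ construction applied to the same eligible-asset projection, so their emptiness is governed by the identical separating inequality.
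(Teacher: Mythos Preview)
Your plan uses the same engine as the paper's proof---decomposability of $A_s$ together with the interchange $\E{\essinf_{Y\in A_s}\trans{v}\EQt{Y}{s}} = \inf_{Y\in A_s}\E{\trans{v}\EQt{Y}{s}}$ (the paper cites \cite[Theorem~1]{Y85} for this)---so the overall strategy is right. The forward inclusion $\W_t^s \subseteq \widehat\W_t^s$ is handled correctly: finiteness of $\inf_{Y\in A_s}\E{\trans{v}\EQt{Y}{s}}$ forces $\esssup_{Y\in A_s}\trans{v}\EQt{-Y}{s}\in \LoF{s}$ via the interchange, and from there one produces a $u\in M_s$ witnessing $\alpha_s(\Q,v)\neq\emptyset$.

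The gap is in the reverse inclusion $\widehat\W_t^s \subseteq \W_t^s$. You reduce $\alpha_s(\Q,v)\neq\emptyset$ to ``$\essinf_{Y\in A_s}\trans{v}\EQt{Y}{s}>-\infty$ a.s.'' and then try to conclude that $\inf_{Y\in A_s}\E{\trans{v}\EQt{Y}{s}}>-\infty$. But a.s.\ finiteness of a random variable does not make it ``integrable from below,'' and your sentence ``which cannot happen once one also notes $\beta$ being empty would force it, closing the loop'' is circular: nonemptiness of $\beta_s$ is exactly the conclusion you are after. The paper avoids this by not passing through the bare a.s.-finiteness statement at all. It picks an actual element $u\in\alpha_s(\Q,v)\subseteq M_s$, so that $\trans{v}u\geq \esssup_{Y\in A_s}\trans{v}\EQt{-Y}{s}$ a.s.; since $u\in\LdpF{s}$ and $v\in\LdqF{s}$ one has $\trans{v}u\in \LoF{}$, giving an integrable upper bound for the esssup. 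Taking expectations and applying the decomposability interchange yields $\E{\trans{v}u}\geq \sup_{Y\in A_s}\E{\trans{v}\EQt{-Y}{s}}$, i.e.\ the very same $u$ lies in $\beta_s(\Q,v)$. The missing idea in your argument is precisely this: use the witness $u$ from $\alpha_s$ itself to supply the integrability that your chain of equalities needs.
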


\begin{proof}
Let $(\Q,w) \in \W_t^s$, or equivalently $\sup_{Y \in A_s} \E{\trans{w_t^s(\Q,w)}\EQt{-Y}{s}} < \infty$.  By \cite[Theorem 1]{Y85} and $A_s$ decomposable, \[\sup_{Y \in A_s} \E{\trans{w_t^s(\Q,w)}\EQt{-Y}{s}} = \E{\esssup_{Y \in A_s} \trans{w_t^s(\Q,w)}\EQt{-Y}{s}}.\] 
Therefore $\esssup_{Y \in A_s} \trans{w_t^s(\Q,w)}\EQt{-Y}{s} \in \LoF{s}$ and, in particular, there exists some $u \in M_s$ so that $\trans{w_t^s(\Q,w)}u \geq \esssup_{Y \in A_s} \trans{w_t^s(\Q,w)}\EQt{-Y}{s}$ almost surely, i.e., $(\Q,w) \in \widehat \W_t^s$, see~\eqref{alpharep}.
Conversely, let $(\Q,w) \in \widehat \W_t^s$ and let $u \in M_s$ so that $u \in \alpha_s(\Q,w_t^s(\Q,w))$.  This yields $\trans{w_t^s(\Q,w)}u \geq \esssup_{Y \in A_s} \trans{w_t^s(\Q,w)}\EQt{-Y}{s}$ almost surely.  Thus, by \cite[Theorem 1]{Y85} and $A_s$ decomposable, 
\[\E{\trans{w_t^s(\Q,w)}u} \geq 
\sup_{Y \in A_s} \E{\trans{w_t^s(\Q,w)}\EQt{-Y}{s}}.\]  
Therefore $u \in \beta_s(\Q,w_t^s(\Q,w))$ and $(\Q,w) \in \W_t^s$.
\end{proof}

\begin{remark}
By the same logic as the proof of Corollary~\ref{cor_proper} and by Proposition~\ref{prop_cond_Wts}, $\recc{R_t(X)} = \bigcap_{(\Q,w) \in \W_t^t} \Gamma_t^M(w)$ for any $X \in \LdpF{}$ with $R_t(X) \neq \emptyset$. 
\end{remark}

\section{Proofs for Section~\ref{sec_prelim}}\label{appendix_prelim}

\subsection{Proof of Corollary~\ref{cor_cond_dual}}\label{proofCordual}
We will use the following proposition.
\begin{proposition}\label{prop_essinf_rep}
$\cl\lrsquare{A + \Gamma_t^M(w)} = \lrcurly{m \in M_t \; | \; \trans{w}m \geq \essinf_{a \in A} \trans{w}a}$ if $A \subseteq M_t$ is convex and decomposable.
\end{proposition}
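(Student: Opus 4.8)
The plan is to establish the two inclusions of the claimed identity separately. Write $h := \essinf_{a\in A}\trans{w}a$, $C := \cl\lrsquare{A + \Gamma_t^M(w)}$, and $C_0 := \lrcurly{m\in M_t \;|\; \trans{w}m \geq h \Pas}$ (if $A=\emptyset$ both sides are empty, so we may assume $A\neq\emptyset$). The inclusion $C\subseteq C_0$ I would dispatch directly: for $a\in A$ and $g\in\Gamma_t^M(w)$ one has $\trans{w}(a+g)=\trans{w}a+\trans{w}g\geq\trans{w}a\geq h$ $\P$-a.s., so $A+\Gamma_t^M(w)\subseteq C_0$; and $C_0$ is closed --- weak*-closed when $p=\infty$ --- because $u\mapsto\trans{w}u$ sends $\LdpF{t}$ continuously into $L^1$ (H\"older) and an almost-sure inequality survives passage to an a.s.-convergent subsequence, while for $p=\infty$ one writes $C_0$ as an intersection over $B\in\Ft{t}$ of the weak*-closed half-spaces $\lrcurly{m \;|\; \E{\1_B\trans{w}m}\geq\E{\1_B h}}$ (for $B$ with $\1_B h$ integrable). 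Hence $C\subseteq\cl(C_0)=C_0$.

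For the reverse inclusion $C_0\subseteq C$ I would argue by contradiction and separation. Suppose $m\in C_0\setminus C$; since $C$ is closed and convex in $\LdpF{t}$, the Hahn--Banach theorem yields a separating functional $\zeta\in\LdqF{t}$ (weak*-continuous, i.e.\ $\zeta\in\LdoF{t}$, in the case $p=\infty$) with $\E{\trans{\zeta}m}<\inf_{c\in C}\E{\trans{\zeta}c}$. Two reductions follow immediately: since $a+\Gamma_t^M(w)\subseteq C$ for every $a\in A$ and $\Gamma_t^M(w)$ is a cone, finiteness of the infimum forces $\E{\trans{\zeta}g}\geq0$ for all $g\in\Gamma_t^M(w)$; and since only the first $m$ components of $\zeta$ enter the pairing with elements of $M_t$, we may replace $\zeta$ by its projection onto $M$ without changing either side of the separation inequality. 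The step I expect to be the main obstacle is then to identify $\zeta$, viewed as a functional on $M_t$, with a nonnegative multiple $\lambda w$ of $w$, where $\lambda\geq0$ is an $\Ft{t}$-measurable scalar with $\lambda w\in\LdqF{t}$: pointwise the dual cone of the half-space $\{u\in M:\trans{w}u\geq0\}$ is $\R_+w$ (interpreting $w$ as the element of $M$ obtained by zeroing its last $d-m$ coordinates), and transferring this description to $\LdpF{t}$ requires a measurable-selection argument exploiting the decomposability of $\Gamma_t^M(w)$.

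Granting $\zeta=\lambda w$ with $\lambda\geq0$, I would conclude by taking $c=a\in A$ in the separation inequality to get $\E{\lambda\trans{w}m}<\inf_{a\in A}\E{\lambda\trans{w}a}$, and then invoke the decomposability of $A$ together with \cite[Theorem 1]{Y85} --- exactly as in the proof of Proposition~\ref{prop_cond_Wts} --- to interchange infimum and expectation: the family $\lrcurly{\lambda\trans{w}a\;|\;a\in A}$ is directed downward by decomposability, so $\inf_{a\in A}\E{\lambda\trans{w}a}=\E{\essinf_{a\in A}\lambda\trans{w}a}=\E{\lambda h}$, the last equality pulling the nonnegative factor $\lambda$ out of the essential infimum. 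This gives $\E{\lambda(\trans{w}m-h)}<0$, contradicting $\lambda\geq0$ and $\trans{w}m\geq h$ $\P$-a.s.; the borderline case in which $h=-\infty$ on a set of positive measure is absorbed here, since the separation inequality already forces $\inf_{a\in A}\E{\lambda\trans{w}a}=\E{\lambda h}$ to be a finite real. Hence $C_0\subseteq C$, and the two inclusions together give the claim.
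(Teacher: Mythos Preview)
Your proposal is correct and follows essentially the same approach as the paper's proof: both establish $C\subseteq C_0$ by showing $C_0$ is closed (norm-closed via an a.s.-convergent subsequence for $p<\infty$, weak*-closed for $p=\infty$), and both establish $C_0\subseteq C$ by separating a hypothetical $m\in C_0\setminus C$, identifying the separating functional as $\lambda w$ with $\lambda\geq 0$ $\Ft{t}$-measurable, and then invoking decomposability (via \cite[Theorem~1]{Y85}) to interchange expectation and infimum and reach a contradiction. The only minor differences are cosmetic: for the weak*-closedness of $C_0$ the paper argues directly with a net and the set $\Delta=\{\trans{w}m<h\}$ rather than your intersection-of-half-spaces description, and for the identification $\zeta=\lambda w$ the paper simply asserts that the infimum over $C$ equals $-\infty$ unless $v=\lambda w$, which is exactly the contrapositive of your dual-cone computation.
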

\begin{proof}
First, note that 
\[D:=\lrcurly{m \in M_t \; | \; \trans{w}m \geq \essinf_{a \in A} \trans{w}a} = \overline\cl\lrsquare{A + \Gamma_t^M(w)},\]
where $\overline\cl$ is the closure with respect to the almost sure convergence of the set of random vectors.
Second, we will show that $\cl\lrsquare{A + \Gamma_t^M(w)} \subseteq D$ (recalling that the closure operator $\cl$ is w.r.t.\ the topological closure).  We will break this up into two cases: $p \in [1,\infty)$ and $p = \infty$.\\
Consider $p < \infty$.  To prove this statement we will show that $D$ is closed in the norm topology.  Let $(m_n)_{n \in \N} \to m \in M_t$ converge in the $p$-norm for $m_n \in D$ for every $n \in \N$.  Since $p$-norm convergence implies convergence in probability, we know that there exists a subsequence $(m_{n_k})_{k \in \N} \to m$ which converges almost surely, thus $m \in D$.  This implies that $\cl\lrsquare{A + \Gamma_t^M(w)} \subseteq D$.\\
Consider $p = \infty$.  To prove this statement we will show that $D$ is weak* closed.  
Let $(m_i)_{i \in I} \to m \in M_t$ in the weak* topology so that $m_i \in D$ for all $i \in I$.  Let $\Delta := \lrcurly{\omega \in \Omega \; | \; \trans{w(\omega)}m(\omega) < [\essinf_{a \in A} \trans{w}a](\omega)} \in \Ft{t}$.  Assume $\P(\Delta) > 0$, immediately it follows that
\begin{align*}
\E{1_\Delta \essinf_{a \in A}\trans{w}a} \leq \liminf_{i \in I} \E{1_\Delta \trans{w} m_i} = \E{1_\Delta \trans{w}m} < \E{1_\Delta \essinf_{a \in A} \trans{w}a}.
\end{align*}
By contradiction this implies $\P(\Delta) = 0$ and thus $\trans{w}m \geq \essinf_{a \in A} \trans{w}a$, i.e.\ $D$ is weak* closed.  This implies that $\cl\lrsquare{A + \Gamma_t^M(w)} \subseteq D$.  

Now we will show that $\cl\lrsquare{A + \Gamma_t^M(w)} \supseteq D$.  Assume this is not true, i.e.\ let $u \in D$ such that $u  \not\in \cl\lrsquare{A + \Gamma_t^M(w)}$.  By a separation argument there exists $v \in \LdqF{t}$ such that
\[\E{\trans{v}u} < \inf_{m \in \cl\lrsquare{A + \Gamma_t^M(w)}} \E{\trans{v}m}.\]
By construction of $\Gamma_t^M(w)$, 
\[\inf_{m \in \cl\lrsquare{A + \Gamma_t^M(w)}} \E{\trans{v}m} = \begin{cases}\inf_{a \in A} \E{\lambda \trans{w}a} &\text{if } v = \lambda w \text{ for some } \lambda \geq 0 \as\\ -\infty &\text{else}\end{cases}.\]
Noting that we can exchange the expectation and infimum due to decomposability (cf. \cite[Theorem 1]{Y85}),
\begin{align*}
\E{\lambda \trans{w}u} &< \inf_{a \in A} \E{\lambda \trans{w}a} = \E{\essinf_{a \in A} \lambda \trans{w}a} = \E{\lambda \essinf_{a \in A} \trans{w}a}.
\end{align*}
However this contradicts $u \in D$.  Thus $\cl\lrsquare{A + \Gamma_t^M(w)} \supseteq D$.
\end{proof}

\begin{proof}[Proof of Corollary~\ref{cor_cond_dual}]
First, by using Proposition~\ref{prop_essinf_rep}, we can reformulate the conditional penalty function as
\begin{equation}\label{alpharep}
\alpha_t(\Q,w) = \lrcurly{u \in M_t \; | \; \trans{w}u \geq \esssup_{Y \in A_t} \trans{w}\EQt{-Y}{t}}.
\end{equation}
So, for any $(\Q,w) \in \W_t$
\begin{align*}
&\lrparen{\EQt{-X}{t} + \Gamma_t(w)} \cap M_t -^. \alpha_t(\Q,w) 
\\&= \lrcurly{u \in M_t \; | \; u + \alpha_t(\Q,w) \subseteq \lrparen{\EQt{-X}{t} + \Gamma_t(w)} \cap M_t}\\
&= \lcurly{u \in M_t \; | \; \lrcurly{m \in M_t\;|\; \trans{w}m \geq \trans{w}u + \esssup_{Y \in A_t} \trans{w}\EQt{-Y}{t}} }
\\
&\quad\quad\rcurly{\subseteq \lrcurly{m \in M_t\;|\; \trans{w}m \geq \trans{w}\EQt{-X}{t}}}\\
&= \lrcurly{u \in M_t \; | \; \trans{w}u + \esssup_{Y \in A_t} \trans{w}\EQt{-Y}{t} \geq \trans{w}\EQt{-X}{t}}\\
&= \lrcurly{u \in M_t \; | \; \trans{w}u \geq \essinf_{Y \in A_t} \trans{w}\EQt{Y - X}{t}}\\
&= -\bar\alpha_t(\Q,w) + \lrparen{\EQt{-X}{t} + \Gamma_t(w)} \cap M_t.
\end{align*}
Thus, now the result follows directly from the dual representation~\cite[Theorem 2.3]{FR12b} w.r.t.\ the negative conjugate function.  The above chain of equalities follow via the definition of the Minkowski subtraction, the result of Proposition~\ref{prop_essinf_rep}, reformulating the inclusion, and by the definition of the negative conjugate function from~\cite{FR12b} respectively.  The conditionally coherent case is as in~\cite[Corollary~2.4]{FR12b}.
\end{proof}

\subsection{Proof of Theorem~\ref{thm_penalty}}
The proof of Theorem~\ref{thm_penalty} involves the following two propositions.
\begin{proposition}\label{prop_penalty1}
Consider $w\in \plus{M_{t,+}} \backslash \prp{M_t}$. For any sets $A,B \in \mathcal{G}(M_t;M_{t,+})$ such that $A \neq \emptyset$ if $\cl\lrsquare{B + G_t^M(w)} = M_t$, and  $B\neq \emptyset$ if $\cl\lrsquare{A + G_t^M(w)} = M_t$, it holds that
\begin{equation}\label{eqproofA1}
G_t^M(w) -^. \cl\lrsquare{A + B} = \cl\lrsquare{\lrparen{G_t^M(w) -^. A} + \lrparen{G_t^M(w) -^. B}}.
\end{equation}
\end{proposition}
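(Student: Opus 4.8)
Looking at this Minkowski-subtraction identity for upper sets with a half-space $G_t^M(w)$, here is how I would approach the proof.

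\textbf{Overall strategy.} The plan is to prove the two inclusions separately. The inclusion "$\supseteq$" is the easy, purely algebraic direction: if $m = m_A + m_B$ with $A + \{m_A\} \subseteq G_t^M(w)$ and $B + \{m_B\} \subseteq G_t^M(w)$, then using that $G_t^M(w)$ is a half-space (closed under addition with its own recession directions, and in fact $G_t^M(w) + G_t^M(w) \subseteq G_t^M(w)$ since $\E{\trans wu}\ge 0$ and $\E{\trans wv}\ge 0$ give $\E{\trans w(u+v)}\ge0$), one checks $\cl[A+B] + \{m\} \subseteq \cl[G_t^M(w) + G_t^M(w)] \subseteq G_t^M(w)$; then take closures. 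So $m \in G_t^M(w) -^. \cl[A+B]$, and since the left-hand side is closed (Minkowski subtractions from a closed set are closed) the closure on the right is absorbed.

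\textbf{The hard direction "$\subseteq$".} The key is to translate both sides into scalarizations via the functions $\rho_t^A(X):=\inf_{u\in A}\E{\trans wu}$ etc.; more concretely, for an upper set $C\in\mathcal G(M_t;M_{t,+})$ with $w\in\plus{M_{t,+}}\setminus\prp{M_t}$, one has $G_t^M(w) -^. C = \{u\in M_t \mid \E{\trans wu} \ge -\inf_{c\in C}\E{\trans wc}\}$ whenever $C\ne\emptyset$, and $=M_t$ exactly when $\inf_{c\in C}\E{\trans wc}=-\infty$, i.e.\ when $\cl[C+G_t^M(w)]=M_t$. This is where the side hypotheses enter: they rule out the degenerate case where one summand is empty while the other "fills" $M_t$, which would make the two sides disagree. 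Assuming $A,B$ nonempty with finite scalar values $a:=\inf_{u\in A}\E{\trans wu}$, $b:=\inf_{v\in B}\E{\trans wv}$, one computes $\inf_{z\in\cl[A+B]}\E{\trans wz}=a+b$ (the infimum over $A+B$ splits, and closure does not change the infimum of a continuous linear functional), so the left-hand side of \eqref{eqproofA1} is $\{m\in M_t\mid \E{\trans wm}\ge -(a+b)\}$. Meanwhile $G_t^M(w)-^.A=\{u\mid \E{\trans wu}\ge -a\}$ and similarly for $B$, and their Minkowski sum is $\{m\mid \E{\trans wm}\ge -(a+b)\}$, which is already closed — so both sides coincide. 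If instead one of $a,b$ is $-\infty$, say $a=-\infty$, the hypothesis forces $B\ne\emptyset$; then $G_t^M(w)-^.A=M_t$, so the right-hand side (after closure) is $M_t$, and the left-hand side is also $M_t$ since $\inf_{z\in\cl[A+B]}\E{\trans wz}=-\infty$.

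\textbf{Anticipated obstacle and bookkeeping.} The main obstacle is handling the empty/degenerate cases cleanly and justifying the two set-identities $G_t^M(w)-^.C = \{u \mid \E{\trans wu}\ge -\inf_{c\in C}\E{\trans wc}\}$ and $\inf_{z\in\cl[A+B]}\E{\trans wz}=a+b$; the latter uses that $z\mapsto\E{\trans wz}$ is continuous (norm topology for $p<\infty$, weak* for $p=\infty$, since $w\in\LdqF t$) so the infimum over $A+B$ equals the infimum over its closure, and the former is a direct unwinding of the definition $G_t^M(w)-^.C=\{m\mid C+\{m\}\subseteq G_t^M(w)\}$. One must also recall $w\notin\prp{M_t}$ is needed so that $\{u\in M_t\mid \E{\trans wu}\ge c\}$ is a genuine (proper, nonempty, non-full) half-space in $M_t$ for finite $c$, and that $\plus{M_{t,+}}$-membership of $w$ guarantees $G_t^M(w)\in\mathcal G(M_t;M_{t,+})$ so that the Minkowski subtractions again land in $\mathcal G(M_t;M_{t,+})$. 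With these preliminaries in place the argument is a short computation in each case.
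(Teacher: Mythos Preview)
Your approach is essentially the same as the paper's: both directions are handled by reducing to the scalar thresholds $a=\inf_{u\in A}\E{\trans wu}$ and $b=\inf_{v\in B}\E{\trans wv}$ via the identity $G_t^M(w)-^.C=\{m\in M_t\mid \E{\trans wm}\ge -\inf_{c\in C}\E{\trans wc}\}$, and the finite-threshold case is correct.

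However, your degenerate-case bookkeeping contains an error. When $a=-\infty$ you assert $G_t^M(w)-^.A=M_t$, but in fact $G_t^M(w)-^.A=\emptyset$: the threshold $-a=+\infty$ is never attained. Both sides of \eqref{eqproofA1} are then $\emptyset$, not $M_t$, and the inclusion ``$\subseteq$'' is vacuous --- which is exactly how the paper disposes of this case. You have conflated two distinct degeneracies: $\inf_{c\in C}\E{\trans wc}=-\infty$ (which yields $G_t^M(w)-^.C=\emptyset$ and $\cl[C+G_t^M(w)]=M_t$) versus $C=\emptyset$ (which yields $G_t^M(w)-^.C=M_t$ and $\cl[C+G_t^M(w)]=\emptyset$). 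Relatedly, you omit the case where one of $A,B$ is empty, say $B=\emptyset$ (i.e.\ $b=+\infty$). There the left-hand side equals $G_t^M(w)-^.\emptyset=M_t$, and to get the right-hand side equal to $M_t$ you need $G_t^M(w)-^.A\neq\emptyset$, i.e.\ $a>-\infty$; this is precisely what the hypothesis ``$B\neq\emptyset$ if $\cl[A+G_t^M(w)]=M_t$'' (read contrapositively) guarantees, and it is the only place that hypothesis is genuinely needed. The paper handles this case explicitly.
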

\begin{proof}
``$\supseteq$'' If  $\cl\lrsquare{\lrparen{G_t^M(w) -^. A} + \lrparen{G_t^M(w) -^. B}}=\emptyset$, the inclusion is trivial. So, let us assume the right hand side of \eqref{eqproofA1} is nonempty and consider an element $u \in \cl\lrsquare{\lrparen{G_t^M(w) -^. A} + \lrparen{G_t^M(w) -^. B}}$.  Then there exists a net $(u_i^A)_{i \in I}$ and $(u_j^B)_{j \in J}$ such that $u = \lim_{i,j} \lrparen{u_i^A + u_j^B}$ and $u_i^A \in G_t^M(w) -^. A$ and $u_j^B \in G_t^M(w) -^. B$ for every $i,j$.  Immediately, by definition of $-^.$ we obtain
\[u_i^A + u_j^B + \cl\lrsquare{A + B} \subseteq G_t^M(w) + G_t^M(w) = G_t^M(w).\]
Therefore, $u_i^A + u_j^B \in G_t^M(w) -^. \cl\lrsquare{A + B}$ for every $i,j$.  Since $G_t^M(w) -^. \cl\lrsquare{A + B}$ is closed by definition, one has $u \in G_t^M(w) -^. \cl\lrsquare{A + B}$.

``$\subseteq$'' If $G_t^M(w) -^. \cl\lrsquare{A + B}=\emptyset$, the inclusion is trivial. So, let us assume the left hand side of \eqref{eqproofA1} is nonempty and consider $u \in G_t^M(w) -^. \cl\lrsquare{A + B}$.  By definition this is equivalent to 
$\E{\trans{w}u} + \inf_{a \in A} \E{\trans{w}a} + \inf_{b \in B} \E{\trans{w}b} \geq 0$ and $u \in M_t$.
First consider the case where $\inf_{b \in B} \E{\trans{w}b} \in \R$. Let $u^B \in M_t$ so that $\E{\trans{w}u^B} = -\inf_{b \in B}\E{\trans{w}b}$, where the existence of $u^B$ is guaranteed by the continuity of the linear operator, $M_t$ being a linear space, and $w\not\in \prp{M_t}$. Define $u^A := u - u^B \in M_t$.  By construction $u^A \in G_t^M(w) -^. A$, $u^B \in G_t^M(w) -^. B$, and $u = u^A + u^B$.  That is, $u \in \cl\lrsquare{\lrparen{G_t^M(w) -^. A} + \lrparen{G_t^M(w) -^. B}}$. 
If $\inf_{b \in B} \E{\trans{w}b} = \infty$ then by assumption $\inf_{a \in A} \E{\trans{w}a} > -\infty$, so $G_t^M(w) -^. \cl\lrsquare{A + B} = M_t$, $G_t^M(w) -^. A \neq \emptyset$, and $G_t^M(w) -^. B = M_t$ and thus the inclusion trivially holds. The case $\inf_{b \in B} \E{\trans{w}b} = -\infty$ cannot occur under the current assumption of the left hand side of \eqref{eqproofA1} being nonempty as it would imply $\inf_{a \in A} \E{\trans{w}a} < \infty$ by assumption and therefore $G_t^M(w) -^. \cl\lrsquare{A+B} = \emptyset$.
\end{proof}

\begin{proposition}\label{prop_penalty2}
Let $s > t$ and $(\Q,w) \in \W_t$. For any set $A \in \mathcal{G}(M_s;M_{s,+})$ it holds \[G_t^M(w) -^. \EQt{A}{t} = \EQt{G_s^M(w_t^s(\Q,w)) -^. A}{t}.\]
\end{proposition}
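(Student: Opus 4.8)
The plan is to replace both Minkowski differences by explicit half-space descriptions and then to show that the $\Q$-conditional expectation transports one onto the other. Throughout, set
\[c := -\inf_{Z \in A}\E{\trans{w_t^s(\Q,w)}Z} \in [-\infty,+\infty].\]
Everything rests on a single scalar identity: for every $Z \in \LdpF{s}$,
\[\E{\trans{w}\EQt{Z}{t}} = \E{\trans{w_t^s(\Q,w)}Z}.\]
To prove it I would use that $\EQt{Z}{t} = \Et{\diag{\xi_{t,s}(\Q)}Z}{t}$ for $\Ft{s}$-measurable $Z$ (the cocycle property $\Et{\xi_{t,T}(\Q)}{s} = \xi_{t,s}(\Q)$ together with the tower rule), pull the $\Ft{t}$-measurable vector $w$ inside the conditional expectation, drop the conditioning under $\E{\cdot}$, and recognise $\trans{w}\diag{\xi_{t,s}(\Q)}Z = \trans{w_t^s(\Q,w)}Z$. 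Alongside this I would record that $\EQt{m}{t} = m$ for every $\Ft{t}$-measurable $m \in \LdpF{t}$ (since $\Et{\xi_{t,s}(\Q)}{t} = \transp{1,\ldots,1}$ by the definition of $\bar{\xi}$); because $M_t \subseteq M_s$ and conditional expectations of $M$-valued vectors remain $M$-valued, this gives in particular $\EQt{M_s}{t} = M_t$.

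Next I would invoke the elementary description of a Minkowski difference against a half-space already exploited in the proof of Proposition~\ref{prop_penalty1}: for any $B \subseteq M_t$,
\[G_t^M(w) -^. B = \lrcurly{m \in M_t \; | \; \E{\trans{w}m} \ge -\inf_{b \in B}\E{\trans{w}b}},\]
which yields $\emptyset$ when that infimum is $-\infty$ and $M_t$ when $B = \emptyset$. Applying this with $B = \EQt{A}{t}$ and using the scalar identity gives $G_t^M(w) -^. \EQt{A}{t} = \lrcurly{m \in M_t \; | \; \E{\trans{w}m} \ge c}$, and the same description in $M_s$ gives $G_s^M(w_t^s(\Q,w)) -^. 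A = \lrcurly{n \in M_s \; | \; \E{\trans{w_t^s(\Q,w)}n} \ge c}$.

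It then remains to prove $\EQt{\lrcurly{n \in M_s \; | \; \E{\trans{w_t^s(\Q,w)}n} \ge c}}{t} = \lrcurly{m \in M_t \; | \; \E{\trans{w}m} \ge c}$. The inclusion ``$\subseteq$'' is immediate, since $\EQt{n}{t} \in M_t$ and $\E{\trans{w}\EQt{n}{t}} = \E{\trans{w_t^s(\Q,w)}n} \ge c$ by the scalar identity. For ``$\supseteq$'', given $m \in M_t$ with $\E{\trans{w}m} \ge c$, I would simply take $n := m$: this lies in $M_s$ because $M_t \subseteq M_s$, it satisfies $\E{\trans{w_t^s(\Q,w)}m} = \E{\trans{w}m} \ge c$ by the scalar identity, and $\EQt{m}{t} = m$. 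In the degenerate case $c = -\infty$ this reduces to $\EQt{M_s}{t} = M_t$, and in the case $c = +\infty$ to $\emptyset = \emptyset$. Chaining this with the two half-space descriptions yields the asserted equality.

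I do not expect a genuine obstacle; the only points that need attention are the bookkeeping of the degenerate values $c = \pm\infty$ (in particular $A = \emptyset$, where both sides equal $M_t$, and the case where the infimum is $-\infty$, where both sides are empty), and the legitimacy of the substitution $n = m$ in the ``$\supseteq$'' direction, which hinges precisely on the two elementary facts $M_t \subseteq M_s$ and $\EQt{m}{t} = m$ for $\Ft{t}$-measurable $m$.
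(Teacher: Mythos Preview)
Your argument is correct and essentially identical to the paper's: both reduce the Minkowski differences to scalar half-space inequalities via the identity $\E{\trans{w}\EQt{Z}{t}} = \E{\trans{w_t^s(\Q,w)}Z}$ and then match the two sides using $M_t \subseteq M_s$ together with $\EQt{m}{t} = m$ for $m \in M_t$. The paper simply presents this as a single chain of equalities starting from the right-hand side, whereas you compute each side separately before matching; the content is the same.
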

\begin{proof}
\begin{align*}
&\EQt{G_s^M(w_t^s(\Q,w)) -^. A}{t} = \lrcurly{\EQt{u}{t}\;|\; u \in M_s, u + A \subseteq G_s^M(w_t^s(\Q,w))}\\
    &= \lrcurly{\EQt{u}{t}\;|\; u \in M_s, \inf_{a \in A} \E{\trans{w_t^s(\Q,w)}(u + a)} \geq 0}\\
    &= \lrcurly{\EQt{u}{t}\;|\; u \in M_s, \inf_{a \in A} \E{\trans{w}\EQt{u + a}{t}} \geq 0}\\
    &= \lrcurly{u \in M_t\;|\; \inf_{a \in A} \E{\trans{w}(\EQt{a}{t} + u)} \geq 0}\\
    &= \lrcurly{u \in M_t\;|\; u + \EQt{A}{t} \subseteq G_t^M(w)}= G_t^M(w) -^. \EQt{A}{t}.
\end{align*}
\end{proof}

\begin{proof}[Proof of Theorem~\ref{thm_penalty}]
Recall from \cite[Remark~5.5]{setOPsurvey} that the conjugate and negative conjugate are related via $\beta_t(\Q,w) = G_t^M(w) -^. (-\bar\beta_t(\Q,w))$ and $-\bar\beta_t(\Q,w) = G_t^M(w) -^. \beta_t(\Q,w)$ for every $(\Q,w) \in \W_t$.   Recall from \cite[Theorem 3.2]{FR12b} that multiportfolio time consistency is equivalent to the cocycle condition on the negative conjugates, i.e.,
\[-\bar\beta_t(\Q,w) = \cl\lrparen{-\bar\beta_{t,s}(\Q,w) + \EQt{-\bar\beta_s(\Q,w_t^s(\Q,w))}{t}}\]
for every $(\Q,w) \in \W_t$ and all times $0 \leq t < s \leq T$.  Let $(\Q,w) \in \W_t$.

``$\Rightarrow$'' 
Since $\seq{R}$ is multiportfolio time consistent, we obtain
\begin{align*}
\beta_t&(\Q,w) = G_t^M(w) -^. (-\bar\beta_t(\Q,w))\\
    &= G_t^M(w) -^. \cl\lrparen{-\bar\beta_{t,s}(\Q,w) + \EQt{-\bar\beta_s(\Q,w_t^s(\Q,w))}{t}}\\
    &= \cl\lrsquare{\lrparen{G_t^M(w) -^. (-\bar\beta_{t,s}(\Q,w))} + \lrparen{G_t^M(w) -^. \EQt{-\bar\beta_s(\Q,w_t^s(\Q,w))}{t}}}\\
    &= \cl\lrsquare{\beta_{t,s}(\Q,w) + \lrparen{\EQt{G_s^M(w_t^s(\Q,w) -^. (-\bar\beta_s(\Q,w_t^s(\Q,w)))}{t}}}\\
    &= \cl\lrsquare{\beta_{t,s}(\Q,w) + \EQt{\beta_s(\Q,w_t^s(\Q,w))}{t}},
\end{align*}
where the third and fourth equations follow from Proposition~\ref{prop_penalty1} and Proposition~\ref{prop_penalty2}, respectively. The assumptions of Proposition~\ref{prop_penalty1} are satisfied as $-\bar\beta_{t}(\Q,w)) \neq \emptyset$ for $(\Q,w) \in \W_t$ and thus $A:=-\bar\beta_{t,s}(\Q,w)\neq\emptyset$ and $B:= \EQt{-\bar\beta_s(\Q,w_t^s(\Q,w))}{t}\neq\emptyset$. 

``$\Leftarrow$'' 
Since $\beta_t(\Q,w) = \cl\lrparen{\beta_{t,s}(\Q,w) + \EQt{\beta_s(\Q,w_t^s(\Q,w))}{t}}$, it holds
\begin{align*}
-\bar\beta_t&(\Q,w) = G_t^M(w) -^. \beta_t(\Q,w)\\
    &= G_t^M(w) -^. \cl\lrparen{\beta_{t,s}(\Q,w) + \EQt{\beta_s(\Q,w_t^s(\Q,w))}{t}}\\
    &= \cl\lrsquare{\lrparen{G_t^M(w) -^. \beta_{t,s}(\Q,w)} + \lrparen{G_t^M(w) -^. \EQt{\beta_s(\Q,w_t^s(\Q,w))}{t}}}\\
    &= \cl\lrsquare{-\bar\beta_{t,s}(\Q,w) + \lrparen{\EQt{G_s^M(w_t^s(\Q,w)) -^. \beta_s(\Q,w_t^s(\Q,w))}{t}}}\\
    &= \cl\lrparen{-\bar\beta_{t,s}(\Q,w) + \EQt{-\bar\beta_s(\Q,w_t^s(\Q,w))}{t}}.
\end{align*}
Again, the third and fourth equations follow from Proposition~\ref{prop_penalty1} and Proposition~\ref{prop_penalty2}, respectively. The assumptions of Proposition~\ref{prop_penalty1} are satisfied as $\beta_t(\Q,w) \neq M_t$ for $(\Q,w) \in \W_t$ and thus by the cocycle condition $A:=  \beta_{t,s}(\Q,w)=\cl\lrsquare{A + G_t^M(w)} \neq M_t $ and $B:=\EQt{\beta_s(\Q,w_t^s(\Q,w))}{t}=\cl\lrsquare{B + G_t^M(w)}  \neq M_t$. 
\end{proof}

\subsection{Proof of Theorem~\ref{thm_cond_penalty}}
The proof of Theorem~\ref{thm_cond_penalty} uses the following two propositions.

\begin{proposition}\label{prop_cond_penalty1}
Let $w\in \plus{M_{t,+}} \backslash \prp{M_t}$. For any decomposable sets $A,B \in \mathcal{G}(M_t;M_{t,+})$
with $A \neq \emptyset$ if $\P(\cl\lrsquare{B + \Gamma_t^M(w)} = M) > 0$, and $B \neq \emptyset$ if $\P(\cl\lrsquare{A + \Gamma_t^M(w)} = M) > 0$, it holds that
\begin{equation}\label{plus}
\Gamma_t^M(w) -^. \cl\lrsquare{A + B} = \cl\lrsquare{\lrparen{\Gamma_t^M(w) -^. A} + \lrparen{\Gamma_t^M(w) -^. B}}.
\end{equation}
\end{proposition}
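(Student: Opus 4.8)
The statement is the conditional (set-valued, random) analogue of Proposition~\ref{prop_penalty1}, with $G_t^M(w)$ replaced by the conditional half-space $\Gamma_t^M(w)$ and with $\inf$ of $\E{\trans{w}\cdot}$ replaced by $\essinf$ of $\trans{w}\cdot$. Accordingly, the plan is to follow the same two-inclusion scheme as in Proposition~\ref{prop_penalty1}, but to replace every scalar-expectation argument with its pointwise ($\P$-a.s., $\Ft{t}$-measurable) counterpart, and to replace the single random vector $u^B$ realizing an infimum by one realizing an \emph{essential} infimum, whose existence requires decomposability. The description ``$\Gamma_t^M(w) -^. A$'' will throughout be rewritten, via Proposition~\ref{prop_essinf_rep} (applicable since $A$ is convex and decomposable), as $\lrcurly{m \in M_t \; | \; \trans{w}m \geq -\essinf_{a \in A} \trans{w}a}$, i.e.\ it is governed by the $\Ft{t}$-measurable random variable $\essinf_{a \in A}\trans{w}a \in \lrsquare{-\infty,\infty}$; similarly for $B$ and for $\cl\lrsquare{A+B}$, where $\essinf_{c \in \cl[A+B]}\trans{w}c = \essinf_{a \in A}\trans{w}a + \essinf_{b \in B}\trans{w}b$ (again by decomposability, cf.\ \cite[Theorem~1]{Y85}).

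\textbf{The ``$\supseteq$'' inclusion.} This is the easy direction and goes exactly as in Proposition~\ref{prop_penalty1}. If the right-hand side is empty there is nothing to prove, so take $u$ in the closure of $\lrparen{\Gamma_t^M(w) -^. A} + \lrparen{\Gamma_t^M(w) -^. B}$, with a net $u = \lim_{i,j}(u_i^A + u_j^B)$ where $u_i^A + A \subseteq \Gamma_t^M(w)$ and $u_j^B + B \subseteq \Gamma_t^M(w)$. Since $\Gamma_t^M(w) + \Gamma_t^M(w) = \Gamma_t^M(w)$, we get $u_i^A + u_j^B + \cl\lrsquare{A+B} \subseteq \Gamma_t^M(w)$, i.e.\ $u_i^A + u_j^B \in \Gamma_t^M(w) -^. \cl\lrsquare{A+B}$; and $\Gamma_t^M(w) -^. \cl\lrsquare{A+B}$ is closed (it is a Minkowski difference, or directly of the form given by Proposition~\ref{prop_essinf_rep} which shows it is closed), so $u$ lies in it.

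\textbf{The ``$\subseteq$'' inclusion.} This is the main step. If the left-hand side is empty, nothing to prove; so assume $u \in \Gamma_t^M(w) -^. \cl\lrsquare{A+B}$, which by the reformulation above means $u \in M_t$ and $\trans{w}u + \essinf_{a \in A}\trans{w}a + \essinf_{b \in B}\trans{w}b \geq 0$ $\P$-a.s. On the $\Ft{t}$-set $\Omega_B := \lrcurly{\essinf_{b\in B}\trans{w}b \in \R}$ (finite-valued part) I will choose $u^B \in M_t$ with $\trans{w}u^B = -\essinf_{b \in B}\trans{w}b$ on $\Omega_B$; existence of such an $\Ft{t}$-measurable selection uses $w \notin \prp{M_t}$ (so $\trans{w}\colon M_t \to \LpF{t}$ is surjective onto an appropriate space) together with decomposability, and the value of $u^B$ off $\Omega_B$ is irrelevant on that part. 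Set $u^A := u - u^B \in M_t$; then on $\Omega_B$ one checks $\trans{w}u^A + \essinf_{a\in A}\trans{w}a = \trans{w}u + \essinf_{a}\trans{w}a + \essinf_{b}\trans{w}b \geq 0$, so $u^A \in \Gamma_t^M(w) -^. A$ and $u^B \in \Gamma_t^M(w) -^. B$, whence $u = u^A + u^B$ lies in the right-hand side. On the complementary $\Ft{t}$-set $\Omega_B^c = \lrcurly{\essinf_{b\in B}\trans{w}b = \pm\infty}$ the constraint forces $\essinf_{b}\trans{w}b = +\infty$ there (the $-\infty$ case is excluded by the assumption that the left-hand side is nonempty, which forces $\essinf_{a}\trans{w}a < \infty$ on that set, hence $\Gamma_t^M(w) -^. \cl\lrsquare{A+B}$ would be empty on it); but then on $\Omega_B^c$ one has $\cl\lrsquare{B + \Gamma_t^M(w)} = M$ with positive probability unless $\P(\Omega_B^c)=0$, which is precisely where the hypothesis ``$A \neq \emptyset$ if $\P(\cl\lrsquare{B+\Gamma_t^M(w)}=M)>0$'' enters: it guarantees $A$ (hence $\Gamma_t^M(w)-^.A$ on that set equals $M$) is available so the inclusion holds trivially there. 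The pieces are then glued across the $\Ft{t}$-partition $\Omega_B \cup \Omega_B^c$ using decomposability of the right-hand side.

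\textbf{Main obstacle.} The delicate point, absent in the unconditional Proposition~\ref{prop_penalty1}, is the pointwise bookkeeping of where $\essinf_{b}\trans{w}b$ (and $\essinf_a \trans{w}a$) is finite versus $\pm\infty$: one must partition $\Omega$ into $\Ft{t}$-measurable pieces according to these cases, construct the selection $u^B$ only on the finite part, invoke the two ``nonemptiness'' hypotheses exactly on the infinite parts, and then reassemble a single global $u^A, u^B \in M_t$ by decomposability. Verifying measurability of the selection $u^B$ realizing the essential infimum—using $w \notin \prp{M_t}$ and decomposability, in the spirit of \cite[Theorem~A.33]{FS11} as already used in the proof of Proposition~\ref{prop_scalar-cc}—is the one technical lemma that must be handled with care; everything else is a routine upgrade of the unconditional argument.
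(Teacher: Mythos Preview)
Your approach is the same as the paper's: both mirror the proof of Proposition~\ref{prop_penalty1}, upgrading $\inf\,\E{\trans{w}\cdot}$ to $\essinf\,\trans{w}\cdot$ and $G_t^M(w)$ to $\Gamma_t^M(w)$. The paper's proof is in fact a one-line remark that the argument is analogous, supplemented only by the equivalence $\essinf_{b\in B}\trans{w}b\notin\LoF{t}\iff B=\emptyset$ or $\P(\cl[B+\Gamma_t^M(w)]=M)>0$; in particular the case split is \emph{global}, not an $\omega$-wise partition.

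Your ``$\subseteq$'' case analysis has the two degenerate cases interchanged. The event $\{\essinf_b=+\infty\}$ can have positive probability only when $B=\emptyset$ (any $b_0\in B$ gives $\essinf_b\le\trans{w}b_0<\infty$ a.s.), and then $\cl[B+\Gamma_t^M(w)]=\emptyset$, \emph{not} $M$; so the hypothesis you invoke there, ``$A\neq\emptyset$ if $\P(\cl[B+\Gamma_t^M(w)]=M)>0$'', is vacuous. The relevant hypothesis for $B=\emptyset$ is the symmetric one, whose contrapositive yields $\P(\cl[A+\Gamma_t^M(w)]=M)=0$, i.e.\ $\essinf_a>-\infty$ a.s., so that $\Gamma_t^M(w)-^.A\neq\emptyset$ and the right-hand side equals $M_t$. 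Conversely, it is the event $\{\essinf_b=-\infty\}$ (equivalently $\P(\cl[B+\Gamma_t^M(w)]=M)>0$) that the $A\neq\emptyset$ hypothesis addresses, and, as you yourself argue, that event is incompatible with nonemptiness of the left-hand side. Once the labels are corrected, the $\Ft{t}$-partition and the ``gluing by decomposability'' you propose become unnecessary: exactly as in the unconditional proof, one has either (i) $\essinf_b\in\LoF{t}$ and a single global choice of $u^B\in M_t$, or (ii) $B=\emptyset$ handled globally as above.
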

\begin{proof}
This follows analogously to the proof of Proposition~\ref{prop_penalty1} noting that, for the second part of the proof, $\essinf_{b \in B} \trans{w}b \not\in \LoF{t}$ if and only if either $B = \emptyset$ or $\P(\cl\lrsquare{B + \Gamma_t^M(w)} = M) > 0$.
\end{proof}

\begin{proposition}\label{prop_cond_penalty2}
Let $s > t$ and $(\Q,w) \in \W_t^e$. For any decomposable set $A \in \mathcal{G}(M_s;M_{s,+})$ it holds that
\[\Gamma_t^M(w) -^. \EQt{A}{t} = \cl\EQt{\Gamma_s^M(w_t^s(\Q,w)) -^. A}{t}.\]
\end{proposition}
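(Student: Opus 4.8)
The plan is to mimic the set‑algebra proof of the unconditional Proposition~\ref{prop_penalty2}, replacing $\inf$/$\sup$ by $\essinf$/$\esssup$ and inserting a closure where the image of a closed set under $\EQt{\cdot}{t}$ is no longer closed. Write $v := w_t^s(\Q,w) = \diag{w}\xi_{t,s}(\Q)$; note $v \geq 0$, since $w \geq 0$ (because $w_t^T(\Q,w) \in \LdqF{+}$ and $\Q$ is equivalent) and $\xi_{t,s}(\Q) \geq 0$. Two elementary identities do the bookkeeping that the tower property does in Proposition~\ref{prop_penalty2}: (i) $\trans{w}\EQt{Z}{t} = \Et{\trans{v}Z}{t}$ for every $Z \in M_s$, which follows from $\Et{\xi_{t,T}(\Q)}{s} = \xi_{t,s}(\Q)$ and the tower property; and (ii) $\EQt{Z}{t} = Z$ for $Z \in M_t$, since $\Et{\bar\xi_{t,T}(\Q_i)}{t} = 1$.

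First I would reduce the claim to a one‑dimensional statement. By the definition of $-^.$, identity (i), decomposability of $A$ and of $\EQt{A}{t}$ (the latter because $\Ft{t}$‑sets pull out of $\EQt{\cdot}{t}$), and the fact that $\essinf$ commutes with $\Et{\cdot}{t}$ along a downward‑directed decomposable family, one rewrites both sides. Setting $h := -\essinf_{a\in A}\trans{v}a = \esssup_{a\in A}(-\trans{v}a)$ (an $\Ft{s}$‑measurable random variable), one gets $\Gamma_s^M(v) -^. A = \{u \in M_s : \trans{v}u \geq h\}$ and $\Gamma_t^M(w) -^. \EQt{A}{t} = \{y \in M_t : \trans{w}y \geq \Et{h}{t}\}$, so the identity becomes
\[\{y \in M_t : \trans{w}y \geq \Et{h}{t}\} = \cl\{\EQt{u}{t} : u \in M_s,\ \trans{v}u \geq h\}.\]
(The degenerate cases are disposed of immediately: $A = \emptyset$ makes both sides $M_t$, and if $h = +\infty$ or $\Et{h}{t} = +\infty$ on a set of positive measure both sides are empty.)

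The inclusion ``$\supseteq$'' is the easy half: for $u \in M_s$ with $\trans{v}u \geq h$, identity (i) gives $\trans{w}\EQt{u}{t} = \Et{\trans{v}u}{t} \geq \Et{h}{t}$, so $\EQt{u}{t}$ lies in the left‑hand set, and that set is closed by the closedness argument in the proof of Proposition~\ref{prop_essinf_rep}; hence it contains the closure on the right. This is also where the asymmetry comes from: the left‑hand set is automatically closed, whereas the right‑hand collection is the image of a closed set under $\EQt{\cdot}{t}$ and need not be closed — which is precisely why no closure appears in the unconditional Proposition~\ref{prop_penalty2} but one is needed here.

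The inclusion ``$\subseteq$'' is the substance and the step I expect to be the main obstacle. Fix $y \in M_t$ with $\trans{w}y \geq \Et{h}{t}$; by (ii) it suffices to produce $\tilde u_n \in M_s$ with $\trans{v}(y + \tilde u_n) \geq h$ and $\EQt{\tilde u_n}{t} \to 0$, since then $y = \lim_n \EQt{y + \tilde u_n}{t}$. One cannot simply take $\tilde u_n = 0$: $\trans{w}y \geq \Et{h}{t}$ only controls $y$ ``on average over $[t,s]$'', whereas $\trans{v}y \geq h$ would be a pointwise requirement at time $s$, so a genuine correction is needed. Put $g := h - \trans{v}y$, so that $\Et{g}{t} = \Et{h}{t} - \trans{w}y \leq 0$ and $g$ is bounded below by the integrable random variable $-\trans{v}a_0 - \trans{v}y$ for any fixed $a_0 \in A$. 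Partition $\Omega$ into the $\Ft{t}$‑measurable sets $B_j := \{w_j > 0,\ w_1 = \dots = w_{j-1} = 0\}$, $j = 1,\dots,m$, together with $B_0 := \{w_1 = \dots = w_m = 0\}$; on $B_0$ both $\trans{v}u$ and $g$ vanish for every $u \in M_s$, so one may take $\tilde u_n = 0$ there, while on $B_j$ — where $v_j = w_j \bar\xi_{t,s}(\Q_j) > 0$ by equivalence of $\Q$ and where $\Et{1_{B_j}\bar\xi_{t,s}(\Q_j)}{t} = 1_{B_j}$ — one puts the correction into coordinate $j$, taking $\tilde u_n$ to be a suitable truncation of $1_{B_j} g / v_j$ plus an $\Ft{t}$‑measurable centering term chosen so that the $j$‑th coordinate of $\EQt{1_{B_j}\tilde u_n}{t}$ vanishes; the sign $\Et{g}{t} \leq 0$ forces this centering term to be nonnegative, which keeps $\trans{v}\tilde u_n \geq g$ valid after truncation. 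The reason one must pass to a limit rather than solve $\trans{v}\tilde u = g$ exactly is that $g/v$ need not be $p$‑integrable — $v = w_t^s(\Q,w)$ can have coordinates arbitrarily close to $0$ and $h$ is only an $\esssup$ of integrable random variables — and this is exactly the mechanism producing the closure. Making the truncation estimates, the convergence $\EQt{\tilde u_n}{t} \to 0$, and the bookkeeping over the pieces $B_j$ rigorous is the technical heart of the proof.
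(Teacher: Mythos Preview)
Your reduction and the inclusion ``$\supseteq$'' are correct and agree with the paper. For the hard inclusion the paper takes a different route: it argues by separation rather than by construction. Assuming some $u$ lies in the left-hand side but not in the closed convex set $\cl\EQt{\Gamma_s^M(v) -^. A}{t}$, one separates with $f \in \LdqF{t}$; decomposability of $\Gamma_s^M(v) -^. A$ then forces $f = \lambda w$ for some $\lambda \in \LpK{0}{t}{\R_{++}}$ (otherwise the infimum is $-\infty$), and the separating inequality $\E{\lambda\trans{w}u} < \E{\lambda\esssup_{a\in A}(-\trans{w}\EQt{a}{t})}$ directly contradicts $u \in \Gamma_t^M(w) -^. \EQt{A}{t}$. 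No integrability of $g/v$ ever enters.

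Your truncation-plus-centering scheme, as sketched, has a gap. If you truncate $g/v_j$ from above at level $n$, the centering term $r_n$ is indeed nonnegative, but then $\trans{v}\tilde u_n \geq g$ on $B_j$ becomes $v_j r_n \geq (g - v_j n)^+$: the $\Ft{t}$-measurable scalar $r_n$ must pointwise dominate the $\Ft{s}$-measurable overflow $(g/v_j - n)^+$, which fails whenever $g/v_j$ is essentially unbounded from above. Truncating from below preserves the constraint but leaves the $L^p$-integrability of $(\tilde u_n)_j$ unresolved, and the sign of the centering is no longer controlled. Committing the correction to a single coordinate $j$ compounds the problem, since even when some $u_0 \in M_s$ satisfies $\trans{v}u_0 \geq h$ there need be no single-coordinate solution. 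The separation argument avoids all of this.
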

\begin{proof}
\begin{align*}
&\cl\EQt{\Gamma_s^M(w_t^s(\Q,w)) -^. A}{t} = \cl\lrcurly{\EQt{u}{t}\;|\; u \in M_s, u + A \subseteq \Gamma_s^M(w_t^s(\Q,w))}\\
    &= \cl\lrcurly{\EQt{u}{t}\;|\; u \in M_s, \essinf_{a \in A} \trans{w_t^s(\Q,w)}(u + a) \geq 0}\\
    &\subseteq \cl\lrcurly{\EQt{u}{t}\;|\; u \in M_s, \Et{\essinf_{a \in A} \trans{w_t^s(\Q,w)}(u + a)}{t} \geq 0}\\
    &= \cl\lrcurly{\EQt{u}{t}\;|\; u \in M_s, \essinf_{a \in A} \trans{w}\EQt{u + a}{t} \geq 0}\\
    &= \cl\lrcurly{u \in M_t\;|\; u + \EQt{A}{t} \subseteq \Gamma_t^M(w)}
    = \Gamma_t^M(w) -^. \EQt{A}{t}.
\end{align*}
We can interchange expectation and essential infimum because $A$ is decomposable and a set of integrable random variables (see \cite[Theorem 1]{Y85}).   If $\Gamma_t^M(w) -^. \EQt{A}{t}$ is empty, then equality follows immediately. Now consider a point $u \in \Gamma_t^M(w) -^. \EQt{A}{t}$ and assume $u \not\in \cl\EQt{\Gamma_s^M(w_t^s(\Q,w)) -^. A}{t}$.  Since $\cl\EQt{\Gamma_s^M(w_t^s(\Q,w)) -^. A}{t}$ is closed and convex, we can separate $\{u\}$ and $\cl\EQt{\Gamma_s^M(w_t^s(\Q,w)) -^. A}{t}$ by some $v \in \LdqF{t}$, i.e. 
\begin{align*}
\E{\trans{v}u} &< \inf_{z \in \cl\EQt{\Gamma_s^M(w_t^s(\Q,w)) -^. A}{t}} \E{\trans{v}z} = \inf_{z \in \Gamma_s^M(w_t^s(\Q,w)) -^. A} \E{\trans{w_t^s(\Q,v)}z}\\
&=\E{\essinf_{z \in \Gamma_s^M(w_t^s(\Q,w)) -^. A} \trans{w_t^s(\Q,v)}z}.
\end{align*}
Note that in the last equality above we can interchange the expectation and infimum since $\Gamma_s^M(w_t^s(\Q,w)) -^. A$ is decomposable and integrable. 
By construction
\[\essinf_{z \in \Gamma_s^M(w_t^s(\Q,w)) -^. A} \trans{w_t^s(\Q,v)}z = \begin{cases} \esssup_{a \in A} (-\trans{w_t^s(\Q,w)}a )&\text{on } D\\ -\infty &\text{on } D^c,\end{cases}\]
where $D = \lrcurly{\omega \in \Omega \; | \; G_0(w_t^s(\Q,v)[\omega]) = G_0(w_t^s(\Q,w)[\omega])}$.  Since $\Q \in \mathcal{M}^e$, one has that $G_0(w_t^s(\Q,v)[\omega]) = G_0(w_t^s(\Q,w)[\omega])$ if and only if $v(\omega) = \lambda(\omega)w(\omega)$ for some $\lambda \in \LpK{0}{t}{\R_{++}}$ (such that $\lambda w \in \LqF{t}$).  Thus, 
$\E{\essinf_{z \in \Gamma_s^M(w_t^s(\Q,w)) -^. A} \trans{w_t^s(\Q,v)}z} > -\infty$ if and only if
\begin{align*}
\E{\essinf_{z \in \Gamma_s^M(w_t^s(\Q,w)) -^. A} \trans{w_t^s(\Q,v)}z} &= \E{\lambda \essinf_{z \in \Gamma_s^M(w_t^s(\Q,w)) -^. A} \trans{w_t^s(\Q,w)}z} 
\\
&= \E{\lambda \esssup_{a \in A}( -\trans{w}\EQt{a}{t})}.
\end{align*}
But this implies $\E{\lambda \trans{w}u} < \E{\lambda \esssup_{a \in A} (-\trans{w}\EQt{a}{t})}$, which is a contradiction to $u \in \Gamma_t^M(w) -^. \EQt{A}{t}$.
\end{proof}

\begin{proof}[Proof of Theorem~\ref{thm_cond_penalty}]
Note that $\alpha_t(\Q,w) = \Gamma_t^M(w) -^. (-\bar\alpha_t(\Q,w))$ and $-\bar\alpha_t(\Q,w) = \Gamma_t^M(w) -^. \alpha_t(\Q,w)$ for every $(\Q,w) \in \W_t$.  
The proof is analog to the proof of Theorem~\ref{thm_penalty}, but using Propositions~\ref{prop_cond_penalty1} and~\ref{prop_cond_penalty2} instead, where the assumptions of Proposition~\ref{prop_cond_penalty1} are satisfied as for 
$(\Q,w) \in \W_t^e$, $-\bar\alpha_t(\Q,w)\neq\emptyset$ 
and $\P(\alpha_t(\Q,w)= M)=0$. 
\end{proof}

\section{Proofs for Section~\ref{sec_supermtg}}

\subsection{Proof of Lemma~\ref{lemma1}}
Throughout we will use the following set of dual variables for times $t \leq s \leq T$ \[\W_t^s := \lrcurly{(\Q,w) \in \W_t\;|\; \beta_s(\Q,w_t^s(\Q,w)) \neq \emptyset}.\]
\begin{proof}
``$\Leftarrow$'' We will prove first that conditions~\eqref{eq_supermtg-1} and~\eqref{eq_supermtg-2} imply the supermartingale property \eqref{SMP}.
    If $(\Q,w) \not\in \W_t^t$, then $V_t^{(\Q,w)}(X) = \emptyset$ for any $X \in \LdpF{}$ and \eqref{SMP} is satisfied.  Now, let $X \in \LdpF{}$ and $(\Q,w) \in \W_t^t$. It holds
    \begin{align*}
    V_t^{(\Q,w)}(X) 
    = \lrcurly{u \in M_t\;|\; \E{\trans{w}u} \geq \inf_{Z \in R_t(X)} \E{\trans{w}Z} + \sup_{Y_t \in A_t} \E{\trans{w}\EQt{-Y_t}{t}}}.
    \end{align*}
    Similarly, it follows that
        \begin{align*}
            &\EQt{V_s^{(\Q,w_t^s(\Q,w))}(X)}{t} 
            \\
            =& \lrcurly{u \in M_t\;|\; \E{\trans{w}u} \geq \inf_{Z \in R_s(X)} \E{\trans{w}\EQt{Z}{t}} + \sup_{Y_s \in A_s} \E{\trans{w}\EQt{-Y_s}{t}}}
        \end{align*}
        if $\beta_s(\Q,w_t^s(\Q,w)) \neq \emptyset$, i.e., if $(\Q,w) \in \W_t^s$.  The latter is true as we will now show that condition~\eqref{eq_supermtg-1} yields in fact $\W_t^t \subseteq \W_t^s$.
       Note that condition~\eqref{eq_supermtg-1} implies $A_t \supseteq A_s + A_{t,s}$ (Lemma 3.6(iii) in~\cite{FR12}) which yields for every $(\Q,w)\in\W_t$
        \begin{align*}
            \beta_t(\Q,w) &\subseteq \cl\lrparen{\beta_{t,s}(\Q,w) +
            \EQt{\beta_s(\Q,w_t^s(\Q,w))}{t}}.
        \end{align*}
        This follows from a trivial modification to the first part of the proof of Theorem~3.2 in~\cite{FR12b}, followed by switching to the positive conjugate via $\beta_t(\Q,w) = G_t^M(w) -^. (-\bar\beta_t(\Q,w))$ and using Propositions~\ref{prop_penalty1} and~\ref{prop_penalty2}.
        Therefore, if $\beta_t(\Q,w) \neq \emptyset$ it must follow that $\beta_s(\Q,w_t^s(\Q,w)) \neq
        \emptyset$, i.e.\ $\W_t^t \subseteq \W_t^s$.
        Thus, the supermartingale property holds via
        \begin{align}
            \nonumber & V_t^{(\Q,w)}(X) = \lrcurly{u \in M_t\;|\; \E{\trans{w}u} \geq \inf_{Z_t \in R_t(X)} \sup_{Y_t \in A_t} \E{\trans{w}(Z_t - \EQt{Y_t}{t})}}\\
            \label{eq_supermtg_penaltysum} &\subseteq \lrcurly{u \in M_t\;|\; \E{\trans{w}u} \geq \inf_{Z_t \in R_t(X)} \sup_{\substack{Y_{t,s} \in A_{t,s}\\ Y_s \in A_s}} \E{\trans{w}(Z_t - \EQt{Y_{t,s} + Y_s}{t})}}\\
            \begin{split} \nonumber 
                &= \lrcurly{u \in M_t\;|\; \E{\trans{w}u} \geq \inf_{Z_t \in R_t(X)} \sup_{Y_{t,s} \in A_{t,s}} \E{\trans{w}(Z_t - \EQt{Y_{t,s}}{t})}}\\
                &\quad\quad + \lrcurly{u \in M_t\;|\; \E{\trans{w}u} \geq \sup_{Y_s \in A_s} \E{-\trans{w}\EQt{Y_s}{t}}}
                \end{split}\\
            \begin{split} \label{eq_supermtg_recursive} 
                &\subseteq \lrcurly{u \in M_t\;|\; \E{\trans{w}u} \geq \inf_{Z_s \in R_s(X)} \E{\trans{w}\EQt{Z_s}{t}}}\\
                &\quad\quad + \lrcurly{u \in M_t\;|\; \E{\trans{w}u} \geq \sup_{Y_s \in A_s} \E{-\trans{w}\EQt{Y_s}{t}}}
                \end{split}\\
            \nonumber &= \lrcurly{u \in M_t\;|\; \E{\trans{w}u} \geq \inf_{Z_s \in R_s(X)} \sup_{Y_s \in A_s} \E{\trans{w}\EQt{Z_s-Y_s}{t}}}\\
            \nonumber &= \EQt{V_s^{(\Q,w_t^s(\Q,w))}(X)}{t}.
        \end{align}
        Inclusion~\eqref{eq_supermtg_penaltysum} follows from condition~\eqref{eq_supermtg-1} (as \eqref{eq_supermtg-1} implies $A_t \supseteq A_s + A_{t,s}$ by Lemma 3.6(iii) in~\cite{FR12}).
        Inclusion~\eqref{eq_supermtg_recursive} is true if and only if
        \[\inf_{Z_t \in R_t(X)} \E{\trans{w}Z_t} \geq \inf_{Z_s \in R_s(X)} \inf_{Y_{t,s} \in A_{t,s}} \E{\trans{w}\EQt{Z_s + Y_{t,s}}{t}}.\]
        But this follows from $R_t(X) \subseteq \cl \bigcup_{Z \in R_s(X)} ((\EQt{Z}{t} + G_t(w)) \cap M_t -^. \beta_{t,s}(\Q,w))$, which is immediate by condition~\eqref{eq_supermtg-2}.

\medskip

``$\Rightarrow$'' We will now prove that the supermartingale property \eqref{SMP} implies~\eqref{eq_supermtg-1} and~\eqref{eq_supermtg-2}.  First note that \eqref{SMP} yields $\W_t^t \subseteq \W_t^s$:  Assume $\beta_s(\Q,w_t^s(\Q,w)) = \emptyset$, then $V_s^{(\Q,w_t^s(\Q,w))}(0) = \emptyset$ by definition of the Minkowski addition.  This implies $V_t^{(\Q,w)}(0) = \emptyset$ by the supermartingale relation.  However this can only occur if $\beta_t(\Q,w) = \emptyset$ since $R_t(0) \neq \emptyset$ by definition.
      
      We will now show that the supermartingale property \eqref{SMP} 
         implies time consistency, which then yields condition~\eqref{eq_supermtg-1}.
        Assume $R_s(X) \subseteq R_s(Y)$ for some $X,Y \in \LdpF{}$. We need to show that $R_t(X) \subseteq R_t(Y)$.
            Let $(\Q,w) \in \W_t^t$. By $\W_t^t \subseteq \W_t^s$, 
            it follows that $\beta_s(\Q,w_t^s(\Q,w)) \neq \emptyset$.  Also assume $R_t(X) \neq \emptyset$ (if it were empty then it would trivially follow that $R_t(X) \subseteq R_t(Y)$). It holds
            \begin{align}
               \nonumber &\cl\lrsquare{R_t(X) + \beta_t(\Q,w)} = V_t^{(\Q,w)}(X)
               \subseteq \EQt{V_s^{(\Q,w_t^s(\Q,w))}(X)}{t}\\
                \nonumber  &= \EQt{\cl\lrsquare{R_s(X) + \beta_s(\Q,w_t^s(\Q,w))}}{t}
               \subseteq \EQt{\cl\lrsquare{R_s(Y) + \beta_s(\Q,w_t^s(\Q,w))}}{t}\\
                \nonumber  &\subseteq \mathbb{E}^{\Q}\lsquare{\cl\lsquare{\lrparen{(\EQt{-Y}{s} + G_s(w_t^s(\Q,w))) \cap M_s -^. \beta_s(\Q,w_t^s(\Q,w))}}}\\
                \nonumber  &\quad\quad\quad \rsquare{\left. \rsquare{+ \beta_s(\Q,w_t^s(\Q,w))} \right| \Ft{t}}\\
                \nonumber&
                \subseteq \EQt{(\EQt{-Y}{s} + G_s(w_t^s(\Q,w))) \cap M_s}{t}
                = (\EQt{-Y}{t} + G_t(w)) \cap M_t.
            \end{align}
            The last equality follows from the tower property and \cite[Corollary~A.4]{FR12b}.
            The last inclusion follows from 
            $\cl[([A+G_{s}^M] -^. B) + B] \subseteq \cl[A + G_{s}^M]$,
            which holds by the definition of the Minkowski subtraction.
         Here, we have $A = (\EQt{-Y}{s} + G_s(w_t^s(\Q,w))) \cap M_s$, $B = \beta_s(\Q,w_t^s(\Q,w))$, and notice that $\cl[A + G_s^M] = A$,  where we used the notation $G_s^M= G_s^M(w_t^s(\Q,w))$. 

From the above we have
            \begin{equation}
            \label{proofl1}
          R_t(X) + \beta_t(\Q,w)\subseteq\cl[R_t(X) + \beta_t(\Q,w)] \subseteq (\EQt{-Y}{t} + G_t(w)) \cap M_t,
          \end{equation}
          which yields 
            \begin{align}
               \label{eqprooftc0} R_t(X) &\subseteq R_t(X) + G_t^M(w)
               \subseteq (R_t(X) + \beta_t(\Q,w)) -^. \beta_t(\Q,w)\\
               \label{eqprooftc} &\subseteq (\EQt{-Y}{t} + G_t(w)) \cap M_t -^. \beta_t(\Q,w)
            \end{align}
            for any $(\Q,w) \in \W_t$ (as it trivially holds for those $(\Q,w) \in \W_t$ for which $\beta_t(\Q,w) = \emptyset$ as well).
            The second inclusion in \eqref{eqprooftc0} follows from 
          $A + G_t^M(w)\subseteq (A + B + G_t^M(w)) -^. B$,
            which holds by the definition of the Minkowski subtraction.
            Here, $A = R_t(X)$, $B = \beta_t(\Q,w)$, and $B + G_t^M(w) = B$. 

           The inclusions \eqref{eqprooftc0},~\eqref{eqprooftc} yield \[R_t(X) \subseteq \bigcap_{(\Q,w) \in \W_t} \lrsquare{(\EQt{-Y}{t} + G_t(w)) \cap M_t -^. \beta_t(\Q,w)} = R_t(Y).\]
This is time consistency which implies~\eqref{eq_supermtg-1} as follows. Let $Y\in \bigcup_{Z \in R_s(X)} R_t(-Z)$, i.e.\ there exists a $\hat Z\in R_s(X)$ such that $Y\in R_t(-\hat Z)$. We need to show that $Y\in R_t(X)$. By translativity and normalization of the risk measure, it holds that \[R_s(-\hat Z)=R_s(0)+\hat Z\subseteq R_s(0)+R_s(X)=R_s(X).\] Time consistency now yields $R_t(-\hat Z)\subseteq R_t(X)$ and as $Y\in R_t(-\hat Z)$, it holds that $Y\in R_t(X)$ and thus~\eqref{eq_supermtg-1}.

        We will now prove that the supermartingale property \eqref{SMP} implies~\eqref{eq_supermtg-2}.  Let  $(\Q,w) \in \W_t^t$. By \eqref{eqprooftc0},
        
        \begin{align}
        \nonumber &R_t(X) \subseteq 
        (R_t(X) + \beta_t(\Q,w)) -^. \beta_t(\Q,w) \subseteq \cl[R_t(X) + \beta_t(\Q,w)] -^. \beta_t(\Q,w)\\
        \nonumber &= V_t^{(\Q,w)}(X) -^. \beta_t(\Q,w)
        \subseteq \EQt{V_s^{(\Q,w_t^s(\Q,w))}(X)}{t} -^. \beta_t(\Q,w)\\
        \nonumber &=\EQt{\cl[R_s(X) + \beta_s(\Q,w_t^s(\Q,w))]}{t} -^. \beta_t(\Q,w)\\
        \nonumber &= \EQt{\cl\lrparen{\bigcup_{Z \in R_s(X)} \lrsquare{R_s(-Z) + \beta_s(\Q,w_t^s(\Q,w))}}}{t} -^. \beta_t(\Q,w)\\
        \nonumber &\subseteq \cl\lrparen{\bigcup_{Z \in R_s(X)} \lrsquare{\EQt{R_s(-Z) + \beta_s(\Q,w_t^s(\Q,w))}{t} -^. \beta_t(\Q,w)}}\\
        \label{incll1} &\subseteq \cl\lrparen{\bigcup_{Z \in R_s(X)} \lrsquare{\EQt{(\EQt{Z}{s} + G_s(w_t^s(\Q,w)))\cap M_s}{t} -^. \beta_t(\Q,w)}}\\
        \nonumber &= \cl\lrparen{\bigcup_{Z \in R_s(X)} \lrsquare{\EQt{Z}{t} + G_t(w))\cap M_t -^. \beta_t(\Q,w)}}.
        \end{align}

        Inclusion~\eqref{incll1} holds as for any $\tilde s>s$, $R_{\tilde s}(-Z)\subseteq R_{\tilde s}(-Z)$, which yields by~\eqref{proofl1} (setting $X=Y=-Z$) that 
       \[R_s(-Z) + \beta_s(\Q,w_t^s(\Q,w)) \subseteq ( \EQt{Z}{s} + G_s(w_t^s(\Q,w))) \cap M_s.\]
       The last equality follows from the tower property and \cite[Corollary A.4]{FR12b}.
        
        From the chain of inclusions above, we therefore conclude that for all dual variables in $(\Q,w)\in \W_t$ (as it trivially holds also for those $(\Q,w)\in \W_t$ that are not in $\W_t^t$)
        \begin{equation}
        \label{equ_l1}
        R_t(X) \subseteq \bigcap_{(\Q,w) \in \W_t} \cl \bigcup_{Z \in R_s(X)} \lrsquare{\lrparen{\EQt{Z}{t} + G_t(w)} \cap M_t -^. \beta_t(\Q,w)}.
        \end{equation}

        To prove~\eqref{eq_supermtg-2}, it remains to show that we can replace $\beta_t(\Q,w)$ in \eqref{equ_l1} by the stepped version $\beta_{t,s}(\Q,w)$. 
        Trivially, the inequality
        \begin{equation}\label{ineq_supermtg-proof}
        \inf_{Y_{t,s} \in A_{t,s}} \E{\transp{w+m_{\perp}}\EQt{Y_{t,s} - Z}{t}} \geq \inf_{Y_t \in A_t} \E{\transp{w+m_{\perp}}\EQt{Y_t-Z}{t}}
        \end{equation}
        holds for all $(\Q,m_{\perp}) \in \W_t(w)$.  The reverse is not true in general; this is the difficult part of the proof.
        Let $Z \in M_s$.  And let $\rho_{t,s}(Z) := \inf_{u \in R_{t,s}(Z)} \E{\trans{w}u}$ for $w \in \plus{\recc{R_t(0)}}\backslash \prp{M_t}$, which is a proper, convex, lower semicontinuous function (see Proposition~\ref{prop_lsc} and Corollary~\ref{rem_proper}) with representation given in Corollary~\ref{cor_scalar_stepped}. Note that the first and last lines below follow from $\rho_t(Z) = -\infty$ for $w \not\in \plus{\recc{R_t(0)}}$ by Corollary~\ref{rem_proper}. 
        Also, note that $R_{t,s}(0) = R_t(0)$ by definition. The representation in the first and last lines follow from a standard separation argument.
        \begin{align*}
        R_{t,s}(Z) &= \bigcap_{w \in \plus{\recc{R_t(0)}} \backslash \prp{M_t}} \lrcurly{m \in M_t \; | \; \E{\trans{w}m} \geq \rho_{t,s}(Z)}\\
        &= \bigcap_{w \in \plus{\recc{R_t(0)}} \backslash \prp{M_t}} \lcurly{m \in M_t \; | \; \E{\trans{w}m} \geq }
        \\&\quad\quad\quad\quad\quad\quad\quad\quad
       \rcurly{ \sup_{(\Q,m_{\perp}) \in \W_{t,s}(w)} \inf_{Y_{t,s} \in A_{t,s}} \E{\transp{w + m_{\perp}}\EQt{Y_{t,s} - Z}{t}}}\\ 
        &\subseteq \bigcap_{w \in \plus{\recc{R_t(0)}} \backslash \prp{M_t}} \lcurly{m \in M_t \; | \; \E{\trans{w}m} \geq }
        \\&\quad\quad\quad\quad\quad\quad\quad\quad
        \rcurly{\sup_{(\Q,m_{\perp}) \in \W_t(w)} \inf_{Y_t \in A_t} \E{\transp{w + m_{\perp}}\EQt{Y_t - Z}{t}}}\\
        &= \bigcap_{w \in \plus{\recc{R_t(0)}} \backslash \prp{M_t}} \lrcurly{m \in M_t \; | \; \E{\trans{w}m} \geq \rho_t(Z)}\\
        &= R_t(Z) = R_{t,s}(Z).
        \end{align*}
        Thus, the above inclusion is actually an equality.
        This implies a weaker form of the reverse of inequality~\eqref{ineq_supermtg-proof} that will be enough to obtain the desired replacement of $\beta_t(\Q,w)$ in \eqref{equ_l1} by the stepped version $\beta_{t,s}(\Q,w)$:
        One obtains that for every $Z \in M_s$ and every $w \in \plus{\recc{R_t(0)}}\backslash \prp{M_t}$ it holds that for all $(\Q,m_{\perp}) \in \W_t(w)$ there exists an $(\R,n_{\perp}) \in \W_t(w)$ such that
        \[\inf_{Y_{t,s} \in A_{t,s}} \E{\transp{w+m_{\perp}}\EQt{Y_{t,s} - Z}{t}} \leq \inf_{Y_t \in A_t} \E{\transp{w+n_{\perp}}\ERt{Y_t-Z}{t}}.\]
        This is because every such constraint is ``active'', i.e., if any were made any stricter it would shrink the set $R_{t,s}(Z)$.
        In particular, this is true if $m_{\perp} = 0 \in \prp{M_t}$. 
        Additionally, for $w \not\in \plus{\recc{R_t(0)}} \backslash \prp{M_t}$, $\inf_{Y_{t,s} \in A_{t,s}} \E{\transp{w+m_{\perp}}\EQt{Y_{t,s}-Z}{t}} = -\infty$ 
        for any $(\Q,m_{\perp}) \in \W_t(w)$.
        Thus, we can conclude for every $Z \in M_s$ it holds that for all $ (\Q,w) \in \W_t$ there exists an $(\R,v) \in \W_t$ such that $v \in w + \prp{M_t}$ and
        \[\inf_{Y_{t,s} \in A_{t,s}} \E{\trans{w}\EQt{Y_{t,s} - Z}{t}} \leq \inf_{Y_t \in A_t} \E{\trans{v}\ERt{Y_t - Z}{t}}.\]
        In particular, for every $(\Q,w) \in \W_t$ there exists some $\R(\Q,w,Z) \in \mathcal{M}$ and $v(\Q,w,Z) \in w + \prp{M_t}$ so that $(\R(\Q,w,Z),v(\Q,w,Z)) \in \W_t$ and
         \begin{align*}
          &\lrparen{\EQt{-Z}{t} + G_t(w)}\cap M_t -^. \beta_{t,s}(\Q,w) \supseteq\\
          &\quad\quad\quad  \lrparen{\EPt{\R(\Q,w,Z)}{-Z}{t} + G_t(v(\Q,w,Z))} \cap M_t -^. \beta_t(\R(\Q,w,Z),v(\Q,w,Z)).
          \end{align*}
        Using \eqref{equ_l1}, this implies~\eqref{eq_supermtg-2}:
        \begin{align*}
        R_t(X) &\subseteq \bigcap_{(\Q,w) \in \W_t} \cl \bigcup_{Z \in R_s(X)} \lrsquare{\lrparen{\EQt{Z}{t} + G_t(w)} \cap M_t -^. \beta_t(\Q,w)}\\
        &\subseteq \bigcap_{(\Q,w) \in \W_t} \cl \bigcup_{Z \in R_s(X)} \lsquare{\lrparen{\EPt{\R(\Q,w,-Z)}{Z}{t} + G_t(v(\Q,w,-Z))} \cap M_t}
            \\&\quad\quad\quad\quad\quad\quad\quad\quad\quad\quad \rsquare{-^. \beta_t(\R(\Q,w,-Z),v(\Q,w,-Z)) }\\
            &\subseteq \bigcap_{(\Q,w) \in \W_t} \cl \bigcup_{Z \in R_s(X)} \lrsquare{\lrparen{\EQt{Z}{t} + G_t(w)} \cap M_t -^. \beta_{t,s}(\Q,w)}.
        \end{align*}
    \end{proof}

\subsection{Proof of Lemma~\ref{lemma2}}

\begin{proof}
    First, note that inclusion~\eqref{eq_supermtg-1} is equivalent to inclusion~\eqref{eq_supermtg-acceptance-1} by \cite[Lemma 3.6(iii)]{FR12}.
    Second, we will show that inclusion~\eqref{eq_supermtg-acceptance-2} is equivalent to 
        \begin{equation}\label{eq_l2}
        R_t(X) \subseteq \bigcap_{(\Q,w) \in \W_t} \bigcup_{Z \in R_s(X)} \lrsquare{\lrparen{\EQt{Z}{t} + G_t(w)} \cap M_t -^. \beta_{t,s}(\Q,w)}
        \end{equation}
        for all $X \in \LdpF{}$. Let $(\Q,w) \in \W_t$.
        To do this, we first show that \eqref{eq_l2} implies \eqref{eq_supermtg-acceptance-2}: Let $X \in A_t$.  By assumption,
            \begin{align*}
            0 &\in \bigcup_{Z \in R_s(X)} \lrsquare{\lrparen{\EQt{Z}{t} + G_t(w)} \cap M_t -^. \beta_{t,s}(\Q,w)}\\
            &= \lrcurly{m \in M_t \; | \; \exists Z \in R_s(X): \; \E{\trans{w}m} \geq \E{\trans{w}\EQt{Z}{t}} + \inf_{Y \in A_{t,s}} \E{\trans{w}\EQt{Y}{t}}}.
            \end{align*}
            This is true if and only if there exists a $Z \in R_s(X)$ such that \[\E{\trans{w}\EQt{-Z}{t}} \geq \inf_{Y \in A_{t,s}} \E{\trans{w}\EQt{Y}{t}},\]
            i.e., $-R_s(X) \cap \cl\lrparen{A_{t,s} + G_s^M(w_t^s(\Q,w))}\neq\emptyset$.  By \cite[Lemma 3.6(i)]{FR12} this is true if and only if $X \in A_s + \cl\lrparen{A_{t,s} + G_s^M(w_t^s(\Q,w))}$.  
        
        Now, we show that \eqref{eq_supermtg-acceptance-2} implies \eqref{eq_l2}: Let $m \in R_t(X)$. Then, $X + m \in A_s + \cl\lrparen{A_{t,s} + G_s^M(w_t^s(\Q,w))}$ by assumption.  By the equivalences shown above, and translativity, 
        \[m \in \bigcup_{Z \in R_s(X)} \lrsquare{\lrparen{\EQt{Z}{t} + G_t(w)} \cap M_t -^. \beta_{t,s}(\Q,w)}.\]
    The last step of the proof is to show that one can remove the closure in \eqref{eq_supermtg-2}, which reduces then to \eqref{eq_l2}.  This can be done using that $R_s$ is c.u.c.\ and convex.  For notation let $\hat{R}_{t,s}^{(\Q,w)}(Z) := \lrparen{\EQt{-Z}{t} + G_t(w)} \cap M_t -^. \beta_{t,s}(\Q,w)$, which is a convex risk measure for any $(\Q,w) \in \W_t$.  Then $\tilde R_t(X) := \bigcup_{Z \in R_s(X)} \hat{R}_{t,s}^{(\Q,w)}(-Z)$ is closed (for any choice $(\Q,w) \in \W_t$) if $\tilde A_t := \lrcurly{X \in \LdpF{} \;|\; 0 \in \tilde{R}_t(X)}$ is closed.  Using the proof of \cite[Lemma B.2]{FR12b}, $\tilde A_t = \tilde R_t^{-1}[M_{t,-}]$, which is closed if $\hat R_{t,s}^{(\Q,w),-1}[M_{t,-}]$ is a closed, convex, upper set (as $R_s$ is c.u.c.\ and convex). But this is true as 
        \begin{align*}
        \hat R_{t,s}^{(\Q,w),-1}[M_{t,-}] &= \lrcurly{Z \in M_s \; | \; 0 \geq \inf_{Y \in A_{t,s}} \E{\trans{w}\EQt{Y}{t}} + \E{\trans{w}\EQt{-Z}{t}}}\\
            &= \lrcurly{Z \in M_s \; | \; \E{\trans{w_t^s(\Q,w)}Z} \geq \inf_{Y \in A_{t,s}} \E{\trans{w_t^s(\Q,w)}Y}}\\
            &= \cl\lrparen{A_{t,s} + G_s^M(w_t^s(\Q,w))},
        \end{align*}
        which is a closed, convex, and upper set.
\end{proof}

\subsection{Proof of Corollary~\ref{cor_mtg}}
\label{sec_proof_cor_mtg}

\begin{proof}
Fix $X \in \LdpF{}$ and let $(\Q,w) \in \W_0$ such that $\beta_0(\Q,w) \neq \emptyset$ and
\[\cl\lrsquare{R_0(X) + G_0^M(w)} = \lrparen{\EQ{-X} + G_0(w)}\cap M_0 -^. \beta_0(\Q,w).\]
Define $m_t^X \in M_t$ such that
\[m_t^X + G_t^M(w_0^t(\Q,w)) = \lrparen{\EQt{X}{t} + G_t(w_0^t(\Q,w))} \cap M_t\]
and let $U_t := V_t^{(\Q,w_0^t(\Q,w))}(X) + m_t^X$ for any time $t$.  We claim that $U_t = G_t^M(w_0^t(\Q,w))$ for any time $t$, which we will then use to prove that $V_t^{(\Q,w_0^t(\Q,w))}(X)$ is a $\Q-$martingale.
Let us first show that $U_t \supseteq G_t^M(w_0^t(\Q,w))$:
    \begin{align}
    \nonumber &\inf_{u \in U_t} \E{\trans{w_0^t(\Q,w)} u} = \inf_{v \in V_t^{(\Q,w_0^t(\Q,w))}(X)} \E{\trans{w_0^t(\Q,w)} v} + \E{\trans{w_0^t(\Q,w)}m_t^X}\\
     \label{mtg-pf-supermtg}  &= \inf_{v \in V_t^{(\Q,w_0^t(\Q,w))}(X)} \trans{w}\EQ{v} + \trans{w}\EQ{X}
   \leq \inf_{v \in V_0^{(\Q,w)}} \trans{w}v + \trans{w}\EQ{X}\\
    \nonumber &= \inf_{Z \in R_0(X)} \trans{w}Z + \sup_{Y \in A_0} \trans{w}\EQ{-Y} + \trans{w}\EQ{X}\\
    \label{mtg-pf-wc} &= \trans{w}\EQ{-X} + \inf_{Y \in A_0} \trans{w}\EQ{Y} + \sup_{Y \in A_0} \trans{w}\EQ{-Y} + \trans{w}\EQ{X} = 0.
    \end{align}
    Inequality~\eqref{mtg-pf-supermtg} follows from the supermartingale property of Theorem~\ref{thm_supermtg} and \eqref{mtg-pf-wc} follows from the choice of $(\Q,w)$.
We now show that $U_t \subseteq G_t^M(w_0^t(\Q,w))$: 
    \begin{align}
    \nonumber &U_t = \cl\lrsquare{R_t(X) + \beta_t(\Q,w_0^t(\Q,w)) + m_t^X}\\
    \nonumber &= \cl\lrsquare{\lrparen{R_t(X) + G_t^M(w_0^t(\Q,w))} + \lrparen{m_t^X + \beta_t(\Q,w_0^t(\Q,w))}} \\
    \label{mtg-pf-dblsubtract} &= \cl\lrsquare{\lrparen{R_t(X) + G_t^M(w_0^t(\Q,w))} -^. \lrparen{G_t^M(w_0^t(\Q,w)) -^. \lrsquare{m_t^X + \beta_t(\Q,w_0^t(\Q,w))}}} \\
    \nonumber &= \cl\lrsquare{\lrparen{R_t(X) + G_t^M(w_0^t(\Q,w))} -^. \lrparen{\lrsquare{-m_t^X + G_t^M(w_0^t(\Q,w))} -^. \beta_t(\Q,w_0^t(\Q,w))}} \\
    \nonumber &= \cl\lsquare{\lrparen{R_t(X) + G_t^M(w_0^t(\Q,w))} -^. }
    \\
    \nonumber &\rsquare{\quad\quad\quad\quad\lrparen{\lrsquare{\EQt{-X}{t} + G_t(w_0^t(\Q,w))} \cap M_t -^. \beta_t(\Q,w_0^t(\Q,w))}} \\
    \label{mtg-pf-G} &\subseteq G_t^M(w_0^t(\Q,w)).
    \end{align}
    Equation~\eqref{mtg-pf-dblsubtract} follows from Proposition~\ref{prop_dblesubtract} and inclusion~\eqref{mtg-pf-G} follows from Proposition~\ref{prop_subtract-G}.
By definition of $U_t$, and given that $U_t = G_t^M(w_0^t(\Q,w))$, we obtain $V_t^{(\Q,w_0^t(\Q,w)}(X) = -m_t^X + G_t^M(w_0^t(\Q,w)) = \lrparen{\EQt{-X}{t} + G_t(w_0^t(\Q,w))} \cap M_t$ immediately implying by \cite[Corollary~A.4]{FR12b} that $\lrparen{V_t^{(\Q,w_0^t(\Q,w))}(X)}_{t = 0}^T$ is a $\Q-$martingale.  It remains to show that $(\Q,w_0^t(\Q,w))$ is a ``worst-case'' dual pair for any time $t$.  This follows from
\begin{align*}
\cl\lrsquare{R_t(X) + G_t^M(w_0^t(\Q,w))} &= V_t^{(\Q,w_0^t(\Q,w))}(X) -^. \beta_t(\Q,w_0^t(\Q,w))\\ 
&= \lrparen{\EQt{-X}{t} + G_t(w_0^t(\Q,w))} \cap M_t -^. \beta_t(\Q,w_0^t(\Q,w).
\end{align*}
The first equality follows from Proposition 2.4~(e1) and (e2) from~\cite{HS12} by noting $\beta_t(\Q,w_0^t(\Q,w)) \neq M_t$ and $\beta_t(\Q,w_0^t(\Q,w)) = \emptyset$ would imply $\beta_0(\Q,w) = \emptyset$ by multiportfolio time consistency (see Theorem~\ref{thm_penalty}), which would violate our assumption.

For the converse let $M = \R^d$, if $\lrparen{V_t^{(\Q,w_0^t(\Q,w)}(X)}_{t = 0}^T$ is a $\Q-$martingale for some $(\Q,w) \in \W_0$ with $\beta_0(\Q,w) \neq \emptyset$ then the process defined by $U_t := V_t^{(\Q,w_0^t(\Q,w)}(X) + \EQt{X}{t}$ is one as well.  In particular, at the terminal time $T$, $R_T(X) = R_T(0) - X$ and $U_T = \cl\lrsquare{R_T(0) + \beta_T(\Q,w_0^T(\Q,w))} = G_T(w_0^T(\Q,w))$ by $A_T = R_T(0)$ is a closed and conditionally convex cone (by closed, conditionally convex, and normalized).  Since $(U_t)_{t = 0}^T$ is a martingale this immediately implies $U_t = G_t(w_0^t(\Q,w))$ by \cite[Corollary~A.4]{FR12b}.  Therefore $\cl\lrsquare{R_t(X) + G_t(w_0^t(\Q,w))} = \lrparen{\EQt{-X}{t} + G_t(w_0^t(\Q,w))} -^. \beta_t(\Q,w_0^t(\Q,w))$ for any time $t$ (utilizing Proposition 2.4(e1) and (e2) from~\cite{HS12}).
\end{proof}

\begin{proposition}\label{prop_dblesubtract}
Let $A,B \in \mathcal{G}(M_t;M_{t,+})$ and $w \in \plus{M_{t,+}}\backslash\prp{M_t}$. Then, \[\cl\lrparen{A + B + G_t^M(w)} = \cl\lrparen{A + G_t^M(w)} -^. \lrparen{G_t^M(w) -^. B}.\]
\end{proposition}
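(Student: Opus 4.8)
The plan is to reduce the identity to a single scalar quantity, the ``$w$-directional value'' $\sigma(C) := \inf_{u \in C}\E{\trans{w}u} \in [-\infty,+\infty]$ (with $\sigma(\emptyset) := +\infty$). Just as in the separation arguments used in the proof of Theorem~\ref{thm_supermtg} and in Proposition~\ref{prop_essinf_rep}, every set occurring in the claim is a closed half-space of $M_t$ in direction $w$, completely determined by such a value, so the identity will follow from the additivity $\sigma(A+B) = \sigma(A)+\sigma(B)$ together with the two elementary computations carried out below.

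First I would establish the half-space representation: for an arbitrary $C \subseteq M_t$,
\[
\cl\lrparen{C + G_t^M(w)} = \lrcurly{m \in M_t \;|\; \E{\trans{w}m} \geq \sigma(C)}
\]
when $\sigma(C) \in \R$, with the right-hand side read as $M_t$ when $\sigma(C) = -\infty$ and as $\emptyset$ when $C = \emptyset$. Inclusion ``$\subseteq$'' is immediate from $\E{\trans{w}g} \geq 0$ on $G_t^M(w)$. For ``$\supseteq$'' I fix $e \in M_t$ with $\E{\trans{w}e} = 1$ (possible since $w \notin \prp{M_t}$), take $c_n \in C$ with $\E{\trans{w}c_n} \to \sigma(C)$, and for $m$ in the right-hand side write $m + \lambda_n e = c_n + \lrparen{m - c_n + \lambda_n e}$ with $\lambda_n := \plusp{\E{\trans{w}c_n} - \E{\trans{w}m}} \to 0$; the bracket lies in $G_t^M(w)$ and $m + \lambda_n e \to m$, so $m \in \cl\lrparen{C + G_t^M(w)}$. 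This computation also records $\sigma\lrparen{G_t^M(w)} = 0$ and, writing $H_c := \lrcurly{m \in M_t \;|\; \E{\trans{w}m} \geq c}$, that $\sigma(H_c) = c$ for $c \in \R$ (again using $w \notin \prp{M_t}$).

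Next I would record the behaviour of $-^.$ against a half-space: for any nonempty $D \subseteq M_t$, directly from the definition of the Minkowski subtraction, $H_c -^. D = H_{c-\sigma(D)}$ when $\sigma(D) \in \R$, $H_c -^. D = \emptyset$ when $\sigma(D) = -\infty$, while $H_c -^. \emptyset = M_t$, $\emptyset -^. D = \emptyset$, and $M_t -^. D = M_t$. In particular $G_t^M(w) -^. B = H_0 -^. B = H_{-\sigma(B)}$ when $B$ is nonempty with $\sigma(B) \in \R$. Assembling: by the first computation $\cl\lrparen{A + B + G_t^M(w)} = H_{\sigma(A)+\sigma(B)}$ and $\cl\lrparen{A + G_t^M(w)} = H_{\sigma(A)}$, and then
\[
\cl\lrparen{A + G_t^M(w)} -^. \lrparen{G_t^M(w) -^. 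B} = H_{\sigma(A)} -^. H_{-\sigma(B)} = H_{\sigma(A) - \lrparen{-\sigma(B)}} = H_{\sigma(A)+\sigma(B)},
\]
which is the claim; the cases $\sigma(A) = -\infty$ or $\sigma(B) = -\infty$ are checked directly from the two computations, each side collapsing to $M_t$.

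The main obstacle is purely the bookkeeping of the infinite and empty cases — in particular that $G_t^M(w) -^. B$ can be empty while $\cl\lrparen{A + G_t^M(w)}$ equals $M_t$ — and ruling out, exactly as in Proposition~\ref{prop_penalty1}, the genuinely degenerate situations $A = \emptyset$ with $\cl\lrparen{B + G_t^M(w)} = M_t$ and $B = \emptyset$ with $\cl\lrparen{A + G_t^M(w)} = M_t$; in the only place the proposition is applied (Corollary~\ref{cor_mtg}) both $A$ and $B$ are nonempty, so these do not occur.
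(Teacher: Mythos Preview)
Your proposal is correct and follows essentially the same approach as the paper: both arguments reduce the identity to the scalar quantity $\inf_{u\in C}\E{\trans{w}u}$ and exploit that every set in sight is the closed half-space $\{m\in M_t:\E{\trans{w}m}\geq c\}$ for the appropriate $c$. The paper carries out the same computation inline in a single chain of equalities without naming $\sigma$ or $H_c$, and likewise leaves the degenerate empty/$M_t$ cases implicit; your more explicit bookkeeping of those cases is harmless extra care.
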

\begin{proof}
\begin{align*}
&\cl\lrparen{A + G_t^M(w)} -^. \lrparen{G_t^M(w) -^. B} = \lrcurly{m \in M_t \; | \; m + \lrparen{G_t^M(w) -^. B} \subseteq \cl\lrparen{A + G_t^M(w)}}\\
&= \lrcurly{m \in M_t \; | \; m + \lrcurly{n \in M_t \; | \; n + B \subseteq G_t^M(w)} \subseteq \cl\lrparen{A + G_t^M(w)}}\\
&= \lcurly{m \in M_t \; | \; \E{\trans{w}m} +}
\\
&\quad\quad \rcurly{\inf\lrcurly{\E{\trans{w}n} \; | \; n \in M_t, \; \E{\trans{w}n} + \inf_{b \in B} \E{\trans{w}b} \geq 0} \geq \inf_{a \in A} \E{\trans{w}a}}\\
&= \lrcurly{m \in M_t \; | \; \E{\trans{w}m} - \inf_{b \in B} \E{\trans{w}{b}} \geq \inf_{a \in A} \E{\trans{w}a}}\\
&= \lrcurly{m \in M_t \; | \; \E{\trans{w}m} \geq \inf_{c \in A + B} \E{\trans{w}c}} = \cl\lrparen{A + B + G_t^M(w)}.
\end{align*}
\end{proof}

\begin{proposition}\label{prop_subtract-G}
Let $A,B \in \mathcal{G}(M_t;M_{t,+})$ and $w \in \plus{M_{t,+}}\backslash\prp{M_t}$.  If $\cl\lrparen{A + G_t^M(w)} \subseteq \cl\lrparen{B + G_t^M(w)}$ then $\cl\lrparen{A + G_t^M(w)} -^. \cl\lrparen{B + G_t^M(w)} \subseteq G_t^M(w)$.
\end{proposition}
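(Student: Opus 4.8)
The plan is to reduce the claim to the elementary fact that a closed $w$-half-space of $M_t$ is mapped into itself by a translation $m$ precisely when $m \in G_t^M(w)$.

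The first step is to use the monotonicity of the Minkowski subtraction in its first argument: if $C_1 \subseteq C_2 \subseteq M_t$ then $C_1 -^. E \subseteq C_2 -^. E$ for every $E \subseteq M_t$ (immediate from the definition, since $E + \{m\} \subseteq C_1$ implies $E + \{m\} \subseteq C_2$). Applying this with $C_1 = \cl\lrparen{A + G_t^M(w)}$ and $C_2 = E = \cl\lrparen{B + G_t^M(w)}$, the hypothesis $\cl\lrparen{A + G_t^M(w)} \subseteq \cl\lrparen{B + G_t^M(w)}$ gives
\[ \cl\lrparen{A + G_t^M(w)} -^. \cl\lrparen{B + G_t^M(w)} \subseteq \cl\lrparen{B + G_t^M(w)} -^. \cl\lrparen{B + G_t^M(w)}, \]
so it suffices to prove that $D -^. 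D \subseteq G_t^M(w)$, where $D := \cl\lrparen{B + G_t^M(w)}$.

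The second step is to record the half-space description of $D$. Writing $c_B := \inf_{b \in B} \E{\trans{w}b}$, a separation argument (the unconditional analogue of Proposition~\ref{prop_essinf_rep}, of the kind already used at the end of the proof of Proposition~\ref{prop_dblesubtract}; for $p = \infty$ one argues weak* closedness exactly as in Proposition~\ref{prop_essinf_rep}) shows that the only functionals that can strictly separate a point of $M_t$ from $D$ are nonnegative multiples of $w$ modulo $\prp{M_t}$, whence $D = \lrcurly{m \in M_t \; | \; \E{\trans{w}m} \geq c_B}$. Here $w \notin \prp{M_t}$ is essential: it makes $u \mapsto \E{\trans{w}u}$ surjective onto $\R$ on the subspace $M_t$, so that $D$ is a genuine nonempty proper half-space whenever $c_B$ is finite.

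The third step is to compute $D -^. D$ directly: $m \in D -^. D$ means $D + \{m\} \subseteq D$, i.e. $\E{\trans{w}(n+m)} \geq c_B$ for every $n \in M_t$ with $\E{\trans{w}n} \geq c_B$; taking the infimum over such $n$, which equals $c_B$ by surjectivity of $u \mapsto \E{\trans{w}u}$ on $M_t$, this is equivalent to $c_B + \E{\trans{w}m} \geq c_B$, i.e. $\E{\trans{w}m} \geq 0$, i.e. $m \in G_t^M(w)$. Hence $D -^. D = G_t^M(w)$, which together with the first step yields the assertion. I expect the one real difficulty to be making the half-space representation of $D$ rigorous uniformly in $p \in [1,\infty]$ — the same delicate point (which separating functionals occur, together with continuity of $u \mapsto \E{\trans{w}u}$ in the norm, resp. weak*, topology) that already arises in Propositions~\ref{prop_essinf_rep} and~\ref{prop_dblesubtract}. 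The degenerate cases in which $D$ is empty or all of $M_t$ (i.e. $c_B = \pm\infty$) do not occur where this proposition is applied, since there $D$ contains a translate of $\cl\lrparen{R_t(X) + G_t^M(w)}$ and is a proper half-space, and can be dismissed by inspection.
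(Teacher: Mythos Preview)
Your proof is correct and follows essentially the same route as the paper's: both rely on the half-space representation $\cl(C + G_t^M(w)) = \{m \in M_t : \E{\trans{w}m} \geq \inf_{c \in C} \E{\trans{w}c}\}$ to reduce the set inclusion to an elementary inequality between infima. The only organizational difference is that the paper applies the hypothesis at the very end (comparing the two thresholds $\inf_{a \in A}\E{\trans{w}a}$ and $\inf_{b \in B}\E{\trans{w}b}$ directly), whereas you invoke it first via monotonicity of $-^.$ in the first argument, thereby reducing to the self-subtraction $D -^. D$; your remark on the degenerate cases $c_B = \pm\infty$ is a point the paper's proof also leaves implicit.
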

\begin{proof}
\begin{align*}
\cl\lrparen{A + G_t^M(w)} &-^. \cl\lrparen{B + G_t^M(w)} = \lrcurly{m \in M_t \; | \; m + B \subseteq \cl\lrparen{A + G_t^M(w)}}\\
&= \lrcurly{m \in M_t \; | \; \E{\trans{w}m} + \inf_{b \in B} \E{\trans{w}b} \geq \inf_{a \in A} \E{\trans{w}a}}\\
&= \lrcurly{m \in M_t \; | \; \E{\trans{w}m} \geq \inf_{a \in A} \E{\trans{w}a} - \inf_{b \in B} \E{\trans{w}b}}\\
&\subseteq \lrcurly{m \in M_t \; | \; \E{\trans{w}m} \geq 0} = G_t^M(w).
\end{align*}
\end{proof}

\section{Proofs for Section~\ref{sec_supermtg-cc}}

\subsection{Proof of Corollary~\ref{cor_supermtg-cc}}

\begin{proof}
We will prove this by showing that the conditional supermartingale property is equivalent to the inclusions
    \begin{align}
        \label{eq_supermtg-1-cc} R_t(X) &\supseteq \bigcup_{Z \in R_s(X)} R_t(-Z)\\
        \label{eq_supermtg-2-cc} R_t(X) &\subseteq \bigcap_{(\Q,w) \in \W_t} \cl \bigcup_{Z \in R_s(X)} \lrsquare{\lrparen{\EQt{Z}{t} + \Gamma_t(w)} \cap M_t -^. \alpha_{t,s}(\Q,w)}.
    \end{align}
Then we will show that \eqref{eq_supermtg-1-cc} and \eqref{eq_supermtg-2-cc} are equivalent to multiportfolio time consistency.

The first part of this proof will be accomplished similarly to Lemma~\ref{lemma1}.  We will focus on certain points that are nontrivial to prove. We start with the proof that \eqref{eq_supermtg-1-cc} and \eqref{eq_supermtg-2-cc} imply the supermartingale property.
First, we can see that for $(\Q,w) \in \W_t^t$ (see Proposition~\ref{prop_cond_Wts}), Proposition~\ref{prop_essinf_rep} yields
\begin{align*}
\V_t^{(\Q,w)}(X) &= \lrcurly{u \in M_t\;|\; \trans{w}u \geq \essinf_{Z \in R_t(X)} \trans{w}Z + \essinf_{a_t \in \alpha_t(\Q,w)} -\trans{w}a_t}\\
&= \lrcurly{u \in M_t\;|\; \trans{w}u \geq \essinf_{Z \in R_t(X)} \trans{w}Z + \esssup_{Y \in A_t} \trans{w}\EQt{-Y}{t}}.
\end{align*}
Now we will show that
\begin{align*}
&\cl\EQt{\V_s^{(\Q,w_t^s(\Q,w))}(X)}{t} = 
\\
&\quad\quad\lrcurly{u \in M_t \; | \; \trans{w}u \geq \essinf_{Z \in R_s(X)} \trans{w}\EQt{Z}{t} + \esssup_{Y\in A_s} \trans{w}\EQt{-Y}{t}}.
\end{align*}

``$\subseteq$'' The right hand side is closed by the same logic as in Proposition~\ref{prop_essinf_rep}. Furthermore,
\begin{align*}
&\EQt{\V_s^{(\Q,w_t^s(\Q,w))}(X)}{t} = \lcurly{\EQt{u_s}{t} \; | \; u_s \in M_s, }\\
&\quad\quad \rcurly{ \trans{w_t^s(\Q,w)}u_s \geq \essinf_{Z \in R_s(X)} \trans{w_t^s(\Q,w)}Z + \esssup_{Y \in A_s} \trans{w_t^s(\Q,w)}\EQt{-Y}{s}}\\
&\subseteq \lcurly{\EQt{u_s}{t} \; | \; u_s \in M_s, \; \Et{\trans{w_t^s(\Q,w)}u_s}{t} \geq }\\
& \quad\quad \rcurly{
\Et{\essinf_{Z \in R_s(X)} \trans{w_t^s(\Q,w)}Z}{t} + \Et{\esssup_{Y \in A_s} \trans{w_t^s(\Q,w)}\EQt{-Y}{s}}{t}}\\
&= \lrcurly{u_t \in M_t \; | \; \trans{w}u_t \geq \essinf_{Z \in R_s(X)} \trans{w}\EQt{Z}{t} + \esssup_{Y \in A_s} \trans{w}\EQt{-Y}{t}}.
\end{align*}
Note that we are able to interchange the essential infimum/supremum and conditional expectation due to the decomposability property of $R_s(X)$ and $A_s$.

``$\supseteq$'' By way of contradiction, assume 
$m$ is an element of the right hand side
and $m \not\in \cl\EQt{\V_s^{(\Q,w_t^s(\Q,w))}(X)}{t}$.  Since $\cl\EQt{\V_s^{(\Q,w_t^s(\Q,w))}(X)}{t}$ is closed and convex, we can separate $\{m\}$ from it by some $v \in \LdqF{t}$.  That is
\begin{align*}
\E{\trans{v}m} &< \inf_{u_t \in \cl\EQt{\V_s^{(\Q,w_t^s(\Q,w))}(X)}{t}} \E{\trans{v}u} = \inf_{u_s \in \V_s^{(\Q,w_t^s(\Q,w))}(X)} \E{\trans{v}\EQt{u_s}{t}}\\
&= \E{\essinf_{u_s \in \V_s^{(\Q,w_t^s(\Q,w))}(X)} \trans{w_t^s(\Q,v)}u_s},
\end{align*}
where in the last equality above we can interchange the expectation and infimum since $\V_s^{(\Q,w_t^s(\Q,w))}(X)$ is decomposable.  By construction
\[
\essinf_{u_s \in \V_s^{(\Q,w_t^s(\Q,w))}(X)}\trans{w_t^s(\Q,v)}u_s 
= 
\begin{cases}
\begin{array}{l}
\essinf_{Z \in R_s(X)}\trans{w_t^s(\Q,v)}Z 
\\ 
\quad+ \esssup_{Y \in A_s}\trans{w_t^s(\Q,v)}\EQt{-Y}{s} \end{array}
& \text{on } D
\\ 
-\infty 
&\text{on } D^c,\end{cases}\]
where $D = \lrcurly{\omega \in \Omega \; | \; G_0^M(w_t^s(\Q,v)[\omega]) = G_0^M(w_t^s(\Q,w)[\omega])}$.  Since $\Q \sim \P$, we can conclude that $v(\omega) = \lambda(\omega)w(\omega)$ for some $\lambda \in \LpK{0}{t}{\R_{++}}$ (such that $\lambda w \in \LdqF{t}$).  Therefore $\E{\essinf_{u_s \in \V_s^{(\Q,w_t^s(\Q,w))}(X)} \trans{w_t^s(\Q,v)}u_s} > -\infty$ if and only if
\begin{align*}
&\E{\essinf_{u_s \in \V_s^{(\Q,w_t^s(\Q,w))}(X)} \trans{w_t^s(\Q,v)}u_s} = \E{\lambda \essinf_{u_s \in \V_s^{(\Q,w_t^s(\Q,w))}(X)} \trans{w_t^s(\Q,w)}u_s}\\
 =\;&\E{\lambda \lrparen{\essinf_{Z \in R_s(X)} \trans{w_t^s(\Q,w)}Z + \esssup_{Y \in A_s} \trans{w_t^s(\Q,w)}\EQt{-Y}{s}}}\\
=\;&\E{\lambda \lrparen{\essinf_{Z \in R_s(X)} \trans{w}\EQt{Z}{t} + \esssup_{Y\in A_s} \trans{w}\EQt{-Y}{t}}}.
\end{align*}
But this implies \[\E{\lambda \trans{w}m} < \E{\lambda \lrparen{\essinf_{Z \in R_s(X)} \trans{w}\EQt{Z}{t} + \esssup_{Y \in A_s} \trans{w}\EQt{-Y}{t}}},\] which is a contradiction to 
\[m \in \lrcurly{u \in M_t \; | \; \trans{w}u \geq \essinf_{Z \in R_s(X)} \trans{w}\EQt{Z}{t} + \esssup_{Y \in A_s} \trans{w}\EQt{-Y}{t}}.\]
The remaining part of the proof that inclusions~\eqref{eq_supermtg-1-cc} and~\eqref{eq_supermtg-2-cc} imply the conditional supermartingale property is similar to the proof of Lemma~\ref{lemma1}.

For the reverse implication, let us now assume the conditional supermartingale property.  We will prove inclusion \eqref{eq_supermtg-1-cc} by showing that the conditional supermartingale property implies time consistency, which then yields~\eqref{eq_supermtg-1-cc}.  This proof follows in total analogy to the corresponding proof in Lemma~\ref{lemma1} since
\[R_t(X) = \bigcap_{(\Q,w) \in \W_t^t} \lrsquare{\lrparen{\EQt{-X}{t} + \Gamma_t(w)}\cap M_t -^. \alpha_t(\Q,w)}.\]
Then, one shows that inclusion \eqref{eq_supermtg-2-cc} holds, which follows from the same logic as Lemma~\ref{lemma1} but using the scalarization results from Section~\ref{sec_condscal} instead. Let us give a short summary of the steps involved.

        \cite[Lemma 3.18]{FR13-survey} provides a representation for set-valued dynamic risk measure as an intersection of conditional scalarizations, where one can restrict to $w \in \plus{\recc{R_t(0)}}$ 
        \begin{align*}
        R_{t,s}(Z) &= \bigcap_{w \in \plus{\recc{R_t(0)}} \backslash \prp{M_t}} \lrcurly{m \in M_t \; | \; \trans{w}m \geq \hat\rho_{t,s}(Z)}\\
        &\subseteq 
        \bigcap_{w \in \plus{\recc{R_t(0)}} \backslash \prp{M_t}} \lrcurly{m \in M_t \; | \; \trans{w}m \geq \hat\rho_t(Z)}
        = R_t(Z) = R_{t,s}(Z).
        \end{align*}
        Now using the dual representations of the conditional scalarizations $\hat\rho_t(Z)$ and $\hat\rho_{t,s}(Z)$ from Proposition~\ref{prop_scalar-cc} and Corollary~\ref{cor_scalar_stepped-cc}, we obtain the following since the inclusion above is in fact an equality.
        For every $Z \in M_s$ and every $w \in \plus{\recc{R_t(0)}}\backslash \prp{M_t}$ it holds that for all $(\Q,m_{\perp}) \in \W_t(w)$ 
        \[\essinf_{Y_{t,s} \in A_{t,s}} \transp{w+m_{\perp}}\EQt{Y_{t,s} - Z}{t} \leq \esssup_{(\R,n_{\perp}) \in \W_t(w)} \essinf_{Y_t \in A_t} \transp{w+n_{\perp}}\ERt{Y_t-Z}{t}.\]
        This is because every such constraint is ``active'' in the above intersection, i.e., if any were made any stricter it would shrink the set $R_{t,s}(Z)$.
        Because of decomposability of 
        \[\lrcurly{\essinf_{Y \in A_t} \transp{w+m_{\perp}}\EQt{Y - X}{t} \; | \; (\Q,m_{\perp}) \in \W_t(w)},\]
         there exists a monotonically increasing sequence that converges to the essential supremum almost surely.  
        Now consider the above inequality for the case $m_{\perp} = 0 \in \prp{M_t}$ and also note that for $w \not\in \plus{\recc{R_t(0)}} \backslash \prp{M_t}$, $\essinf_{Y_{t,s} \in A_{t,s}} \transp{w+m_{\perp}}\EQt{Y_{t,s}-Z}{t} = -\infty$ 
        for any $(\Q,m_{\perp}) \in \W_t(w)$.
        Then, we can conclude for every $Z \in M_s$ it holds that for all $ (\Q,w) \in \W_t$ there exists a sequence $(\R_k,v_k)_{k \in \mathbb{N}} \subseteq \W_t$ such that $v_k \in w + \prp{M_t}$ for every $k$ and
        \[\essinf_{Y_{t,s} \in A_{t,s}} \trans{w}\EQt{Y_{t,s} - Z}{t} \leq \overline\lim_{k \to \infty} \essinf_{Y_t \in A_t} \trans{v_k}\EPt{\R_k}{Y_t - Z}{t},\]
        where $\overline\lim$ indicates the almost sure limit.
        In particular, for every $(\Q,w) \in \W_t$ there exists some sequence $\R_k(\Q,w,Z) \in \mathcal{M}$ and $v_k(\Q,w,Z) \in w + \prp{M_t}$ such that $(\R_k(\Q,w,Z),v_k(\Q,w,Z))_{k \in \mathbb{N}} \subseteq \W_t$ and
         \begin{align*}
          &\lrparen{\EQt{-Z}{t} + \Gamma_t(w)}\cap M_t -^. \alpha_{t,s}(\Q,w) \supseteq\\
          & \overline\cl\bigcup_{k \in \mathbb{N}} \lrsquare{\lrparen{\EPt{\R_k(\Q,w,Z)}{-Z}{t} + \Gamma_t(v_k(\Q,w,Z))} \cap M_t -^. \alpha_t(\R_k(\Q,w,Z),v_k(\Q,w,Z))},
          \end{align*}
        where $\overline\cl$ indicates the almost sure closure.
        This implies the desired inclusion \eqref{eq_supermtg-2-cc}:
        \begin{align*}
        R_t(X) &\subseteq \bigcap_{(\Q,w) \in \W_t} \cl \bigcup_{Z \in R_s(X)} \lrsquare{\lrparen{\EQt{Z}{t} + \Gamma_t(w)} \cap M_t -^. \alpha_t(\Q,w)}\\
        &\subseteq \bigcap_{(\Q,w) \in \W_t} \cl \bigcup_{Z \in R_s(X)} \overline\cl\bigcup_{k \in \mathbb{N}} \lsquare{\lrparen{\EPt{\R_k(\Q,w,-Z)}{Z}{t} + \Gamma_t(v_k(\Q,w,-Z))} \cap M_t}
            \\&\quad\quad\quad\quad\quad\quad\quad\quad\quad\quad \rsquare{-^. \alpha_t(\R_k(\Q,w,-Z),v_k(\Q,w,-Z)) }\\
            &\subseteq \bigcap_{(\Q,w) \in \W_t} \cl \bigcup_{Z \in R_s(X)} \lrsquare{\lrparen{\EQt{Z}{t} + \Gamma_t(w)} \cap M_t -^. \alpha_{t,s}(\Q,w)}.
        \end{align*}

The last part of the proof is to show that \eqref{eq_supermtg-1-cc} and \eqref{eq_supermtg-2-cc} are equivalent to multiportfolio time consistency.
It is trivially true that multiportfolio time consistency implies the inclusions \eqref{eq_supermtg-1-cc} and \eqref{eq_supermtg-2-cc} by the recursive formulation in Theorem~\ref{thm_equiv_tc}, using for \eqref{eq_supermtg-2-cc} that the union of intersections is contained in intersection of unions. 
For the converse implication we can use the same logic as in the proof of Lemma~\ref{lemma2} to show that \eqref{eq_supermtg-1-cc} and \eqref{eq_supermtg-2-cc} are equivalent to
\begin{align}
    \label{a}A_t &\supseteq A_s + A_{t,s}\\
    \label{b}A_t &\subseteq \bigcap_{(\Q,w) \in \W_t} \lrsquare{A_s + \cl\lrparen{A_{t,s} + \Gamma_s^M(w_t^s(\Q,w))}}
\end{align}
when $\seq{R}$ is conditionally c.u.c.
By $\Gamma_s^M(w_t^s(\Q,w)) \subseteq G_s^M(w_t^s(\Q,w))$,  \eqref{a} and \eqref{b} imply \eqref{eq_supermtg-acceptance-1} and \eqref{eq_supermtg-acceptance-2}, which imply multiportfolio time consistency by Theorem~\ref{thm_supermtg} and Lemmas~\ref{lemma1} and \ref{lemma2}.
\end{proof}

\subsection{Proof of Corollary~\ref{cor_mtg-cc}}

\begin{proof}
Most aspects of the proof are similar to the proof of Corollary~\ref{cor_mtg}, using a straight forward extension of Proposition 2.4~(e1) and (e2) from~\cite{HS12} for conditionally convex and decomposable sets and the trivial conditional versions of Proposition~\ref{prop_dblesubtract} and~\ref{prop_subtract-G} (given in Proposition~\ref{prop_gamma_subtract}). 
We will here only show $\U_t =\Gamma_t^M(w_0^t(\Q,w))$ as the proof differs slightly from the proof of Corollary~\ref{cor_mtg}. Define $m_t^X \in M_t$ such that
\[m_t^X + \Gamma_t^M(w_0^t(\Q,w)) = \lrparen{\EQt{X}{t} + \Gamma_t(w_0^t(\Q,w))} \cap M_t\]
and define $\U_t := \V_t^{(\Q,w_0^t(\Q,w))}(X) + m_t^X$ for any time $t$. Since $\Gamma_0^M = G_0^M$ by definition, one obtains $\U_0 = \Gamma_0^M(w)$ from the proof of Corollary~\ref{cor_mtg}.  Similar to the proof of Corollary~\ref{cor_mtg}, we can show that $\U_t \subseteq \Gamma_t^M(w_0^t(\Q,w))$ for any time $t$.  For the reverse, we use the fact that $\U_t$ defines a supermartingale, which follows from the properties of $\V_t^{(\Q,w_0^t(\Q,w))}(X)$ and $m_t^X$. Thus, $\U_t \subseteq \cl\Et{\U_s}{t}$. Therefore, for any time $t$ we obtain
\[\Gamma_0^M(w) = \U_0 \subseteq \cl\EQ{\U_t} \subseteq \cl\EQ{\Gamma_t^M(w_0^t(\Q,w))} = \Gamma_0^M(w)\]
by \cite[Corollary~A.6]{FR12b}.  Let us assume $\U_t \subsetneq \Gamma_t^M(w_0^t(\Q,w))$, i.e.\ there exists some $\delta > 0$ such that $\P(\essinf_{u \in \U_t} \trans{w_0^t(\Q,w)}u \geq \delta) > 0$.  However, this contradicts $\cl\EQ{\U_t} = \Gamma_0^M(w)$ since
\[0 = \inf_{u_0 \in \cl\EQ{\U_t}} \trans{w}u = \inf_{u_t \in \U_t} \E{\trans{w_0^t(\Q,w)}u_t} \geq \delta \P(\essinf_{u_t \in \U_t} \trans{w_0^t(\Q,w)}u_t) > 0.\]
\end{proof}

\begin{proposition}\label{prop_gamma_subtract}
Let $A,B \in \mathcal{G}(M_t;M_{t,+})$ conditionally convex and $w \in \plus{M_{t,+}}\backslash\prp{M_t}$.  Then
\[\cl\lrparen{A + B + \Gamma_t^M(w)} = \cl\lrparen{A + \Gamma_t^M(w)} -^. \lrparen{\Gamma_t^M(w) -^. B}.\]
Additionally, if $\cl\lrparen{A + \Gamma_t^M(w)} \subseteq \cl\lrparen{B + \Gamma_t^M(w)}$ then \[\cl\lrparen{A + \Gamma_t^M(w)} -^. \cl\lrparen{B + \Gamma_t^M(w)} \subseteq \Gamma_t^M(w).\]
\end{proposition}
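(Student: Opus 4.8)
The plan is to mirror the proofs of Propositions~\ref{prop_dblesubtract} and~\ref{prop_subtract-G}, but working throughout with the conditional scalarization $m \mapsto \trans{w}m$ (a random variable) and essential infima in place of the unconditional linear functional $m \mapsto \E{\trans{w}m}$ and ordinary infima. The key technical input is Proposition~\ref{prop_essinf_rep}, which gives, for any conditionally convex and decomposable $C \subseteq M_t$, the identity $\cl\lrparen{C + \Gamma_t^M(w)} = \lrcurly{m \in M_t \; | \; \trans{w}m \geq \essinf_{c \in C}\trans{w}c}$; since $A$, $B$, $A+B$, and the relevant Minkowski differences are all conditionally convex and decomposable, this lets me translate every set appearing in the statement into an inequality between $\trans{w}m$ and an essential infimum.

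For the first identity, I would unfold the right-hand side directly from the definition of $-^.$: an element $m \in M_t$ lies in $\cl\lrparen{A + \Gamma_t^M(w)} -^. \lrparen{\Gamma_t^M(w) -^. B}$ iff for every $n \in \Gamma_t^M(w) -^. B$ one has $m + n \in \cl\lrparen{A + \Gamma_t^M(w)}$, i.e.\ $\trans{w}(m+n) \geq \essinf_{a \in A}\trans{w}a$ a.s. Using Proposition~\ref{prop_essinf_rep} again, $\Gamma_t^M(w) -^. B = \lrcurly{n \in M_t \; | \; \trans{w}n \geq \esssup_{b \in B}(-\trans{w}b)} = \lrcurly{n \in M_t \; | \; \trans{w}n + \essinf_{b\in B}\trans{w}b \geq 0}$, and by decomposability of $B$ (and of $M_t$) the essential infimum of $\trans{w}n$ over this set, pointwise, equals $-\essinf_{b \in B}\trans{w}b$, attained by a suitable $n \in M_t$ since $w \notin \prp{M_t}$. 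Substituting gives $\trans{w}m - \essinf_{b\in B}\trans{w}b \geq \essinf_{a\in A}\trans{w}a$, equivalently $\trans{w}m \geq \essinf_{c \in A+B}\trans{w}c$ (using that the essential infimum over $A+B$ splits as a sum of essential infima, again by decomposability of $A$ and $B$), which by Proposition~\ref{prop_essinf_rep} is exactly membership in $\cl\lrparen{A + B + \Gamma_t^M(w)}$. The second claim is then essentially immediate: expanding $\cl\lrparen{A + \Gamma_t^M(w)} -^. \cl\lrparen{B + \Gamma_t^M(w)}$ via Proposition~\ref{prop_essinf_rep} yields $\lrcurly{m \in M_t \; | \; \trans{w}m + \essinf_{b\in B}\trans{w}b \geq \essinf_{a\in A}\trans{w}a}$, and the hypothesis $\cl\lrparen{A + \Gamma_t^M(w)} \subseteq \cl\lrparen{B + \Gamma_t^M(w)}$ translates (by Proposition~\ref{prop_essinf_rep}) into $\essinf_{a\in A}\trans{w}a \geq \essinf_{b\in B}\trans{w}b$ a.s., so $\trans{w}m \geq \essinf_{a\in A}\trans{w}a - \essinf_{b\in B}\trans{w}b \geq 0$, i.e.\ $m \in \Gamma_t^M(w)$.

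The main obstacle I anticipate is the careful handling of the cases where the essential infima $\essinf_{a\in A}\trans{w}a$ or $\essinf_{b\in B}\trans{w}b$ fail to be finite (or, on an $\Ft{t}$-measurable event, equal $\pm\infty$); there one must check, exactly as in the final paragraph of the proof of Proposition~\ref{prop_penalty1} and its conditional analogue Proposition~\ref{prop_cond_penalty1}, that the degenerate sets ($\Gamma_t^M(w) -^. B = M_t$ on part of $\Omega$, or empty Minkowski differences) still make both sides of the identity agree; the stated nondegeneracy-type hypotheses in the companion propositions are what one would invoke, but since here $A, B \in \mathcal{G}(M_t;M_{t,+})$ are arbitrary, one simply checks the empty/whole-space cases are consistent on both sides. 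A secondary routine point is justifying the interchange of essential infimum and integration / the existence of the attaining $n \in M_t$, which follows from decomposability together with $w \notin \prp{M_t}$ and continuity of the linear map on the closed subspace $M_t$, precisely as used repeatedly in Proposition~\ref{prop_penalty1} and Proposition~\ref{prop_essinf_rep}.
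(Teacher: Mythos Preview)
Your proposal is correct and matches the paper's intended approach: the paper does not give an explicit proof of Proposition~\ref{prop_gamma_subtract}, referring to it only as the ``trivial conditional versions of Proposition~\ref{prop_dblesubtract} and~\ref{prop_subtract-G}'', and your plan of replacing $\E{\trans{w}\cdot}$ and ordinary infima by $\trans{w}\cdot$ and essential infima, with Proposition~\ref{prop_essinf_rep} supplying the needed half-space representation, is precisely how that trivial extension goes through. Your identification of the degenerate cases (essential infima equal to $\pm\infty$ on $\Ft{t}$-measurable events) as the only point requiring care is also accurate and parallels the handling in Proposition~\ref{prop_cond_penalty1}; note that conditional convexity of $A$ and $B$ already gives decomposability (take $\lambda = 1_D$), so the hypotheses of Proposition~\ref{prop_essinf_rep} are indeed met.
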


\bibliographystyle{plain}
\bibliography{biblio}
\end{document}